\newcommand{\be}{\begin{equation}}
\newcommand{\ee}{\end{equation}}
\newcommand{\bea}{\begin{eqnarray}}
\newcommand{\eea}{\end{eqnarray}}
\newcommand{\dst}{\displaystyle}
\newcommand{\fr}[2]{\frac{{\dst #1}}{{\dst #2}}}
\newcommand{\f}{\phi}
\newcommand{\fd}{\phi^\dagger}
\renewcommand{\Re}{\mathrm{Re }}
\renewcommand{\Im}{\mathrm{Im }}
\newcommand{\doublet}[2]{ \left( \begin{array}{c}#1 \\ #2 \end{array}\right) }
\newcommand{\lr}[1]{ \langle #1 \rangle}
\newcommand{\Tr}{\mathrm{Tr}}
\newcommand{\Z}{\mathbb{Z}}
\newcommand{\stolbik}[2]{ \left( \begin{array}{c}#1 \\ #2 \end{array}\right) }
\newtheorem{theorem}{Theorem}
\newtheorem{proposition}[theorem]{Proposition}
\newtheorem{conjecture}[theorem]{Conjecture}
\def\lsim{\mathrel{\rlap{\lower4pt\hbox{\hskip1pt$\sim$}}
    \raise1pt\hbox{$<$}}}         
\def\gsim{\mathrel{\rlap{\lower4pt\hbox{\hskip1pt$\sim$}}
    \raise1pt\hbox{$>$}}}         
\begin{document}
\thispagestyle{empty}

\begin{center}

{Universit\'{e} de Li\`{e}ge \\
Interactions Fondamentales en Physique et Astrophysique \\
Facult\'{e} des Sciences}

\begin{figure} [ht]
\centering
\includegraphics[height=3cm]{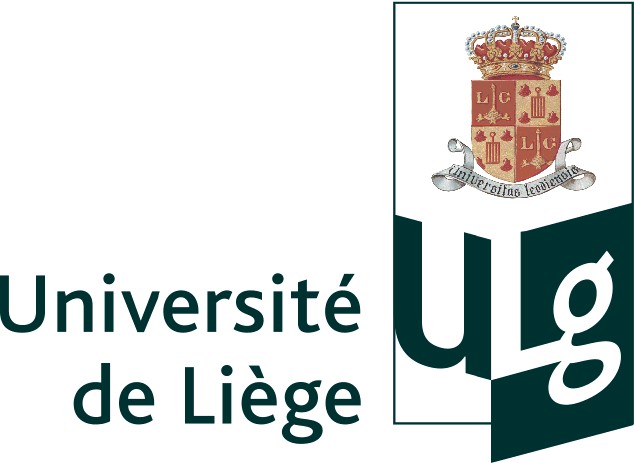}
\end{figure}

\vspace{2cm} 
{\Large \bf Symmetries of the Scalar Sector of Multi-Higgs-Doublet Models}\\

\vspace{3cm} {\large  Thesis presented in fulfilment of the requirements to \\
Obtain a PhD degree in Physics}

\vspace{3cm}
{presented by}

\vspace{1cm}
\large {\bf Venus Ebrahimi-Keus} \\  \normalsize 

\vspace{1cm}
September 2012

\end{center}

\clearpage{\pagestyle{empty}\cleardoublepage}


\thispagestyle{empty}
\begin{center}

{Universit\'{e} de Li\`{e}ge \\
Interactions Fondamentales en Physique et Astrophysique \\
Facult\'{e} des Sciences}

\begin{figure} [ht]
\centering
\includegraphics[height=3cm]{Logo.jpg}
\end{figure}

\vspace{2cm} 
{\Large \bf Symmetries of the Scalar Sector of Multi-Higgs-Doublet Models}\\

\vspace{2cm} {Thesis presented in fulfilment of the requirements to \\
Obtain a PhD degree in Physics}

\vspace{1cm}
{presented by}

\vspace{1cm}
\large {\bf Venus Ebrahimi-Keus} \\  \normalsize

\vspace{1cm}
Members of the jury: \\
1. Pierre Magain (Pr\'{e}sident) \\
2. Igor Ivanov (Promoteur)\\
3. Jean-Ren\'{e} Cudell\\
4. Diego Aristizabal\\
5. Maria Krawczyk \\
6. Pedro Ferreira \\

\vspace{1cm}
\noindent Day of defence: 03 September 2012 
\end{center}

\clearpage{\pagestyle{empty}\cleardoublepage}


\thispagestyle{empty}
\begin{center}
~\\
\vspace{7.5cm}
{\large to whom I owe the person I am today \\
\vspace{1cm}
my tireless parents }
\end{center}

\clearpage{\pagestyle{empty}\cleardoublepage}

\thispagestyle{empty}
\begin{center}
{\large\bf Abstract}
\end{center}

In pursuit of the origin of the masses of particles in the Standard Model, the Brout-Englert-Higgs Mechanism predicts the existence of a scalar boson, whose discovery has been the main goal of the Large Hadron Collider. However, introducing this scalar boson does not provide the answer to all inadequacies of the Standard Model. Many different Beyond Standard Model theories have been proposed in order to explain these anomalies. Multi-Higgs-doublet models are among these models. 

We restrict ourselves to the scalar sector of multi-Higgs-doublet-models. The huge number of free parameters in the scalar potential in these models, makes it impossible to study the most general case for any N. In order to reduce the number of free parameters, one could impose symmetries on the scalar potential. Therefore, it is important to explore which symmetries can be implemented in the scalar sector, how these symmetries are broken, how they could be encoded in the Yukawa sector and what the resulting properties of the fermions are. 

Classifying these symmetries in the scalar sector is the main focus of this thesis. We have found certain symmetries that are always broken in models with more than two doublets, which we name "frustrated symmetries". Examples of such symmetries are presented in 3HDM, and one particular symmetry, the octahedral symmetry, is studied further. This symmetry seems to be the largest realizable discrete symmetry that can be imposed on the scalar sector in 3HDM, and interestingly results in a 2HDM-like mass spectrum. 

In the attempt towards the classification of possible symmetries in the scalar sector of the NHDM, we find that these symmetry groups are either subgroups of the maximal torus, or certain finite Abelian groups which are not subgroups of maximal tori. For the subgroups of the maximal torus, we present an algorithmic strategy that gives the full list of possible realizable Abelian symmetries for any given $N$. We extend this strategy to include Abelian antiunitary symmetries (with generalized $CP$ transformations) in NHDM. 

We also show that multi-Higgs-doublet models can naturally accommodate scalar dark matter candidates protected by the group $\Z_p$, since these groups are realizable in NHDM. These models do not require any significant fine-tuning and can lead to a variety of forms of microscopic dynamics among the dark matter candidates.

The results of this thesis are published in \cite{frustrated-paper,Abelian2012,Zp-scalar-dm}.

\clearpage{\pagestyle{empty}\cleardoublepage}

\thispagestyle{empty}

\begin{center}
{\large\bf Acknowledgements}
\end{center}

First of all I would like to thank the members of the jury for taking their time to read this thesis. Secondly, I would like to express my appreciation to my supervisor, Igor who was abundantly helpful throughout this project and without whose knowledge and assistance this study would not have been possible.

Special thanks go to Jean-Ren\'{e} for his guidance and motivational chats throughout this journey, and to the members of IFPA group, especially Diego, Audrey, Athina and Alex for their invaluable assistance and moral support.

I would also like to convey thanks to the Fonds de la Recherche Scientifique (FNRS) for providing the financial means for this project.

My deepest gratitude goes to my family for their love and support. I am thankful to my brother for the encouraging conversations. I am indebted to my parents for always believing in me, and teaching me the true meaning of "where there is a will, there is a way".

Saving the best for last, I would like to thank my wonderful Martyn. Baby, I would not have been able to get here without you. You cannot imagine how grateful I am for your support in every step of the way. Thank you for being my rock through all the ups and downs of my PhD.

\clearpage{\pagestyle{empty}\cleardoublepage}


\tableofcontents


\chapter{Introduction}\label{chapter1}

This is the most exciting time to be a physicist; With the recent and upcoming results of the Large Hadron Collider (LHC) physicist are gaining better knowledge on the properties of the basic constituents of matter, and the fundamental interactions in nature. The current best-suited-with-experiment framework describing high energy phenomena is the Standard Model (SM) of electroweak and strong interactions. 

The first blocks towards the construction of the Standard Model were put down by Glashow in 1960, where he combined the $U(1)$-symmetric theory of electromagnetic interactions and the $SU(2)$-symmetric theory of weak interactions into an $SU(2) \times U(1)$ symmetric electroweak theory \cite{Glashow}. Later on, in 1967 Weinberg \cite{Weinberg} and Salam \cite{Salam} incorporated the Brout-Englert-Higgs mechanism \cite{Englert}\cite{Higgs}\cite{Guralnik} into Glashow's electroweak theory, and gave the model its modern form. And finally, in 1970's the $SU(3)$-symmetric strong interactions were described by quantum chromodynamics (QCD), shaping up the Standard Model. The symmetry group of the Standard Model is denoted by $SU(3)_C \times SU(2)_L \times U(1)_Y$.

Despite its great success in explaining all available data, the Standard Model has several inadequacies. It requires the existence of a massive scalar particle, the Brout-Englert-Higgs boson, whose discovery has been the main challenge of the LHC. Moreover, it does not give an explanation for the fermion mass spectrum, in particular the small neutrino masses (and their oscillations). The theory does not contain any viable dark matter particle that possesses all of the required properties deduced from observational cosmology. Also, this model does not include gravitation\footnote{This is partly because at the level of elementary particles the gravitational force is many orders of magnitude weaker and can be ignored. The quantum effects of gravity become important in describing particle interactions at the Planck scale. This refers to either a very large energy scale ($1.22 \times 10^{19}$ GeV) or a very tiny size scale ($1.616 \times 10^{-35}$ meters).}. This theory also falls short in explaining the strong $CP$ problem, and matter-antimatter asymmetry.

These deficiencies led physicists to explore Beyond the Standard Model theories (BSM). A rich spectrum of theories has been proposed, such as supersymmetric extensions, string theory, extra dimension theories and grand unified theories, etc. The question of which theory is the one chosen by the Nature, can only be answered through experiments.

In this Chapter we study the Standard Model Lagrangian and present the particle content of the SM in Section \ref{SM}. Section \ref{BEH} presents the Brout-Englert-Higgs mechanism through which the gauge bosons acquire mass. We end this Chapter with motivations for introducing more than one doublet, and give the example of a two-Higgs-doublet model.

\section{ The Standard Model of particle physics}\label{SM}

The particle content of the Standard Model is shown in Figure (\ref{Families}). The three families of fermions (with spin $1/2$), which obey the Fermi-Dirac statistics, form matter. Each family consists of two quarks and two leptons. These families are a perfect replica of each other, where the heavier families eventually decay into the first family particles. Each fermion is associated with an anti-particle with the same mass and decay width but opposite charges. 

The bosons (with spin $1$), which obey the Bose-Einstein statistics, mediate the fundamental interactions. The photon $\gamma$ is exchanged in electromagnetic interactions between charged particles. The gauge bosons $W^+, W^-$ and $Z^0$ mediate the weak interactions of radioactive decay. The eight gluons are responsible for strong interactions inside the atomic nuclei. The Feynman diagrams in Figure (\ref{Diagrams}) illustrate these three fundamental interactions.

There is also the two theoretical bosons, the graviton (with spin $2$) which is responsible for the gravitational interactions, and the Higgs boson (with spin $0$), which is responsible for the masses of the SM particles.


\begin{figure} [ht]
\centering
\includegraphics[height=11cm]{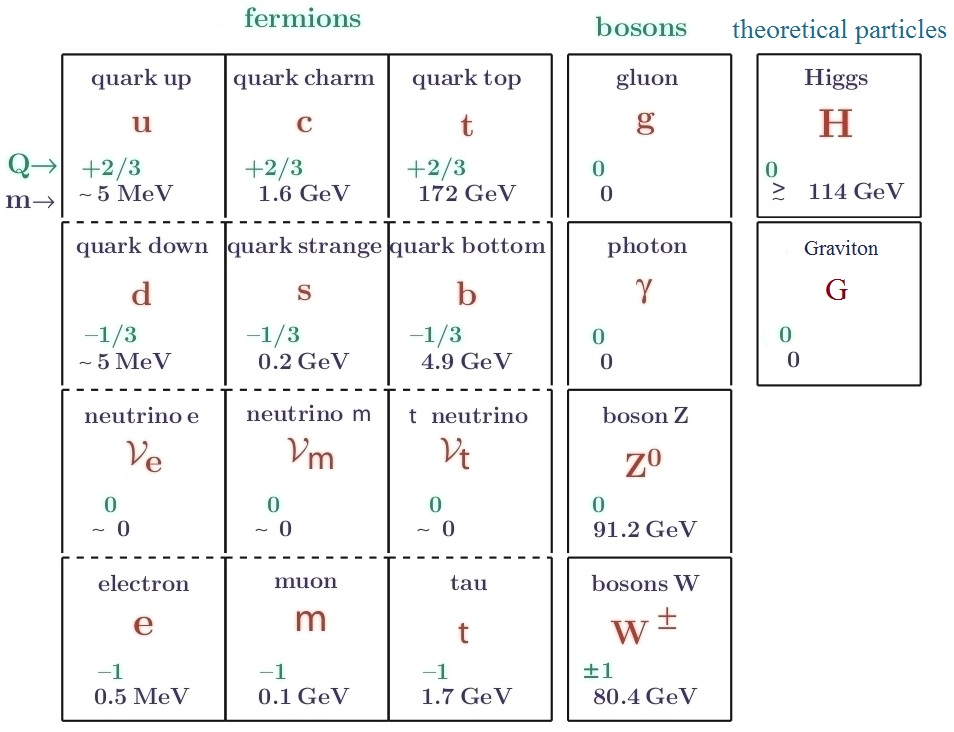}
\caption{The elementary particles of the Standard Model, and their characteristics \cite{Djouadi}.}
\label{Families}
\end{figure}

\begin{figure} [ht]
\centering
\includegraphics[height=4cm]{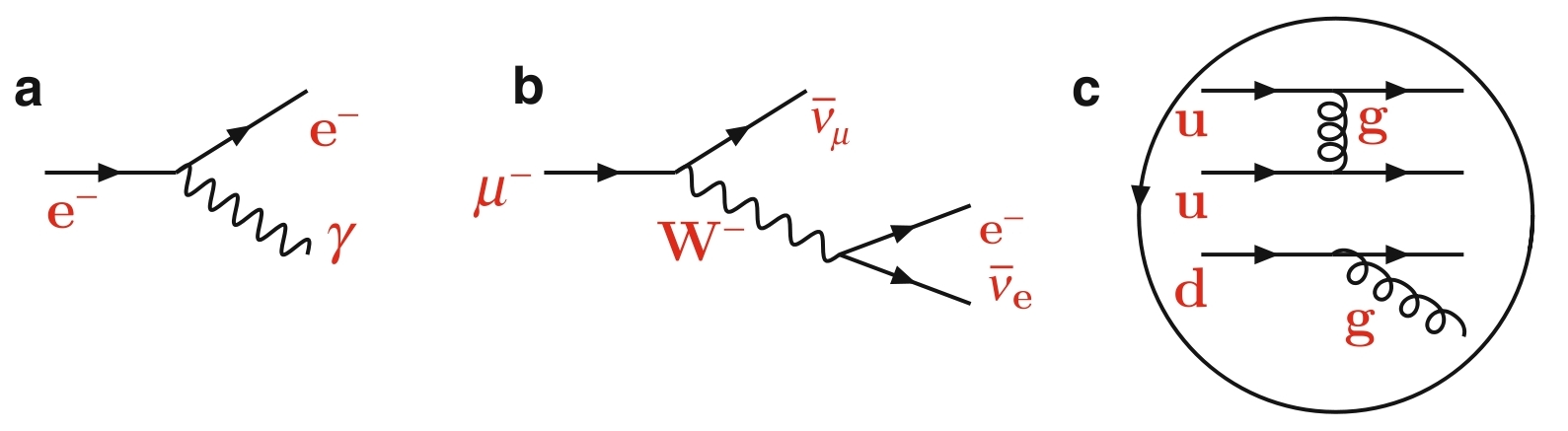}
\caption{The three fundamental interactions of the Standard Model \cite{Djouadi}; (a) The electromagnetic interaction with the exchange of a photon, (b) the weak interaction with the exchange of a $W$ boson, (c) the strong interaction with the exchange of gluons \cite{Djouadi}.}
\label{Diagrams}
\end{figure}

\subsection{The Standard Model Lagrangian}

The Standard Model is based on a very powerful principle, the gauge symmetry; the Lagrangian remains unchanged under gauge transformations of an internal symmetry group, $SU(3)_C \times SU(2)_L \times U(1)_Y$.

The Standard Model consists of the electroweak theory describing the electroweak interactions which is based on a $SU(2)_L \times U(1)_Y$ symmetry, and the theory of Quantum Chromodynamics (QCD) describing strong interactions with an $SU(3)_C$ symmetry. 

In this Chapter we only study the QED and electroweak Lagrangian.

\subsubsection{QED Lagrangian}
The Lagrangian of QED theory involves three parts;
\be
L_{QED} = L_{EM} + L_{Dirac} + L_{int} 
\ee
The electromagnetic field Lagrangian describes the propagation of free photons;
\bea
&& L_{EM} = -\frac{1}{4}F_{\mu\nu}F^{\mu\nu} \nonumber \\
&& \mbox{with} \quad F_{\mu\nu}= \partial_\nu A_\mu - \partial_\mu A_\nu
\eea
where $A_\mu$ is the four-vector electromagnetic potential.
The kinetic and mass terms in the Lagrangian are:
\be 
L_{Dirac}= i\overline{\psi}\gamma^\mu \partial_\mu \psi - m\overline{\psi} \psi 
\ee
And the Lagrangian for interaction of a fermion with charge $q$ and the photon field is described with:
\be 
L_{int}= -q \overline{\psi} \gamma^\mu \psi A_\mu 
\ee
This Lagrangian respects the $U(1)$ gauge symmetry;
\bea
\psi & \to & e^{-iq\alpha(x)}\psi \nonumber \\ 
A_\mu & \to & A_\mu+\frac{1}{q}\partial_\mu \alpha(x) 
\eea
which is equivalent to replacing the normal derivative $\partial_\mu$ by the covariant derivative $D_\mu$;
\be 
\partial_\mu \longrightarrow D_\mu \quad , \quad D_\mu = \partial_\mu + iqA_\mu
\ee
This replacement preserves the local symmetry. The electric charge $q$ acts as a generator for the local group transformation $U(x)=e^{-iq\alpha(x)}$. According to Noether's theorem, the continuous symmetry results in a conserved quantity that is the electric charge. The massless gauge boson associated with the local $U(1)$ symmetry is the photon with couples to particles with the electric charge. So, QED is an Abelian theory with the symmetry group $U(1)$ and one gauge field, the electromagnetic field. 

\subsubsection{The electroweak Lagrangian}

The electroweak Lagrangian is an $SU(2)_L\times U(1)_Y$ gauge invariant theory. There are three gauge bosons associated with the $SU(2)$; $W^1_\mu$, $W^2_\mu$, $W^3_\mu$. There is an additional gauge field associated with the $U(1)$ symmetry $B_\mu$. 
The electroweak Lagrangian has the following form; 
\be 
L = L_{gauge} + L_{Yukawa} + L_{f} + L{\phi} 
\ee
The kinetic part of the Lagrangian involving the leptons consists of:
\be
L_{lepton}= i\overline{\psi}_R \gamma^\mu (D_\mu)\psi_R + i\overline{\psi}_L \gamma^\mu (D_\mu)\psi_L 
\ee
And the gauge part is:
\be 
L_{gauge}= -\frac{1}{4}f_{\mu\nu}f^{\mu\nu} - \frac{1}{8} Tr(F_{\mu\nu}F^{\mu\nu})
\ee
where the strength tensors of $W$ and $B$ fields are;
\bea
& f_{\mu\nu} = \partial_\mu B_\nu - \partial_\nu B_\mu & \nonumber \\
& F^l_{\mu\nu} = \partial_\mu W^l_\nu - \partial_\nu W^l_\mu - g_W \epsilon_{lmn} W^m_\mu W^n_\nu & 
\eea
and the covariant derivative preserves the local gauge invariance;
\be 
D_\mu = \partial_\mu + \frac{ig_BY}{2}B_\mu + ig_W \frac{\overrightarrow{\sigma}}{2} \cdot \overrightarrow{W}_\mu 
\ee
Here $\gamma^\mu$ are the gamma matrices (Dirac matrices), and $\sigma_i$ are the Pauli matrices
\be 
\sigma_1 = \left( \begin{array}{cc} 0 & 1 \\ 1 & 0 \\ \end{array} \right) , \quad \sigma_2 = \left( \begin{array}{cc} 0 & -i \\ i & 0 \\ \end{array} \right) , \quad \sigma_3 = \left( \begin{array}{cc} 1 & 0 \\ 0 & -1 \\ \end{array} \right)
\ee
$g_W$ and $g_B$ are the $SU(2)$ and $U(1)$ gauge couplings respectively and $\epsilon_{lmn}$ is the totally antisymmetric symbol.

Note that the electron and its neutrino field are combined together into a doublet for the left-handed components, and a singlet for the right-handed component;
\be 
\psi_L = \stolbik{\nu_{e_L}}{e_L} , \quad \psi_R = e_R
\ee
since the mass of the neutrino is $0$ in SM and there is only a left-handed component of the neutrino;
\be 
\left( \frac{1-\gamma_5}{2} \right) \nu_e = \nu_e
\ee
The left-handed and right-handed electron fields, $e_L$ and $e_R$, are related to the electron in the following way;
\be 
e_L = \left( \frac{1-\gamma_5}{e} \right) e , \quad e_R = \left( \frac{1+\gamma_5}{e} \right) e
\ee

The three types of charges in the electroweak theory are the electric charge $Q$, the weak isospin $I$, and the weak hypercharge $Y$. These operators are related by the Gell-Mann$-$Nishijima relation;
\be 
Q = I^3 + \frac{Y}{2}
\ee
where $I^3$ is the third component of the weak isospin. The neutrino is assigned an isospin of $I^3_\nu = +1/2$. A left-handed electron has $I^3_e = -1/2$, and a right-handed electron field has $I_R =0$. For a left-handed spinor $Y=-1$, while for a right-handed spinor $Y=-2$. So, the overall charges are:
\bea
\mbox{Neutrino} \quad &: \quad I^3_L = +\frac{1}{2} \quad &Y_L=-1 \quad Q_L =0  \nonumber  \\
\mbox{Left-handed electron} \quad &: \quad I^3_L = -\frac{1}{2} \quad &Y_L=-1 \quad Q_L =-1   \nonumber \\
\mbox{Right-handed electron} \quad &: \quad I^3_L = 0 \quad &Y_L=-2 \quad Q_L =-1 
\eea
The three gauge fields $W^1_\mu$, $W^2_\mu$, $W^3_\mu$ correspond to the weak isospin charge $I$ and the gauge field $B_\mu$ corresponds to the weak hypercharge $Y$. A particle with a given type of charge, interacts with the field associated with that charge, and the value of the charge determines the strength of that interaction.

The left-handed fermionic fields with weak isospin $I =1/2$ ($I^3 = \pm 1/2$) and weak hypercharge $Y(Q_L)=1/3$ can be written as:
\be 
Q_L^1= \stolbik{u}{d}_L , \quad Q_L^2= \stolbik{c}{s}_L , \quad Q_L^3= \stolbik{t}{b}_L 
\ee
For the fermions of right helicity, the weak isospin is zero ($I_3=0$) which corresponds to singlets of fermions:
\be 
U_R^{1,2,3}= u_R, c_R, t_R \quad , \quad D_R^{1,2,3}= d_R, s_R, b_R
\ee



\section{Spontaneous symmetry breaking}\label{BEH}

The local gauge symmetry causes the $W$ and $Z$ boson masses in the electroweak Lagrangian to be zero, which is far from the experimentally observed values, $M_W=80$ $GeV$ and $M_Z=91$ $GeV$. Introducing a mass term, by hand, to the Lagrangian is not the solution, since it breaks the gauge symmetry. The mechanism that insures massive $W$ and $Z$ bosons and massless photons while preserving the gauge invariance, is called the Brout-Englert-Higgs mechanism, and is presented in this Section.

This mechanism also explains the masses of the quarks and leptons through Yukawa interactions with the scalar fields and their conjugates.

\subsection{Abelian Higgs model}

Let's start with an Abelian Higgs model, where we introduce spontaneous symmetry breaking in a model with a $U(1)$ symmetry, say electromagnetism theory.

Recall that the $U(1)$-symmetric electromagnetism Lagrangian;
\bea 
&& L_{EM}=-\frac{1}{4}F_{\mu\nu}F^{\mu\nu} \nonumber  \\
&& \mbox{with} \quad F_{\mu\nu}= \partial_\nu A_\mu - \partial_\mu A_\nu
\eea  
is invariant under the transformations $A_\mu  \longrightarrow A_\mu + \frac{1}{q}\partial_\mu \alpha (x)$ and $\phi\longrightarrow \phi e^{-iq \alpha (x)}$. Let's add a mass term to the Lagrangian:
\be 
L=-\frac{1}{4}F_{\mu\nu}F^{\mu\nu} + \frac{1}{2}m^2 A_\mu A^\mu
\ee 
The mass term clearly violates the gauge symmetry. However, we can extend the model by adding a single complex scalar field;
\be 
\phi= \frac{1}{\sqrt{2}}(\phi_1 +i\phi_2)
\ee
with real fields $\phi_1$ and $\phi_2$. The Lagrangian describing this new field's kinetic and potential term has the form 
\bea
&&L=-\frac{1}{4}F_{\mu\nu}F^{\mu\nu} +|D_\mu \phi|^2 -V(\phi) \nonumber \\
&& \mbox{with} \quad V(\phi)= {\mu}^2|\phi|^2 + \lambda (|\phi|^2)^2 \nonumber \\
&& \mbox{and} \quad D_\mu = \partial_\mu + iqA_\mu  
\eea
where $V(\phi)$ is the most general $U(1)$-symmetric renormalizable potential. 
For $\lambda >0 $, this theory has two different forms:

\begin{itemize}
\item  $\mu^2 > 0$ \\
In this case the potential has a shape as shown in Figure (\ref{Pos-Pot}), and it preserves the symmetry of the Lagrangian. The vacuum appears at $\phi =0$. This Lagrangian describes a massless photon and a charged field $\phi$, with mass $\mu$.

\begin{figure} [ht]
\centering
\includegraphics[height=3.5cm]{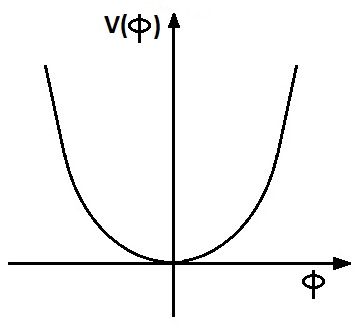}
\caption{The scalar potential with $\mu^2 > 0$}
\label{Pos-Pot}
\end{figure}

\item  $\mu^2 < 0$ \\
The more interesting potential in this case has the shape shown in Figure (\ref{Neg-Pot}). The vacuum appears at $ \langle \phi \rangle = v = \pm \sqrt{-\frac{\mu^2}{2\lambda}}$, which breaks the symmetry. It is conventional to choose the vacuum to lie along the real axis of $\phi$. Note that the symmetry in the potential is spontaneously broken with acquiring the vacuum expectation value (v.e.v).

\begin{figure} [ht]
\centering
\includegraphics[height=3.5cm]{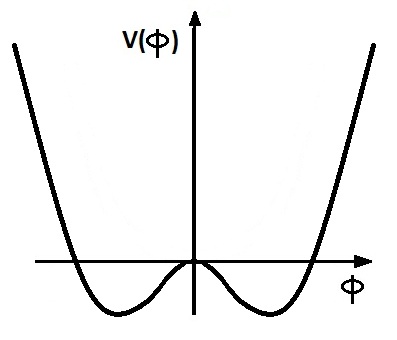}
\caption{The scalar potential with $\mu^2 < 0$}
\label{Neg-Pot}
\end{figure}

\end{itemize}

\subsubsection{Goldstone bosons}

It is common to parametrize the field $\phi$ in terms of two real fields  $\chi$ and $H$,
\be 
\phi = \frac{1}{\sqrt{2}} e^{i\frac{\chi}{v}} (v + H)
\ee
The Lagrangian in terms of $\chi$ and $H$, will have the form;
\bea
L &=& -\frac{1}{4}F_{\mu\nu}F^{\mu\nu} +\frac{1}{2}\partial_\mu \chi \partial^\mu \chi + \frac{1}{2} \partial_\mu H \partial^\mu H  \nonumber \\
 && + \mu^2H^2  + \frac{1}{2}q^2v^2A_\mu A^\mu  \nonumber \\
&& - qvA_\mu \partial^\mu \chi + \quad \mbox{higher order terms} 
\eea
This Lagrangian describes a photon with mass $M_A = qv$, a scalar field with mass $ \sqrt{-2\mu^2} > 0$, and a scalar field with mass zero. The confusing mixed term $- qvA_\mu \partial^\mu \chi $ can be removed by making a gauge transformation:
\be 
A_\mu \longrightarrow A_\mu - \frac{1}{qv}\partial_\mu \chi
\ee
This transformation removes the field $\chi$ from the Lagrangian. The field $\chi$ is called a Nambu-Goldstone boson \cite{Nambu,Goldstone}, which has been ''eaten" to give mass to the photon. The field $H$ is called a Higgs boson field. 

Let's count the number of degrees of freedom (d.o.f). Before the spontaneous symmetry breaking we had a massless photon (2 d.o.f) and a complex scalar field (2 d.o.f), making the total of 4 degrees of freedom. After the spontaneous symmetry breaking we have a massive photon (3 d.o.f) and a real scalar $H$ (1 d.o.f), making the same total number of degrees of freedom.

\subsection{Non-Abelian Higgs model (Brout-Englert-Higgs mechanism)}

Here we present the spontaneous symmetry breaking method which forces the gauge bosons to acquire mass in the $SU(2) \times U(1)$-symmetric electroweak theory. This method is an extension of the Abelian Higgs model we just studied.

This mechanism postulates the existence of a doublet of complex scalar fields: 
\be 
\Phi = \frac{1}{\sqrt{2}}\stolbik{\phi_1 + i\phi_2}{\phi_3 + i\phi_4}
\ee
with real fields $\phi_1$, $\phi_2$, $\phi_3$ and $\phi_4$.
The most general $SU(2)$-symmetric renormalizable potential would be: 
\be 
V(\Phi) = \mu^2 |\Phi^\dagger \Phi| + \lambda (|\Phi^\dagger \Phi|)^2  \label{potential-one-doublet}
\ee
with $\lambda >0$ (to ensure that the potential is bounded from below). 

Similar to the Abelian case, the vacuum acquires a v.e.v for $\mu^2 <0$; 
\be 
\langle\Phi \rangle = \frac{1}{\sqrt{2}} \stolbik{0}{v}
\ee
shown in Figure (\ref{Sym-Break}).
\begin{figure} [ht]
\centering
\includegraphics[height=4cm]{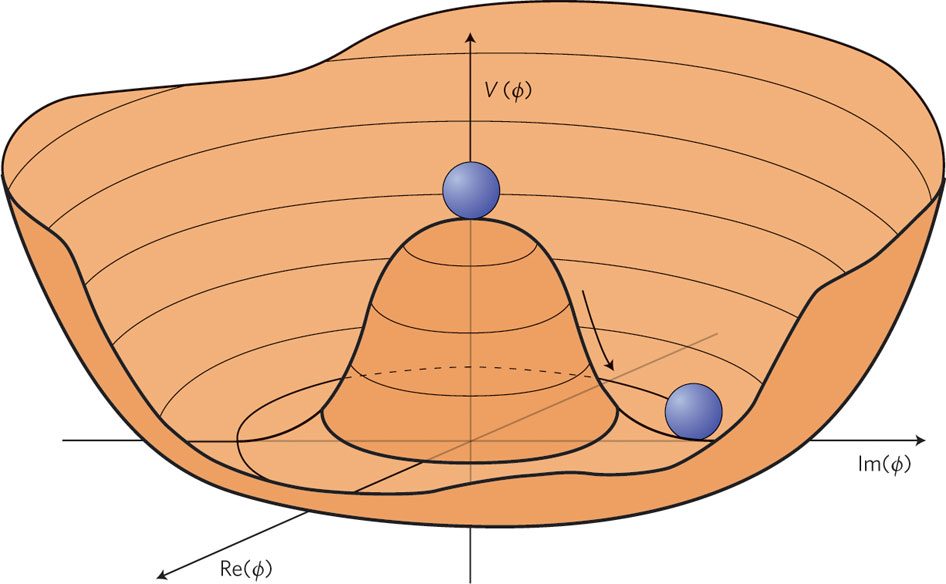}
\caption{The scalar potential with a v.e.v. leading to spontaneous symmetry breaking \cite{Gaume-Ellis}.}
\label{Sym-Break}
\end{figure}
$\Phi$ has the hypercharge $Y_\Phi =1/2$, and electromagnetic charge $Q= I_3 + Y/2 = 0$ for the lower component, which means that the electromagnetic $U(1)$ symmetry is preserved. So, this mechanism yields the desired symmetry breaking $SU(2)_L \times U(1)_Y \longrightarrow U(1)_{EM} $.

The scalar field $\Phi$ contributes to the Lagrangian with:
\be 
L =(D^\mu \Phi)^\dagger (D_\mu \Phi)-V(\Phi)
\ee
with the covariant derivative:
\be 
D_\mu=\partial_\mu +i{g_BY\over 2} B_\mu +i {g_W\over 2}\overrightarrow{\sigma} \cdot \overrightarrow{W_\mu} 
\ee
We develop the theory around this minimum;
\be 
\Phi=  \stolbik{\xi_1+i\xi_2}{\frac{1}{\sqrt{2}}(v+H+i\xi_3)}
\ee
We substitute this $\Phi$ in the Lagrangian, and study the second order terms to calculate the masses. The three fields $\xi_i$ are three Goldstone bosons which give mass to the three weak gauge fields. The spontaneous symmetry breaking also leads to a massive boson from the field $H$, the Higgs boson. 

These mass terms result from the following terms in the Lagrangian:
\bea 
\Delta L&=& \frac{1}{2}(0 \quad v)(i{g_BY\over 2} B_\mu +i {g_W\over 2}\overrightarrow{\sigma} \cdot \overrightarrow{W_\mu})(i{g_BY\over 2} B_\mu +i {g_W\over 2}\overrightarrow{\sigma} \cdot \overrightarrow{W_\mu})\stolbik{0}{v} \nonumber \\
&=& \frac{v^2}{8}\left[g^2_W(W^1_\mu)^2 + (W^2_\mu)^2 + (-g_WW^3_\mu + g_BB_\mu)^2  \right] \nonumber
\eea
Three massive vector bosons appear;
\be
W^{\pm}_\mu = \frac{1}{\sqrt{2}}(W^1_\mu \mp i W^2_\mu) \quad, \quad 
Z_{\mu} = \frac{-g_B B_\mu + g_W W^3_\mu}{\sqrt{g^2_W + g^2_B}} 
\ee
The fourth vector field remains massless;
\be 
A_{\mu} = \frac{g_W B_\mu + g_B W^3_\mu}{\sqrt{g_W^2 + g^2_B}}
\ee
with the weak mixing angle $\theta_W$; 
\be
\sin\theta_W =\frac{g_B}{\sqrt{g^2_W +g^2_B}}  \quad  , \quad \cos\theta_W = \frac{g_W}{\sqrt{g^2_W +g^2_B}}
\ee
The value for masses appears to be:
\be 
M_W^2 = {1\over 4} g^2_W v^2, \quad M_Z^2 = {1\over 4} (g^2_W + g^2_B)v^2, \quad M_\gamma= 0
\ee
Therefore, introducing the Brout-Englert-Higgs boson has managed to spontaneously break the electroweak symmetry and give mass to the gauge bosons while leaving the photon massless.

Let's count the degrees of freedom in this case. We began with a complex scalar $SU(2)_L$ doublet $\Phi$ with four degrees of freedom, a massless $SU(2)_L$ gauge field $W_i$ with six degrees of freedom, and a massless $U(1)_Y$ gauge field $B$ with 2 degrees of freedom, making the total of 12. After the spontaneous symmetry breaking we have a physical real scalar field $H$ (1 degree of freedom), massive $W's$ and $Z$ fields (9 degrees of freedom), and a massless photon (2 degrees of freedom), adding up to the same 12 degrees of freedom.

\section{Motivation for introducing more than one doublet}

There is no fundamental reason to assume that the scalar sector is minimal and contains only one Higgs doublet. One of the simplest extension of the BEH Mechanism is the two-Higgs-doublet model (2HDM) \cite{Lee}, with one extra Higgs doublet with the same quantum numbers as the first one. There are many motivations for introducing two-Higgs doublet models, and they have been studied extensively.  

One of the motivations for introducing the second doublet is stemmed from the hierarchy of the Yukawa couplings in the third generation quarks, the top and the bottom quarks. In the Standard Model the masses of both these quarks comes from their couplings to the one Higgs doublet, which does not explain the ratio $m_t / m_b \approx 35$. However, this problem could be fixed if they received their masses from the two different doublets, providing the free parameters of the theory acquired appropriate values.

One other reason to extend the scalar sector is to implement $CP$ violation \cite{Lee} in the model. In two-Higgs-doublet models the $CP$ symmetry can be violated explicitly by choosing complex coefficients.

Another motivation for introducing the second doublet is its application in supersymmetry \cite{Haber-Kane}. The scalars appearing in supersymmetric theories are chiral multiplets, and their conjugates are of opposite chirality, therefore they cannot couple together. As a result, two Higgs doublets are needed to simultaneously give mass to the quarks with different charges, and therefore explaining the mass hierarchy in the fermion sector.

Moreover, two-Higgs-doublet models are a possible source of Flavour Changing Neutral Currents (FCNC). These processes are strongly suppressed by experimental data, even though they do not seem to violate any fundamental law of nature. The Standard Model is compatible with experimental constraints on FCNC so far \cite{Sher,Cheng}, with the remarkable exception of neutrino oscillations \cite{Fukuda}. These processes are absent in the lepton sector, and in the quark sector they are prohibited. The simplest model that could explain FCNC processes is the two-Higgs-doublet model. The Yukawa interactions arising from the second doublet lead to tree level FCNC's. 

Generating baryon asymmetry in the Universe of sufficient size is another motivation for introducing models with more than one doublet \cite{Trodden}, since they have a flexible scalar mass spectrum and they provide additional sources for $CP$ violation.

Models with more than one doublet, such as two-Higgs-doublet inert model \cite{Ginzburg-Inert}, could also accommodate dark matter candidates. In such models a $Z_2$ symmetry which remains unbroken, requires one of the doublets to have zero vacuum expectation value, and therefore not to couple to quarks and leptons. As a result this doublet, the inert Higgs doublet, could contain a dark matter candidate.

\section{Two-Higgs-doublet model potential}\label{section-2hdm}

The Two-Higgs-Doublet Model (2HDM) is one of the simplest BSM extensions of the Higgs mechanism. In this model, one introduces two complex Higgs doublets $ \phi_1$ and $\phi_2 $ with:
\be 
 \phi_{a} = \stolbik{\phi_{a}^+}{\phi_{a}^0}, \quad a=1,2
\ee
The most general Higgs potential of 2HDM which is gauge invariant and renormalizable, is usually divided into quadratic and quartic parts:
\bea
V &=& V_2 + V_4  \\
V_2&=&-{1\over 2}\left[m_{11}^2(\phi_1^\dagger\phi_1) + m_{22}^2(\phi_2^\dagger\phi_2) + m_{12}^2 (\phi_1^\dagger\phi_2) + m_{12}^{2\ *} (\phi_2^\dagger\phi_1)\right] \nonumber\\
V_4&=&\fr{\lambda_1}{2}(\phi_1^\dagger\phi_1)^2 +\fr{\lambda_2}{2}(\phi_2^\dagger\phi_2)^2 +\lambda_3(\phi_1^\dagger\phi_1) (\phi_2^\dagger\phi_2) +\lambda_4(\phi_1^\dagger\phi_2) (\phi_2^\dagger\phi_1) \nonumber \\
&+&\fr{1}{2}\left[\lambda_5(\phi_1^\dagger\phi_2)^2+ \lambda_5^*(\phi_2^\dagger\phi_1)^2\right]
+\left\{\left[\lambda_6(\phi_1^\dagger\phi_1)+\lambda_7 (\phi_2^\dagger\phi_2)\right](\phi_1^\dagger\phi_2) +{\rm
h.c.}\right\} \nonumber
\eea
This potential has $14$ free parameters [compare to only two real parameters in the potential with one doublet (\ref{potential-one-doublet})]; the real parameters $m^2_{11}, m^2_{22}, \lambda_1, \lambda_2, \lambda_3, \lambda_4$, and the complex parameters $m^2_{12}, \lambda_5, \lambda_6, \lambda_7$. 


\subsection{A simple case of 2HDM}

Here we briefly study a simple case of 2HDM, with $\lambda_1 = \lambda_2 = \lambda$, $\lambda_3 = \lambda'$ and $ \lambda_4 = \lambda_5 = \lambda_6 = \lambda_7 = m^2_{12}=0$. Therefore the potential simplifies to;
\bea
V &=& -\frac{1}{2}\left[m^2_{11}(\phi^\dagger_1\phi_1) +m^2_{22}(\phi^\dagger_2\phi_2)\right]  \\
&& +\frac{\lambda}{2} (\phi^\dagger_1\phi_1)^2 + \frac{\lambda}{2}(\phi^\dagger_2\phi_2)^2 + \frac{\lambda'}{2} (\phi^\dagger_1\phi_1)(\phi^\dagger_2\phi_2) \nonumber 
\eea
In order to have a stable vacuum the potential has to be bounded from below. It is enough to require the quartic part to be positive, therefore $\lambda >0$ and $\lambda + \lambda' >0$. We require the doublets to acquire an expectation value;
\be 
\lr{\phi_{1}} = \stolbik{0}{v_1} \quad , \quad  \lr{\phi_{2}} = \stolbik{0}{v_2}
\ee 
Requiring the potential to have a minimum;
\be 
\left(\frac{\partial V}{\partial \phi_1} \right)=0 \quad , \quad \left(\frac{\partial V}{\partial \phi_2} \right)=0
\label{vanish-linear}
\ee
imposes the following conditions;
\be 
v_1(-m^2_{11} + 2\lambda v^2_1 + \lambda' v^2_2)=0 , \quad v_2(-m^2_{22} + 2\lambda v^2_2 + \lambda' v^2_1)=0
\ee
Therefore, we have four possibilities;

\begin{itemize}
\item
The phase $A$ with $v_1=0$ and $v_2=0$\\
There is no symmetry breaking in this case, since the potential has one global minimum at the origin, with the conditions $\lambda >0$ and $\lambda + \lambda' >0$.
To get the mass eigenstates, we develop the potential around the minimum;
\be 
\phi_{1} = \stolbik{w^+_1}{\frac{h_1+i\eta_1}{\sqrt{2}}}, \quad  \phi_{2} = \stolbik{w^+_2}{\frac{h_2+i\eta_2}{\sqrt{2}}}
\ee
which results in a mass spectrum of the form:
\bea
&&m^2(w^{\pm}_a) = -\frac{m^2_{11}}{2} , \quad -\frac{m^2_{22}}{2} \nonumber \\
&&m^2(h_a, \eta_a) = -\frac{m^2_{11}}{2} , \quad -\frac{m^2_{22}}{2} 
\eea
with the extra condition $m^2_{11}, m^2_{22}<0$.

\item
The phase $B_1$ with $v_1\neq0$ and $v_2=0$\\
In this case the potential has two global minima in $\phi_1$ direction in the $(\phi_1,\phi_2)$-plane, at $\lr{\phi_1}=\pm \frac{v_1}{2}$. To study the potential, we excite the minimum;
\be 
\phi_{1} = \stolbik{w^+_1}{\frac{v_1+h_1+i\eta_1}{\sqrt{2}}}, \quad  \phi_{2} = \stolbik{w^+_2}{\frac{h_2+i\eta_2}{\sqrt{2}}}
\ee
To satisfy the condition (\ref{vanish-linear}) the linear terms must vanish, which results in;
\be 
v^2_1=\frac{m^2_{11}}{\lambda}
\ee
Therefore the mass spectrum has the form:
\bea
&&m^2(w^{\pm}_a) = 0, \quad \frac{-2\lambda m^2_{22}+\lambda' m^2_{11}}{4\lambda}   \\
&&m^2(h_a, \eta_a) = m^2_{11} , \quad \frac{-2\lambda m^2_{22}+\lambda' m^2_{11}}{4\lambda}\quad \mbox{(double degenerate)} \nonumber
\eea
Positivity of the mass eigenstates requires;
\be 
-m^2_{22}+\frac{\lambda'}{2\lambda}m^2_{11}>0
\ee
This condition is satisfied in three different cases; $\lambda>\lambda'>0$, $\lambda'>\lambda>0$ and $0>\lambda'>-\lambda$.

\item
The phase $B_2$ with $v_1=0$ and $v_2\neq0$\\
This case is very similar to $B_1$ case. Here the potential has two global minima in $\phi_2$ direction in the $(\phi_1,\phi_2)$-plane, at $\lr{\phi_2} =\pm \frac{v_2}{2}$.
Again to study the potential we excite the minimum;
\be 
\phi_{1} = \stolbik{w^+_1}{\frac{h_1+i\eta_1}{\sqrt{2}}}, \quad  \phi_{2} = \stolbik{w^+_2}{\frac{v_2+h_2+i\eta_2}{\sqrt{2}}}
\ee
From the vanishing linear terms in the potential we get;
\be 
v^2_2=\frac{m^2_{22}}{\lambda}
\ee
which results in a mass spectrum of the form:
\bea
m^2(w^{\pm}_a) &=& \frac{-2\lambda m^2_{11}+\lambda' m^2_{22}}{4\lambda}   \\
m^2(h_a, \eta_a) &=& m^2_{22} , \quad \frac{-2\lambda m^2_{11}+\lambda' m^2_{22}}{4\lambda} \quad \mbox{(double degenerate)} \nonumber
\eea
With the positivity of the mass eigenstates requiring;
\be 
-m^2_{11}+\frac{\lambda'}{\lambda}m^2_{22}>0
\ee
Similar to $B_1$, here we have three cases depending on the values of $\lambda$ and $\lambda'$.

\item
The phase $C$ with $v_1\neq0$ and $v_2\neq0$\\
The potential acquires four minima in this case, two in the $\phi_1$ direction at $\phi_1=\pm v_1/2$, and two in the $\phi_2$ direction at $\lr{\phi_2} =\pm v_2/2$. We excite the minimum;
\be 
\phi_{1} = \stolbik{w^+_1}{\frac{v_1+h_1+i\eta_1}{\sqrt{2}}}, \quad  \phi_{2} = \stolbik{w^+_2}{\frac{v_2+h_2+i\eta_2}{\sqrt{2}}}
\ee
From the vanishing linear terms in the potential we get;
\be 
v^2_1=\frac{4\lambda m^2_{11} -2\lambda' m^2_{22}}{4\lambda^2 -\lambda'^2}  \quad , \quad   v^2_2=\frac{4\lambda m^2_{22} -2\lambda' m^2_{11}}{4\lambda^2 -\lambda'^2}
\ee
which results in a 2 massless charged bosons, 2 massless neutral bosons and 2 massive neutral bosons which are the eigenvalues of the following mass matrix:
\be 
\left( \begin{array}{cc}
\frac{-m^2_{11}}{4}+\frac{3}{4}\lambda v^2_1 + \frac{\lambda'}{8}v^2_2    &     \frac{\lambda'}{4}v_1v_2  \\ 
\frac{\lambda'}{4}v_1v_2     &   \frac{-m^2_{22}}{4}+\frac{3}{4}\lambda v^2_2 + \frac{\lambda'}{8}v^2_1 
\end{array}\right)
\ee

\end{itemize}


\chapter{N-Higgs-Doublet Models}\label{chapter2}

The electroweak symmetry breaking (EWSB) in the Standard Model is based on the Brout-Englert-Higgs mechanism (BEH). Many different variants of the BEH mechanism have been proposed so far, but it is not known which one is realized in the Nature yet. In the meantime, it is important to be aware of all essential possibilities that are realized in a chosen model.

The non-minimal Higgs models have many free parameters, and it is important to analyse the model in its most general form, allowing for all possible degrees of freedom. This general analysis shows which phenomenological consequences are sensitive to the values of the parameters, which symmetries can arise in the model and how they are broken. After understanding the general structure of a model, one could restrict the model according to the existing experimental constraints.

Unfortunately, such an exhaustive analysis is not possible in many non-minimal Higgs models. Even in the case of 2HDM \cite{Lee,Hunter,CPNSh}, the scalar potential cannot be minimized explicitly in the general case. Recently, a number of tools have been developed helping to understand the properties of the general 2HDM. These methods were based on the idea of the reparametrization symmetry, or basis-invariance, of the model: a unitary transformation between the Higgs doublets changes the parameters of the model, but leads to the same physical properties of the observable particles.

This idea can be implemented via a tensorial formalism at the level of scalar fields \cite{CP,Ginzburg:2004vp,haber,haber2,oneil} or through a geometric constructions in the space of gauge-invariant bilinears \cite{sartori,nagel,heidelberg,nishi2006,ivanov0}. In the latter case the formalism was extended to include non-unitary reparametrization transformations \cite{ivanov1,ivanov2,nishi2008}, which revealed interesting geometric properties of the 2HDM in the orbit space.

It is a natural idea to extend these successful techniques to $N$ doublets. The general potential in multi-Higgs-doublet models is much richer than 2HDM, both at the level of scalar sector and Yukawa interactions \cite{weinberg1976,Branco1980,DeshpandeHe1994}.
Some properties of the general NHDM were analysed in 
\cite{heidelberg,barroso2006,nishi:nhdm,ferreira-nhdm,zarrinkamar}, with a special emphasis on $CP$-violation, \cite{nishi2006,lavoura1994}. However, a method to systematically explore all the possibilities offered with $N$ doublets was still missing.

Generalization from 2HDM to NHDM is straightforward in the tensorial formalism, however it is very difficult to translate tensorial invariants into physical observables. On the other hand, the geometric approach in the space of bilinears offers a more appealing treatment of the scalar potential, but the shape of the NHDM orbit space is rather complicated and has not been fully characterized so far.

In this Chapter we study the algebraic and geometric properties of the NHDM orbit space. First we study the three distinct, but interrelated approaches to description of gauge orbits in the space of scalar fields in Section \ref{orbit-nhdm-section}. Then we show a geometrical description of the orbit space in the general case of NHDM. In Section \ref{potential-nhdm-section} we present the general scalar potential in NHDM, and end the Chapter with a detailed study of the orbit space of three-Higgs-doublet models, which is going to be used in the next Chapters.

\section{Orbit space of NHDM}\label{orbit-nhdm-section}

The scalar potential of the NHDM is constructed from gauge-invariant bilinear combinations of the Higgs fields. The space of these combinations can be described in three algebraically different but closely related ways: via representative Higgs doublets, the so-called $K$-matrix, and a vector in the adjoint space. Here we describe and compare these three ways.

\subsection{Field space and gauge orbits}

The scalar part of the general NHDM has $N$ complex Higgs doublets with electroweak isospin $Y=1/2$:
\be
\phi_a = \doublet{\phi_a^+}{\phi_a^0}\,,\quad a=1,\dots , N 
\ee
The dimension of the space of scalar fields is $4N$, with each doublet having 4 degrees of freedom (2 neutral and 2 charged). Since the scalar Lagrangian is electroweak-symmetric, we can perform any simultaneous intradoublet $SU(2)\times U(1)$ transformation inside all doublets without changing the Lagrangian.

If we take a generic point in the Higgs space and apply all possible $SU(2)\times U(1)$ transformations, we will get a four-dimensional manifold called the (gauge) orbit. Thus, the entire $4N$-dimensional space of Higgs fields is naturally divided into non-intersecting orbits. The resulting set of orbits is a $(4N-4)$-dimensional manifold called the orbit space.

When minimizing the Higgs potential, we look for vacuum expectation values of the Higgs fields $\lr{\phi_a}$.
Then, we can characterize each orbit by a specific representative point in the Higgs space: 
\be
\label{Phi:generic}
\lr{\phi_1 } = \doublet{0}{v_1}\,,\quad 
\lr{\phi_2 } = \doublet{u_2}{v_2e^{i\xi_2}}\,,\quad 
\lr{\phi_a } = \doublet{u_a e^{i\eta_a}}{v_a e^{i\xi_a}}\,,\quad a > 2 
\ee
This point (and therefore, the entire orbit) is characterized by $4N-4$ real parameters: $N$ values of $v_a$,
$N-1$ values of $u_a$, $a>1$, $N-1$ phases $\xi_a$, $a>1$, and $N-2$ phases $\eta_a$, $a>2$. 

A point with at least one $u_a \not = 0$ corresponds to the charge-breaking vacuum,
in which the electroweak symmetry is broken completely and the photon acquires mass. For a neutral vacuum, we must set all $u_a = 0$. Thus, the representative point of a generic neutral orbit is
\be
\label{Phi:neutral}
\lr{\phi_1 } = \doublet{0}{v_1}\,,\quad 
\lr{\phi_a } = \doublet{0}{v_a e^{i\xi_a}}\,,\quad a > 1
\ee
This point is characterized by $N$ parameters $v_a$ and $N-1$ phases $\xi_a$, making the dimensionality of the neutral orbit space equal to $2N-1$, which is $2N-3$ units less than the dimensionality of the entire orbit space.

In addition to the electroweak $SU(2)\times U(1)$ transformations, which do not affect the parameters of the potential, one can consider the $SU(N)_H$ group of transformations that mix the doublets without affecting their intradoublet structure (index $H$ stands for ''horizontal" in contrast to the ''vertical" transformations that mix the intradoublet structure). These transformations transform a given Higgs potential to another Higgs potential with different coefficients. Such a transformation is called a reparametrization transformation, (or a horizontal space transformation, or a Higgs-basis change).

These transformations reparametrize the potential, but leave the physical observables invariant \cite{CP, Ginzburg:2004vp,haber}. The same applies to anti-unitary transformations. Therefore, reparametrization transformations consist of all unitary and anti-unitary transformations $U_{ab}$ where
\be 
\phi_a \to U_{ab}\phi_b \quad , \quad \phi_a \to U_{ab}\phi^*_b 
\ee
The antiunitary transformations are also known as generalized $CP$ transformations \cite{ecker-grimus,HaberSilva}.

Reparametrization transformations link different gauge orbits, so if we pick up a generic point in the gauge orbit space and apply all possible reparametrization transformations, we reach many other points in the orbit space. This divides the orbit space into non-intersecting $SU(N)$-orbits.

Using reparametrization transformations, any neutral orbit parametrized by $v_a$, $\xi_a$ can be brought to a "canonical
form" (also known as Higgs basis)
\be
\label{Phi:neutral:canonical}
\lr{\phi_1 } = \doublet{0}{v}\,,\quad 
\lr{\phi_a } = \doublet{0}{0}\ \mbox{for} \quad a > 1\,, \quad v^2 \equiv \sum_a v_a^2 
\ee
Depending on which doublet gets the non-zero value, there will be $N$ equivalent canonical forms. 

Equivalently, any point in the charge-breaking orbit space can be brought to its own canonical form;
\be
\label{Phi:charge-breaking:canonical}
\lr{\phi_1 } = \doublet{0}{v},
\lr{\phi_2 } = \doublet{u}{0},
\lr{\phi_a } = \doublet{0}{0}\ \mbox{for} \quad a > 2\,, \quad v^2 \equiv \sum_a v_a^2\, , u^2 \equiv \sum_a u_a^2
\ee
There are $N(N-1)/2$ such canonical choices. To avoid double counting of equivalent canonical forms within such choices we can restrict $v^2\ge u^2$.
The neutral orbit space corresponds to the limit $u^2\to 0$. The other extremum, $u^2=v^2$, corresponds to a rather special space of "maximally charge-breaking" (MCB) vacua.

\subsection{$K$-matrix formalism}

One could define the electroweak-scalar bilinears as components of a complex hermitian $N\times N$ matrix \cite{nagel,heidelberg,nishi2006}:
\be
\label{K:def}
K_{ab} \equiv (\phi^\dagger_b\phi_a)
\ee
According to \cite{heidelberg,nishi:nhdm} the $K$-matrix has the following properties: 
\begin{itemize}
\item
It is a hermitian positive-semidefinite matrix.
\item 
Its rank is $2$ for a charge-breaking vacuum and $1$ for a neutral vacuum. This is because we are dealing with electroweak doublets and not higher multiplets.
\end{itemize}



The fact that the $K$-matrix is a hermitian and rank $2$ matrix defines the dimensions of the neutral and charge breaking orbit space. It has at most two non-zero (and positive) eigenvalues, and the other $N-2$ eigenvalues are zero. In the case of a neutral vacuum, the $K$-matrix has only one non-zero (positive) eigenvalue and $N-1$ zeros. 

If rank$(K)=2$, then among the $N$ rows (columns) of $K$, there are at most two linearly independent rows (columns). Since this is an arbitrary choice, we require them to be the first and second rows (columns). All the other rows (columns) can be rewritten as linear combinations of lines $1$ and $2$.

Since $K$ is hermitian, these expansion coefficients are determined by the elements of the linearly independent lines \cite{NHDM2010}. Therefore, the number of algebraically independent gauge invariants $K_{ab}=\phi^\dag_b\phi_a$ can be counted from
\bea 
&& K_{1a}=\phi^\dag_a\phi_1, \quad a=1,\ldots,N    \nonumber  \\
&& K_{2b}=\phi^\dag_b\phi_2, \quad b=2,\ldots,N      
\label{minimal:K}
\eea
where $K_{11}$ and $K_{22}$ are real while the rest are complex. Therefore the dimension of the charge breaking orbit space is $4N-4$ [$2N$ from $K_{1a}$ + $2(N-1)$ from $K_{2b}$ - $2$ considering that $K_{11}$ and $K_{22}$ are real].

In the neutral orbit space rank$(K)=1$ and the first row/column is the linearly independent row/column, and therefore the dimension of the neutral orbit space is $2N-1$ [$2N$ from $K_{1a}$ - $1$ considering that $K_{11}$ is real].

\subsection{Adjoint representation}\label{section:adjoint}

The adjoint representation of the group $SU(N)_H$, is another way to look at the orbit space of the $N$-Higgs-doublet model.
Since the $K$-matrix (\ref{K:def}) is a hermitian $N\times N$ matrix, it can be written as a linear combination of other hermitian $N\times N$ matrices;
\be
K \equiv r_0 \cdot {\sqrt{\!{2 \over N(N-1)}}}\,{\bf 1}_N + r_i \lambda_i\,, \quad i = 1, \dots , N^2-1  \\
\label{K:decomposed}
\ee
where $\lambda_i$ are generators of $SU(N)$ \cite{NHDM2010}. 
The coefficients $r_0$ and $r_i$ can be calculated from the $K$-matrix: 
\be
r_0 = {\sqrt{\!{N-1\over 2N}}}\Tr\,K\,,\quad r_i = {1\over 2}\Tr[K\lambda_i]
\ee

The space of all possible real vectors $r^\mu \equiv (r_0, r_i)$ is called the adjoint space.
(The notation $r^\mu$ is similar to the Minkowski-space formalism developed for 2HDM in \cite{nishi2006,ivanov1,ivanov2,nishi2008}.)

The space of positive semidefinite $K$-matrices with rank $2$ is called the ''$K$-space". The mapping (\ref{K:decomposed}) from the $K$-space to the adjoint space $r^\mu$ defines a manifold called orbit space \cite{nishi:nhdm}.

The orbit space has a complicated shape. In 2HDM the $K$-matrix, a hermitian $2 \times 2$ matrix, automatically satisfies rank$K \le 2$, and the positive-semidefiniteness requires 
\be
\label{K:conditions:2HDM}
\Tr K \ge 0\,,\quad (\Tr K)^2 - \Tr[K^2] \ge 0 
\ee
which in the adjoint representation translates to
\be
\label{LC:def}
r_0\ge 0\,,\quad r_0^2 \ge r^2_i
\ee
which represents (the surface and the interior of) the forward light-cone in $\mathbb{R}^4$.

With more than two doublets, there are extra conditions in addition to (\ref{K:conditions:2HDM});
\be 
\frac{N-2}{2(N-1)} \leq \left(\frac{r_i}{r_0}\right)^2 \leq 1 
\label{orbit-space-definitions}
\ee
The full conditions are formulated in \cite{NHDM2010}.

The orbit space still lies inside the forward light-cone, but it occupies only a certain region inside it.

\subsection{Geometric description of the orbit space}\label{subsec:geometric}

Here we describe some geometric properties of the orbit space of NHDM in the adjoint space of vectors $r^\mu$, which can be deduced from (\ref{orbit-space-definitions}).

From $r^2_0 \geq r^2_i$, one could see that the orbit space has a conical shape. For convenience, we switch to the $(N^2-1)$-dimensional space of normalized vectors $n_i \equiv r_i/r_0$ where the neutral orbit space lies on the surface of the unit sphere $\vec n^2 = 1$, and the charge-breaking orbit space lies in a region inside it. 

The condition
\be 
\frac{N-2}{2(N-1)} \leq n^2_i \leq 1 
\ee
shows that the orbit space is actually located inside a conical shell lying between two coaxial cones, the "forward light-cone" and a certain inner cone. Note that the orbit space does not occupy the whole space between these two cones. In the case of 2HDM, $N=2$ the inner cone disappears, and the orbit space fills the entire cone. In NHDM, $N>2$, this inner cone appears as shown in Figure \ref{orbit-pic}.

\begin{figure} [ht]
\centering
\includegraphics[height=4cm]{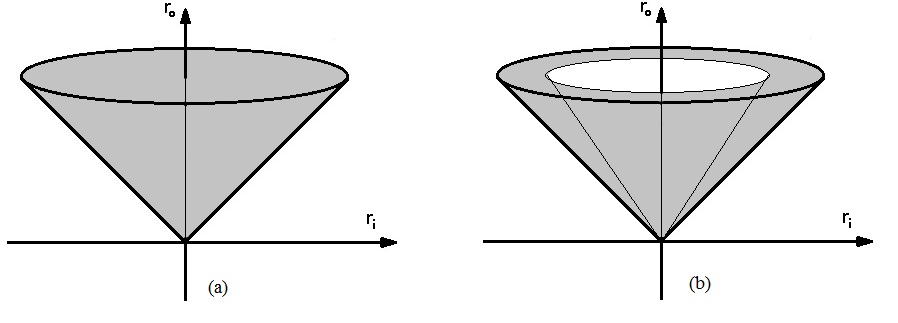}
\caption{The orbit space of 2HDM (a) versus the orbit space of NHDM (b)}
\label{orbit-pic}
\end{figure}

\subsection{The orbit space in 3HDM}\label{potential-3hdm-section}

In this Section we study the orbit space of the three-Higgs-doublet models, since certain examples of 3HDM will be presented in the next Chapters.

From (\ref{K:decomposed}) we have
\be
K \equiv r_0 \cdot {1\over\sqrt{3}}{\bf 1}_3 + r_i \lambda_i 
\ee
The standard expression for the Gell-Mann matrices is the following:
\bea
&& \lambda_1 = \left(\begin{array}{ccc} 0&1&0 \\ 1&0&0 \\ 0&0&0 \end{array}\right)\,,\quad
\lambda_2 = \left(\begin{array}{ccc} 0&-i&0 \\ i&0&0 \\ 0&0&0 \end{array}\right)\,,\quad
\lambda_3 = \left(\begin{array}{ccc} 1&0&0 \\ 0&-1&0 \\ 0&0&0 \end{array}\right) \nonumber\\[3mm]
&& \lambda_4 = \left(\begin{array}{ccc} 0&0&1 \\ 0&0&0 \\ 1&0&0 \end{array}\right)\,,\quad
\lambda_5 = \left(\begin{array}{ccc} 0&0&i \\ 0&0&0 \\ -i&0&0 \end{array}\right)\,,\quad
\lambda_6 = \left(\begin{array}{ccc} 0&0&0 \\ 0&0&1 \\ 0&1&0 \end{array}\right) \nonumber\\[3mm]
&&\lambda_7 = \left(\begin{array}{ccc} 0&0&0 \\ 0&0&-i \\ 0&i&0 \end{array}\right)\,,\quad
\lambda_8 = {1\over\sqrt{3}}\left(\begin{array}{ccc} 1&0&0 \\ 0&1&0 \\ 0&0&-2 \end{array}\right) 
\eea
The explicit expressions for the field bilinears are:
\bea
&& r_0 = {1 \over\sqrt{3}} \Tr K = {(\phi_1^\dagger\phi_1) + (\phi_2^\dagger\phi_2) + (\phi_3^\dagger\phi_3)\over\sqrt{3}} \\
&& r_i = {1\over 2}\Tr[K \lambda_i]\,, \quad r_3 = {(\phi_1^\dagger\phi_1) - (\phi_2^\dagger\phi_2) \over 2}\,,\quad
r_8 = {(\phi_1^\dagger\phi_1) + (\phi_2^\dagger\phi_2) - 2(\phi_3^\dagger\phi_3) \over 2\sqrt{3}} \quad
\nonumber\\
&&r_1 = \Re(\phi_1^\dagger\phi_2)\,,\ r_2 = \Im(\phi_1^\dagger\phi_2)\,,\
r_4 = \Re(\phi_3^\dagger\phi_1) \nonumber\\[2mm] 
&&r_5 = \Im(\phi_3^\dagger\phi_1)\,,\
r_6 = \Re(\phi_2^\dagger\phi_3)\,,\ r_7 = \Im(\phi_2^\dagger\phi_3) \nonumber
\eea
Note that $\lambda_5$ and therefore $r_5$ have opposite sign to the usual convention. This is done to show the symmetry upon cyclic permutation.

Recall that the orbit space of 3HDM sits between two coaxial shells
\be 
\frac{1}{4} \leq n^2_i \leq 1 \quad , \quad n_i = \frac{r_i}{r_0} \quad , \quad (i= 0, \cdots , 8)
\ee
As mentioned before, the orbit space has a complicated shape in the space of $r$'s. However, it is possible to project the orbit space onto a 3-dimensional space. We will see in Chapter \ref{chapter4} that the octahedral potential is linear in terms of the parameters $x$, $y$ and $z$ introduced as: 
\be
x = n^2_1 + n^2_4 + n^2_6 \quad , \quad y = n^2_2 + n^2_5 + n^2_7 \quad , \quad z = n^2_3 + n^2_8
\ee
It turns out that projecting the orbit space onto the 3-dimensional  $(x,y,z)$-space, gives it a much simpler shape.

\subsubsection{The orbit space in the $(x,y,z)$-space}

The definitions of $x$, $y$ and $z$ satisfy $1/4 \leq x+ y+ z \leq 1$, meaning that the orbit space sits inside a 3-dimensional pyramid in the $(x,y,z)$-space. However it does not fill the entire pyramid. We do not have the exact shape of the orbit space yet. Here we present the constraints that we have found on the borders of the orbit space inside the pyramid, which is represented in Figure (\ref{fig-xyz2}).

\begin{figure} [ht]
\centering
\includegraphics[height=7cm]{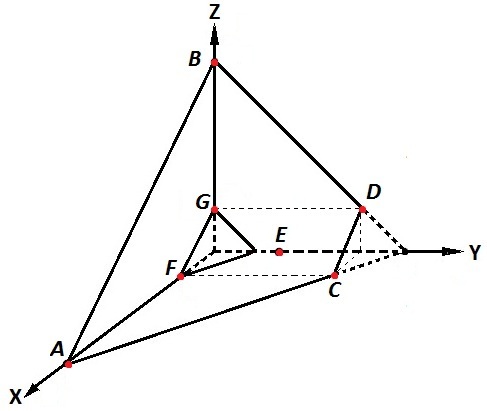}
\caption{The pyramid, with the apex at $(0,0,0)$, in the $(x,y,z)$-space where the orbit space projection fits between the shown points. Note that the base of the pyramid corresponds to the neutral orbit space.}
\label{fig-xyz2}
\end{figure}

\begin{itemize}
\item
{\boldmath $x_{min}$ and $x_{max}$ on $X$-axis (points $F$ and $A$)}\\
On the $XZ$-plane $y=0$ and therefore $1/4 \leq x+z \leq 1$, which represents a trapezoid. With the extra condition $z=0$ on the $X$-axis the minimum and maximum values of $x$ are respectively $x_F =1/4$ and $x_A =1$, which are realized through
\be
F : \doublet{0}{v}\doublet{v\sin(\pi/3)}{v\cos(\pi/3)}\doublet{v \sin(-\pi/3)}{v \cos(-\pi/3)} \quad , \quad A : \doublet{0}{v}\doublet{0}{v}\doublet{0}{v} \nonumber
\ee

\item
{\boldmath $z_{min}$ and $z_{max}$ on $Z$-axis (points $G$ and $B$)}\\
On the $XZ$-plane $1/4 \leq x+z \leq 1$, with the extra condition $x=0$ the minimum and maximum values of $z$ on the $Z$-axis are respectively $x_G =1/4$ and $x_B =1$, which are realized through
\be
G : \doublet{0}{v}\doublet{v}{0}\doublet{0}{0} \quad , \quad B : \doublet{0}{v}\doublet{0}{0}\doublet{0}{0} \nonumber
\ee

\item
{\boldmath $y_{unique}$ on $Y$-axis (point $E$)}\\
On the $Y$-axis $z=0$, which in terms of the doublets translates to $|\phi_a^{\dagger}\phi_a| = v^2$. The most general case satisfying this condition is realized through v.e.v.'s of the form $$ \doublet{0}{v} \doublet{v\sin\alpha}{v\cos\alpha e^{i\eta_2}}\doublet{v\sin\beta e^{i\eta_3}}{v\cos\beta e^{i\eta_4}}$$ With this v.e.v we calculate 
\be 
x=\frac{1}{3}\left[\cos^2\alpha\cos^2\eta_2 +\left(\cos\alpha\cos\beta\cos(\eta_4-\eta_2) + \sin\alpha\sin\beta\cos\eta_3  \right)^2 + \cos^2\beta\cos^2\eta_4 \right] \nonumber
\ee
Since we are on the $Y$-axis, we need $x=0$. The following four sets of conditions vanish $x$:
\bea
1: && \cos\eta_2=0, \quad \cos\eta_4=0, \quad \sin\alpha\sin\beta\cos\eta_3=-\cos\alpha\cos\beta  \nonumber \\
2: && \cos\alpha=0, \quad \cos\beta=0, \quad \cos\eta_3=0  \nonumber \\
3: && \cos\alpha=0, \quad \cos\eta_4=0, \quad \sin\beta\cos\eta_3=0  \nonumber \\
4: && \cos\eta_2=0, \quad \cos\beta=0, \quad \sin\alpha\cos\eta_3=0  \nonumber
\eea
Substituting any of these sets in $y$
\be 
y = \frac{1}{3}\left[\cos^2\alpha\sin^2\eta_2 + \left(\cos\alpha\cos\beta\sin(\eta_4-\eta_2) + \sin\alpha\sin\beta\sin\eta_3  \right)^2 + \cos^2\beta\sin^2\eta_4 \right] \nonumber
\ee
results in $y=1/3$, which proves that this point is the only point the orbit space touches the $Y$-axis. This point is realized at point $E$ with
\be 
E : \doublet{0}{i v}\doublet{v\sin\alpha}{v\cos\alpha}\doublet{v\sin\beta e^{i\eta_3}}{v\cos\beta}, \quad \cos\alpha \cos\beta+ \sin\alpha\sin\beta\cos\eta_3 = 0 \nonumber
\ee

\item
{\boldmath $y_{max}$ in $YZ$-plane (point $D$)}\\
In the $YZ$-plane $x=0$, which in terms of the doublets translates to $\Re(\phi_a^{\dagger}\phi_b) = 0$ with $( a,b = 1,2,3) $. The most general case satisfying this condition is realized through v.e.v.'s of the form $$ \doublet{0}{v_1}\doublet{0}{iv_2}\doublet{0}{0}$$ With this v.e.v we calculate 
\be 
y= \frac{\Im^2(\phi_a^{\dagger}\phi_b)}{r_0^2} = \frac{3(v_1v_2)^2}{({v_1}^2+{v_2}^2)^2}  \nonumber
\ee
which has maximum value $y_{max}=3/4$ at $v_1 =v_2$.
Therefore point $C$ in the $YZ$-plane realizes a vertex of the orbit space with
\be 
D : \doublet{0}{v}\doublet{0}{iv}\doublet{0}{0} \nonumber
\ee

\item
{\boldmath $y_{max}$ in $XY$-plane (point $C$)}\\
In the $XY$-plane $z=0$, which in terms of the doublets translates to $|\phi_a^{\dagger}\phi_a| = v^2$. The most general neutral v.e.v. satisfying this condition is realized through $$ \doublet{0}{v}\doublet{0}{ve^{i\eta_2}}\doublet{0}{ve^{i\eta_3}}$$ With this v.e.v we calculate 
\be 
y= \frac{1}{3}\left(\sin^2\eta_2 + \sin^2(\eta_3 - \eta_2)+ \sin^2 \eta_3  \right) \nonumber
\ee
Trying to maximize $y$, one gets the condition $\sin(2\eta_2)=-\sin(2\eta_3)=\sin2(\eta_3-\eta_2)$, which requires $\eta_2=-\eta_3$ or $\eta_2=n\pi/2 \pm \eta_3$, for $\eta_2=n\pi/2$ or $\eta_2=n\pi/2 + \pi/6$. Substituting each of  these four conditions in $y$, one gets $y_{max}=3/4$, which is realized in
\be 
C : \doublet{0}{v}\doublet{0}{ve^{2i\pi/3}}\doublet{0}{ve^{-2i\pi/3}} \nonumber
\ee

\item
{\boldmath The border line at $y=3/4$ (edge $CD$)}\\
The border of the orbit space at $y=3/4$ represents neutral vacua, therefore the upper components of the most general v.e.v. satisfying this condition are all zero. With $y=3/4$, we also have the condition $x+z=1/4$ which is satisfied in the most general case with
\be 
\doublet{0}{v_1}\doublet{0}{v_2e^{i\eta}}\doublet{0}{v_2e^{-i\eta}} \nonumber
\ee
With this v.e.v we calculate 
\bea 
y &=& \frac{\Im^2(\phi_a^{\dagger}\phi_b)}{r_0^2} = \frac{3}{(1+2r)^2}\left[2r\sin^2\eta + r^2\sin^2(2\eta) \right] \nonumber \\
&& \mbox{with}\quad  \left(\frac{v_2}{v_1}\right)^2=r \nonumber
\eea
To maximize $y$, one requires $\cos(2\eta)=-\frac{1}{2r}$. Substituting this value in $y$, results in $y=3/4$ for any value of $r$, which means that $x+z=1/4$ for any value of $r$. This condition represents the edge $CD$.

\item
{\boldmath The border line in $XZ$-plane (edge $AB$)}\\
In the $XZ$-plane $y=0$, which in terms of the doublets translates to $\Im(\phi_a^{\dagger}\phi_b) = 0$. The most general case satisfying this condition is realized through v.e.v.'s of the form $$ \doublet{0}{v_1}\doublet{0}{v_2}\doublet{0}{v_3}$$ With this v.e.v we calculate 
\be 
x= \frac{3(v_1v_2)^2+3(v_2v_3)^2+(v_3v_1)^2}{(v_1^2+v_2^2+v_3^2)^2} \quad , \quad z= \frac{v_1^4+v_2^4+v_3^4-v_1^2v_2^2-v_2^2v_3^2-v_3^2v_1^2}{(v_1^2+v_2^2+v_3^2)^2} \nonumber
\ee
For every value of $v_1, v_2$ and $v_3$, $x+z=1$, which represents the $AB$ edge in Figure (\ref{fig-xyz}).

\item
{\boldmath The border line in $YZ$-plane (edge $BD$)}\\
In the $YZ$-plane $x=0$, the most general case satisfying this condition, $\Re(\phi_a^{\dagger}\phi_b) = 0$, is realized through $$ \doublet{0}{v_1}\doublet{0}{iv_2}\doublet{0}{0}$$ which is a $CP$ violating minima. With this v.e.v we calculate 
\be 
y= \frac{3(v_1v_2)^2}{({v_1}^2+{v_2}^2)^2}\quad , \quad z= \frac{v_1^4 +v_2^4- v_1^2v_2^2}{{v_1}^2+{v_2}^2)^2} \nonumber
\ee
Introducing $\left(\frac{v_2}{v_1}\right)^2=r$, and rewriting $y$ and $z$ in terms of $r$
\be 
y= \frac{3r}{(1+r)^2} \quad , \quad z= \frac{1+r^2-r}{(1+r)^2} \nonumber
\ee
one could see that for any value of $r$, $y(r) + z(r) =1$, which represents the border line in the $YZ$-plane, the $BD$ edge.

\item
{\boldmath The border line in $XY$-plane (edge $AC$)}\\
In the $XY$-plane $z=0$, the most general v.e.v. satisfying this condition, $|\phi_a^{\dagger}\phi_a| = v^2$, is realized through $$ \doublet{0}{v}\doublet{0}{ve^{i\eta_2}}\doublet{0}{ve^{i\eta_3}}$$ With this v.e.v we calculate 
\be 
y= \frac{1}{3}\left(\sin^2\eta_2 + \sin^2(\eta_3 - \eta_2)+ \sin^2 \eta_3  \right), \quad x= \frac{1}{3}\left(\cos^2\eta_2 + \cos^2(\eta_3 - \eta_2)+ \cos^2 \eta_3  \right) \nonumber
\ee
For every value of $\eta_2$ and $\eta_3$, $x+y=1$, which represents the $AD$ edge in Figure (\ref{fig-xyz}).
\end{itemize}

So it seems that the orbit space lies inside the shape shown in Figure (\ref{fig-xyz}).

\begin{figure} [ht]
\centering
\includegraphics[height=7cm]{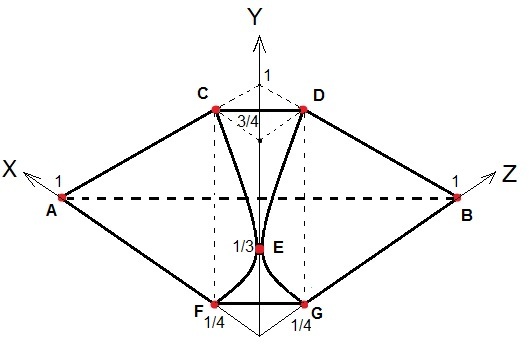}
\caption{The borders of the orbit space of 3HDM in the $(x,y,z)$-space.}
\label{fig-xyz}
\end{figure}

\section{The scalar potential in NHDM}\label{potential-nhdm-section}

In this Section we show how the scalar potential of multi-Higgs-doublet models is presented in the orbit space\footnote{The idea to switch to the orbit space in order to simplify the task of a group-invariant potential minimization is rather old, see \cite{sartori-old} and references therein.}. The generic Higgs potential in NHDM in terms of the doublets is written in a tensorial form \cite{CP,haber}:
\be
\label{V:tensorial}
V = Y_{\bar{a}b}(\phi^\dagger_a \phi_b) + Z_{\bar{a}b\bar{c}d}(\phi^\dagger_a \phi_b)(\phi^\dagger_c \phi_d) 
\ee
where all indices run from $1$ to $N$. The potential is constructed from $N^2$ bilinears $(\phi^\dagger_a \phi_b)$. Coefficients in the quadratic and quartic parts of the potential are grouped into components of Hermitian tensors $Y_{\bar{a}b}$
and $Z_{\bar{a}b\bar{c}d}$, respectively; there are $N^2$ independent components in $Y$ and $N^2(N^2+1)/2$ independent components in $Z$.

In the adjoint space the Higgs potential has the following form: 
\be
\label{V:general}
V = -M_\mu r^\mu + {1\over 2}\Lambda_{\mu\nu}r^\mu r^\nu \equiv 
- (M_0 r_0 + M_i r_i) + {1\over 2}\left(\Lambda_{00}r_0^2 + 2 \Lambda_{0i}r_0 r_i + \Lambda_{ij}r_ir_j\right) 
\ee
where the tensor $Y_{\bar{a}b}$ is written as the scalar $M_0$ and the vector $M_i$ in the adjoint space.
The number of free parameters in $M_0$ and the vector $M_i$ is $[1+ (N^2-1) =] N^2$.
The tensor $Z_{\bar{a}b\bar{c}d}$ is represented as the scalar $\Lambda_{00}$, vector $\Lambda_{0i}$ and symmetric tensor $\Lambda_{ij}$ in the adjoint space. The number of free parameters here is $[1+(N^2-1)+N^2(N^2-1)/2 =] N^2(N^2+1)/2$. 


Using the adjoint space $r^\mu$ makes the analysis of the Higgs potential in NHDM much simpler. It is easy to formulate and prove conditions for existence of a symmetry. We will present examples of this in Chapter \ref{chapter3} and \ref{chapter4}.

\chapter{Frustrated symmetries in NHDM}\label{chapter3}

In this Chapter we introduce a peculiar feature that arises in models with more than two Higgs doublets. We show that there exists certain explicit symmetries of the potential, which are necessarily broken whenever the electroweak symmetry breaking takes place. We term them ``frustrated symmetries'' because of their similarity to the phenomenon of
geometric frustration in condensed-matter physics. We think that such symmetries can be an interesting option when building models with desired properties beyond the Standard model.

The structure of the Chapter is the following; In Section \ref{frustrated} we explain how frustrated symmetries arise in NHDM with $N >2$. In Section \ref{3HDM-examples} we give several examples of such symmetries in 3HDM. The results of this Chapter are published in \cite{frustrated-paper}.

\section{General observations}\label{frustrated}

Suppose we have constructed a Higgs potential with some explicit symmetry.
That is, we have found coefficients $M_i$, $\Lambda_{0i}$ and $\Lambda_{ij}$ in the potential:
\be
V = - M_0 r_0 - M_i r_i + {1 \over 2}\Lambda_{00} r_0^2 + \Lambda_{0i} r_0 r_i + {1 \over 2}\Lambda_{ij} r_i r_j 
\ee
such that the potential is left invariant under group $G$ of 
transformation between doublets $\phi_a$. We will restrict ourselves to reparametrization transformations which are either  unitary or antiunitary transformations, and leave the Higgs kinetic term invariant. This means that $r_0$ is invariant under $G$, while the components $r_i$ are transformed by orthogonal transformations. Note that:
\be
r_0 = \sqrt{\frac{N-1}{2N}} \sum_a \phi_a^\dagger \phi_a \quad , \quad r_i= \frac{1}{2} \sum_{a,b} \phi_a^\dagger \lambda_{a,b}^i \phi_b
\ee
where $\lambda_i$'s are generators of $SU(N)$.

In particular, it is possible to devise such a symmetry group $G$ of transformations of doublets, 
which can be implemented via a non-trivial tensor  $\Lambda^{(G)}_{ij}$,
but not in a vector $M_i$ or $\Lambda_{0i}$. Let's study a potential with such symmetry; 
\be
\label{potential2}
V^{(G)} = - M_0 r_0 + {1 \over 2}\Lambda_{00} r_0^2 + {1 \over 2}\Lambda^{(G)}_{ij} r_i r_j
\ee


After EWSB, the doublets acquire some vacuum expectation values;
\be 
\lr{\phi_a} =\frac{1}{\sqrt{2}} \doublet{u_a}{v_a}
\ee

Therefore $\lr{r_0}$ proportional to $\lr{\phi_a^\dagger \phi_a}$ has a non-zero value and $\lr{r_i}$ is proportional to $\lr{\phi_a^\dagger \phi_b}$.

If all $\lr{r_i} = 0$, then for any pair of doublets $\lr{\phi_a^\dagger \phi_b}= 0$. Therefore the potential, depending only on $r_0 \neq 0$ which is invariant under $G$, conserves the symmetry $G$ after EWSB, and the vacuum remains invariant under $G$.

If at least one component $\lr{r_i} \not = 0$, then the vacuum spontaneously breaks the symmetry group $G$, either completely or down to a proper subgroup, simply because by construction no non-trivial vector $\lr{r_i}$ can be $G$-symmetric.

Thus, in order to see if a given symmetry has a chance to be conserved in a non-trivial vacuum,
we need to check whether or not the point $\vec n = \lr{\vec r}/\lr{r_0} = 0$ belongs to the orbit space of the model.

In Chapter \ref{chapter2} we showed that the point $\vec n = 0$ belongs to the orbit space 
only in the two-Higgs-doublet model. Therefore, in 2HDM any symmetry imposed on the potential can, in principle,
be conserved in the vacuum state provided we have chosen appropriate coefficients of the potential.
Indeed, if one acts on the state;
\be
\lr{\phi_1} =\frac{1}{\sqrt{2}} \doublet{0}{v} \quad , \quad \lr{\phi_2} =\frac{1}{\sqrt{2}} \doublet{v}{0} \label{2HDMvev}
\ee
with any (anti)unitary transformation which mixes the doublets, one arrives at the same state up to an electroweak transformation. That is, the corresponding gauge orbit is $G$-symmetric.

In NHDM with $N>2$, the point $\vec n = 0$ cannot be realized through doublets, since
\be
\dfrac{N-2}{2(N-1)} \leq \vec{n}^{2} \leq 1 
\ee
Therefore, the minimum of the $G$-symmetric potential (\ref{potential2}) unavoidably breaks the explicit symmetry after EWSB.

We call such a symmetry a {\bf frustrated symmetry} for the following reason: setting all $n_i = 0$ is equivalent to setting all products $(\fd_a \f_b)$  to zero while keeping all the norms $|\phi^\dagger_a \phi_a|$ non-zero and equal.
This is impossible to achieve with more than two doublets, simply because we have too little freedom of where
to place vacuum expectation values inside all three doublets. 

Even if the first two doublets are chosen as in (\ref{2HDMvev}), the third doublet of the same norm will have a non-zero product with the first or the second doublet.
\be
\lr{\phi_1} =\frac{1}{\sqrt{2}} \doublet{0}{v} \quad , \quad \lr{\phi_2} =\frac{1}{\sqrt{2}} \doublet{v}{0} \quad , \quad \lr{\phi_3} =\frac{1}{\sqrt{2}} \doublet{?}{?}
\ee

In other words, although we can "optimize" the v.e.v.'s in any pair of doublets (equal norms, zero product),
these optimal configurations are mutually incompatible in three or more doublets and cannot be satisfied simultaneously.
This is precisely what frustration in condensed-matter physics is about.

\section{Examples of frustrated symmetries in 3HDM}\label{3HDM-examples}

In this section we present examples of frustrated symmetries we have found in the three-Higgs-doublet model. 

\subsection{First example}
The simplest case is given by the potential which is symmetric
under any $SU(3)$ rotation and mixes the three doublets. It corresponds to the exceptional case $\Lambda_{ij} =0$:
\bea
V &=& -M_0r_0 + \Lambda_0 r^2_0 \nonumber \\
 &=& -\frac{M_0}{\sqrt{3}} (\fd_1 \f_1 + \fd_2 \f_2 + \fd_3 \f_3) + \frac{\Lambda_0}{3} (\fd_1 \f_1 + \fd_2 \f_2 + \fd_3 \f_3)^2 
\eea 
Clearly, no non-trivial vacuum can be symmetric under the entire $SU(3)$ group, even up to EW transformations. Here we study the case of:
\be
\lr{\phi_a} =\frac{1}{\sqrt{2}} \doublet{0}{v}
\ee
Let's study the excitation around this point:
\be
\phi_a = \doublet{w_{a}^{+}}{\frac{1}{\sqrt{2}}(v+h_{a}+i\eta_{a})}
\ee
And rewrite the potential in terms of the new parameters:
\bea
V &&= \left[ -\frac{M_0}{2\sqrt{3}}(2v)+\frac{\Lambda_0}{12}(12v^{3}) \right] (h_{1}+h_{2}+h_{3})\nonumber \\
&&+ \left[ -\frac{M_0}{2\sqrt{3}}+\frac{\Lambda_0}{12}(10v^{2}) \right] (h_{1}^{2}+h_{2}^{2}+h_{3}^{2})\nonumber \\
&& + \left[-\frac{M_0}{2\sqrt{3}}+\frac{\Lambda_0}{12}(6v^{2}) \right] (\eta_{1}^{2}+\eta_{2}^{2}+\eta_{3}^{2})\nonumber \\
&&+ \left[ \frac{\Lambda_0}{12}(8v^{2}) \right] (h_{1}h_{2}+h_{2}h_{3}+h_{3}h_{1})\nonumber \\
&& + \left[ -\frac{M_0}{\sqrt{3}}+\frac{\Lambda_0}{12}(12v^{2}) \right] (w_{1}^{+}w_{1}^{-}+w_{2}^{+}w_{2}^{-}+w_{3}^{+}w_{3}^{-})\nonumber \\
&& + \quad \mbox{higher order terms}
\eea
From vanishing linear terms, one gets $v^{2}=\frac{M_0}{\sqrt{3}\Lambda_0}$, which results in the following mass spectrum;
\bea
m^2(\eta_a)&:&   0  \nonumber \\
m^2( w_a^{\pm})&:&  0 \nonumber \\
m^2(h_a)&:& 0, \quad \frac{M_0}{3\sqrt{3}} \quad \mbox{(double degenerate)}
\eea
The massless bosons are of course expected, since the original symmetry of the potential is continuous (Goldstone Theorem).

\subsection{Second example}

The second example is given by the "tetrahedral" symmetry in 3HDM, defined by the following scalar potential:
\bea
V & = &  - M_0 r_0 + \Lambda_0 r_0^2 + \Lambda_1(r_1^2+r_4^2+r_6^2) + \Lambda_2(r_2^2+r_5^2+r_7^2)  \nonumber \\
&& + \Lambda_3(r_3^2+r_8^2) + \Lambda_4(r_1r_2 - r_4r_5 + r_6r_7) 
\label{tetrahedral3HDM}
\eea
In terms of doublets, this potential has the form:
\bea
V&=& - {M_0 \over \sqrt{3}} (\fd_1 \f_1 + \fd_2 \f_2 + \fd_3 \f_3) \nonumber \\
&&+ {\Lambda_0 \over 3} \left[(\fd_1 \f_1) + (\fd_2 \f_2) + (\fd_3 \f_3)\right]^2 \nonumber \\
&&+ {\Lambda_1 \over 3} \left[(\fd_1 \f_1)^2 + (\fd_2 \f_2)^2 + (\fd_3 \f_3)^2-(\fd_1 \f_1)(\fd_2 \f_2)-(\fd_2 \f_2)(\fd_3 \f_3)-(\fd_3 \f_3)(\fd_1 \f_1)\right]\nonumber \\
&&+ \Lambda_2 \left[(\Re \fd_1\f_2)^2 + (\Re \fd_2\f_3)^2 + (\Re \fd_3\f_1)^2\right]  \nonumber \\
&&+ \Lambda_3 \left[(\Im \fd_1\f_2)^2 + (\Im \fd_2\f_3)^2 + (\Im \fd_3\f_1)^2\right] \nonumber \\
&&+ \Lambda_4 \left[(\Re \fd_1\f_2) (\Im \fd_1\f_2) + (\Re \fd_2\f_3) (\Im \fd_2\f_3) + (\Re \fd_3\f_1) (\Im \fd_3\f_1)\right]
\label{tetrahedral3HDMfields}
\eea
This potential is symmetric under the chiral tetrahedral group (Appendix B). 

Note that the absence of any term linear in $r_i$ in (\ref{tetrahedral3HDM}) is due to the fact that no such term respects the tetrahedral symmetry. 

For $\Lambda_2 <0$ we have a vacuum of the form:
\be
\lr{\phi_a} =\frac{1}{\sqrt{2}} \doublet{0}{v}
\ee
And study the excitations around this point:
\be
\phi_a = \doublet{w_{a}^{+}}{\frac{1}{\sqrt{2}}(v+h_{a}+i\eta_{a})}
\ee
The potential has the following form in terms of the new parameters:
\bea
V &&=\left[ \dfrac{-M_{0}}{2\sqrt{3}}(2v)+\frac{\Lambda_{0}}{12}(12v^{3})+\frac{\Lambda_{2}}{4}(4v^{3}) \right](h_{1}+h_{2}+h_{3})\nonumber \\
&& +\left[ \overbrace{ \frac{-M_0}{2\sqrt{3}}+\frac{\Lambda_0}{12}(10v^2)+\frac{\Lambda_1}{12}(4v^2)+\frac{\Lambda_2}{2}(2v^2) }^A  \right] (h_{1}^{2}+h_{2}^{2}+h_{3}^{2})\nonumber \\
&& +\left[\overbrace{ \dfrac{-M_{0}}{2\sqrt{3}}+\frac{\Lambda_{0}}{12}(6v^{2})+\frac{\Lambda_{3}}{4}(2v^{2})}^B \right](\eta_{1}^{2}+\eta_{2}^{2}+\eta_{3}^{2})\nonumber \\
&& +\left[ \overbrace{ \frac{-M_0}{\sqrt{3}}+\frac{\Lambda_0}{12}(12v^2)}^C \right]({w_1}^{+}{w_1}^{-}+{w_2}^{+}{w_2}^{-}+{w_3}^{+}{w_3}^{-})\nonumber \\
&& +\left[ \overbrace{\frac{\Lambda_2}{4}(2v^2)+\frac{\Lambda_4}{4}(-iv^2) }^D \right] ({w_1}^{-}{w_2}^{+}+{w_2}^{-}{w_3}^{+}+{w_3}^{-}{w_1}^{+})\nonumber \\
&& +\left[\overbrace{\frac{\Lambda_{2}}{4}(2v^{2})+\frac{\Lambda_{4}}{4}(iv^{2})}^H \right] (w_{1}^{+}w_{2}^{-}+w_{2}^{+}w_{3}^{-}+w_{3}^{+}w_{1}^{-})\nonumber \\
&& +\left[\overbrace{\frac{\Lambda_{0}}{12}(8v^{2})-\frac{\Lambda_{1}}{12}(4v^{2})+\frac{\Lambda_{2}}{4}(4v^{2})}^{2E} \right] (h_{1}h_{2}+h_{2}h_{3}+h_{3}h_{1})\nonumber \\
&& +\left[\overbrace{\frac{\Lambda_{2}}{4}(2v^{2})-\frac{\Lambda_{3}}{4}(2v^{2})}^{2F}\right](\eta_{1}\eta_{2}+\eta_{2}\eta_{3}+\eta_{3}\eta_{1})\nonumber \\
&& +\left[\overbrace{\frac{\Lambda_{4}}{4}(2v^{2})}^{2G}\right](h_{1}\eta_{2}-h_{2}\eta_{1}+h_{2}\eta_{3}-h_{3}\eta_{2}+h_{3}\eta_{1}-h_{1}\eta_{3})
\label{V2ndExample}
\eea
From the vanishing linear terms $v^{2}=\frac{M_{0}}{\sqrt{3}(\Lambda_{0}+\Lambda_{2})}$, which simplifies the coefficients in (\ref{V2ndExample}):
\begin{eqnarray*}
A=(\frac{\Lambda_{0}+\Lambda_{1}}{3})v^{2} \quad &,& \quad B= (\frac{-\Lambda_{2}+\Lambda_{3}}{2})v^{2}\\
C= (-\Lambda_{2})v^{2} \quad &,& \quad D= (\frac{2\Lambda_{2}-i\Lambda_{4}}{4})v^{2}\\
H= (\frac{2\Lambda_{2}+i\Lambda_{4}}{4})v^{2} \quad &,& \quad 2E= (\frac{2\Lambda_{0}-\Lambda_{1}+3\Lambda_{2}}{3})v^{2}\\
2F= (\frac{\Lambda_{2}-\Lambda_{3}}{2})v^{2} \quad &,& \quad 2G= (\frac{\Lambda_{4}}{2})v^{2}
\end{eqnarray*}
These coefficients result in a mass spectrum of the form;
\bea
m^2( w_a^{\pm})&:&  \frac{6|\Lambda_2| \pm \sqrt{3}\Lambda_4}{4}v^2 \\
m^2(h_a)&:& \frac{v^2}{4} \left[ 2\Lambda_1+|\Lambda_2|+3\Lambda_3 \pm \sqrt{\Delta} \right] \nonumber \\ 
m^2(\eta_a)&:&  \frac{v^2}{4} \left[ 2\Lambda_1+|\Lambda_2|+3\Lambda_3 \pm \sqrt{ \Delta } \right]\nonumber \\
\mbox{with} &&   \Delta =  4\Lambda_1^2 +2|\Lambda_2|^2 +9\Lambda_3^2 -20\Lambda_1|\Lambda_2| -12\Lambda_1\Lambda_3 +30|\Lambda_2|\Lambda_3 +3\Lambda_4^2 \nonumber
\eea
The positivity of mass eigenstates imposes constraints on $\Lambda$'s;
\bea
&-6|\Lambda_{2}|< \sqrt{3} \Lambda_{4} < 6|\Lambda_{2}| &  \\
&|\Lambda_{2}|^2 +3\Lambda^2_{4}< 24(\Lambda_{1}\Lambda_{3} - |\Lambda_{2}|\Lambda_{3} + \Lambda_{1}|\Lambda_{2}|) & \nonumber
\eea

\subsection{Third example}

The third example of frustrated symmetry is the ''octahedral" symmetry in 3HDM, whose potential
is given by;
\be
V=- M_0 r_0 + \Lambda_0 r_0^2+ \Lambda_1(r_1^2+r_4^2+r_6^2)+ \Lambda_2(r_2^2+r_5^2+r_7^2)+ \Lambda_3(r_3^2+r_8^2)
\ee
This potential is symmetric under the octahedral group (Appendix B).

If $\Lambda_2 < 0$, the following v.e.v.'s realize the global minima of this potential;
\be
\lr{\phi_a} =\frac{1}{\sqrt{2}} \doublet{0}{v}
\ee
As usual we check the excitations of the fields around this point:
\be
\phi_a = \doublet{w_{a}^{+}}{\frac{1}{\sqrt{2}}(v+h_{a}+i\eta_{a})}
\ee
then rewrite the potential in terms of the new fields:
\bea
V&&= \left[ \dfrac{-M_{0}}{2\sqrt{3}}(2v)+\frac{\Lambda_{0}}{12}(12v^{3})+\frac{\Lambda_{2}}{4}(4v^{3}) \right](h_{1}+h_{2}+h_{3}) \nonumber \\
&& +\left[\overbrace{ \dfrac{-M_{0}}{2\sqrt{3}}+\frac{\Lambda_{0}}{12}(10v^{2})+\frac{\Lambda_{2}}{4}(2v^{2})+\frac{\Lambda_{1}}{12}(4v^{2})}^A \right] (h_{1}^{2}+h_{2}^{2}+h_{3}^{2})  \nonumber \\
&& +\left[\overbrace{ \dfrac{-M_{0}}{2\sqrt{3}}+\frac{\Lambda_{0}}{12}(6v^{2})+\frac{\Lambda_{3}}{4}(2v^{2})}^B \right](\eta_{1}^{2}+\eta_{2}^{2}+\eta_{3}^{2})  \nonumber \\
&& +\left[ \overbrace{ \dfrac{-M_{0}}{\sqrt{3}}+\frac{\Lambda_{0}}{12}(12v^{2})}^C \right](w_{1}^{+}w_{1}^{-}+w_{2}^{+}w_{2}^{-}+w_{3}^{+}w_{3}^{-})  \nonumber \\
&& +\left[ \overbrace{\frac{\Lambda_{2}}{4}(2v^{2})+}^D\right] (w_{1}^{-}w_{2}^{+}+w_{2}^{-}w_{3}^{+}+w_{3}^{-}w_{1}^{+}+w_{1}^{+}w_{2}^{-}+w_{2}^{+}w_{3}^{-}+w_{3}^{+}w_{1}^{-})  \nonumber \\
&& +\left[\overbrace{\frac{\Lambda_{0}}{12}(8v^{2})-\frac{\Lambda_{2}}{4}(4v^{2})-\frac{\Lambda_{1}}{12}(4v^{2})}^{2E} \right] (h_{1}h_{2}+h_{2}h_{3}+h_{3}h_{1}) \nonumber \\
&& +\left[\overbrace{\frac{\Lambda_{2}}{4}(2v^{2})-\frac{\Lambda_{3}}{4}(2v^{2})}^{2F}\right](\eta_{1}\eta_{2}+\eta_{2}\eta_{3}+\eta_{3}\eta_{1})   
\eea
With $v^{2}=\frac{M_{0}}{\sqrt{3}(\Lambda_{0}+\Lambda_{2})}$, we simplify the coefficients
\bea
A=(\frac{\Lambda_{0}+\Lambda_{2}}{3})v^{2} \quad &,& \quad B= (\frac{-\Lambda_{2}+\Lambda_{3}}{2})v^{2} \nonumber \\ \nonumber
C= (-\Lambda_{2})v^{2} \quad &,& \quad D= (\frac{\Lambda_{2}}{2})v^{2} \nonumber \\
2E= (\frac{2\Lambda_{0}+3\Lambda_{2}-\Lambda_{1}}{3})v^{2} \quad &,& \quad 2F= (\frac{\Lambda_{2}-\Lambda_{3}}{2})v^{2}  \nonumber 
\eea 
The eigenvalues of the mass matrix are:
\bea
m^2( w_a^{\pm})&:& 0, \quad \frac{3|\Lambda_2|}{2}v^2 \quad \mbox{(double degenerate)}   \nonumber \\
m^2(h_a)&:& 0, \quad \frac{6\Lambda_0-8|\Lambda_2|-2\Lambda_1}{3}v^2 \quad , \quad \frac{|\Lambda_2|+\Lambda_1}{3}v^2   \nonumber \\
m^2(\eta_a)&:& 0, \quad (3|\Lambda_2|+3\Lambda_3)v^2 \quad \mbox{(double degenerate)}   
\eea
Positivity of the eigenvalues of the mass matrix puts the following constraints on $\Lambda$'s:
\bea
&\Lambda_{2}<0 \quad , \quad \Lambda_{1}+|\Lambda_{2}|>0 \quad , \quad \Lambda_{0}-|\Lambda_{2}|>0&  \nonumber \\
&\Lambda_{3}>0 \quad , \quad \Lambda_{3}+|\Lambda_{2}|>0 &   
\eea

We note a remarkable phenomenological feature of this model: it is 2HDM-like.
Due to remaining symmetry, it exhibits certain degeneracy in the mass spectrum of the physical Higgs bosons, 
yielding just one mass for both charged Higgs bosons and three different masses for the neutral ones,
which precisely mimics the typical Higgs spectrum of 2HDM. This model is studied in detail in Chapter \ref{chapter4}.


\subsection{Forth example}

The forth example is of an Abelian frustrated symmetry in 3HDM; a $Z_3 \times Z_3$ symmetric potential, which is given by;
\bea
V & = & -\frac{M_0}{\sqrt{3}} \left[(\phi_1^\dagger \phi_1)+ (\phi_2^\dagger \phi_2)+(\phi_3^\dagger \phi_3)\right]
+ \frac{\Lambda_0}{3} \left[(\phi_1^\dagger \phi_1)+ (\phi_2^\dagger \phi_2)+(\phi_3^\dagger \phi_3)\right]^2 \nonumber\\
&&+ \Lambda_1 \left[(\phi_1^\dagger \phi_1)^2+ (\phi_2^\dagger \phi_2)^2+(\phi_3^\dagger \phi_3)^2\right]
+ \Lambda_2 \left[|\phi_1^\dagger \phi_2|^2 + |\phi_2^\dagger \phi_3|^2 + |\phi_3^\dagger \phi_1|^2\right] \nonumber\\
&&+ \Lambda_3 \left[(\phi_1^\dagger \phi_2)(\phi_1^\dagger \phi_3) + (\phi_2^\dagger \phi_3)(\phi_2^\dagger \phi_1) + (\phi_3^\dagger \phi_1)(\phi_3^\dagger \phi_2)\right]+ h.c.
\eea
We study the vacuum point:
\be
\lr{\phi_a} =\frac{1}{\sqrt{2}} \doublet{0}{v}
\ee
And check the excitations of the fields around this point:
\be
\lr{\phi_a} = \doublet{w_{a}^{+}}{\frac{1}{\sqrt{2}}(v+h_{a}+i\eta_{a})}
\ee

Then rewrite the potential in terms of the new fields:
\bea
V&=& \left[ \dfrac{-M_{0}}{\sqrt{3}}(v)+ (\Lambda_{0}+ \Lambda_1+ \Lambda_{2} + \Lambda_3 + {\Lambda_3}^*)(v^{3}) \right](h_{1}+h_{2}+h_{3})  \nonumber \\ 
&& +\left[\overbrace{ \dfrac{-M_{0}}{2\sqrt{3}}+\frac{\Lambda_{0}}{12}(10v^{2}) +\frac{\Lambda_{1}}{4}(6v^{2}) +\frac{\Lambda_{2}}{4}(2v^{2})+\frac{\Lambda_3+ {\Lambda_3}^*}{4}(v^{2})}^A \right] (h_{1}^{2}+h_{2}^{2}+h_{3}^{2})\nonumber \\
&& +\left[\overbrace{ \dfrac{-M_{0}}{2\sqrt{3}}+\frac{\Lambda_{0}}{12}(6v^{2})+\frac{\Lambda_1}{4}(2v^{2}) + \frac{\Lambda_{2}}{4}(2v^{2})-\frac{\Lambda_3+ {\Lambda_3}^*}{4}(v^{2}) }^B \right](\eta_{1}^{2}+\eta_{2}^{2}+\eta_{3}^{2})\nonumber \\
&& +\left[ \overbrace{ \dfrac{-M_{0}}{\sqrt{3}}+\Lambda_{0}v^{2} + \Lambda_1 v^2 }^C \right](w_{1}^{+}w_{1}^{-}+w_{2}^{+}w_{2}^{-}+w_{3}^{+}w_{3}^{-}) \nonumber \\
&& +\left[ \overbrace{\Lambda_{2}v^{2}+  \frac{\Lambda_3 + {\Lambda_3}^*}{2}(v^2)}^D\right] (w_{1}^{-}w_{2}^{+}+ w_{2}^{-}w_{3}^{+}+ w_{3}^{-}w_{1}^{+}+ w_{1}^{+}w_{2}^{-}+ w_{2}^{+}w_{3}^{-}+ w_{3}^{+}w_{1}^{-})\nonumber \\
&& +\left[\overbrace{\frac{\Lambda_{0}}{12}(8v^{2})+\Lambda_{2}v^{2}+\frac{\Lambda_3+ {\Lambda_3}^*}{4}(5v^{2})}^{2E} \right] (h_{1}h_{2}+h_{2}h_{3}+h_{3}h_{1})\nonumber \\
&& +\left[\overbrace{\frac{\Lambda_3+ {\Lambda_3}^*}{4}(v^{2})}^{2F}\right](\eta_{1}\eta_{2}+\eta_{2}\eta_{3}+\eta_{3}\eta_{1})\nonumber \\
&& +\left[\overbrace{i\frac{\Lambda_3- {\Lambda_3}^*}{4}(v^{2})}^{G}\right](h_1\eta_3 +h_2\eta_1 +h_3\eta_2 -h_1\eta_1 -h_2\eta_2 -h_3\eta_3) 
\eea
With $v^{2}=\frac{M_{0}}{\sqrt{3}(\Lambda_0 +\Lambda_1 +\Lambda_2 +\Lambda_3 +{\Lambda_3}^* )}$ we simplify the coefficients of the potential:
\bea
A=(\frac{2\Lambda_{0}+2\Lambda_1 -\Lambda_3 -{\Lambda_3}^*}{2})v^{2} \quad &,& \quad B= (\frac{-3\Lambda_3 -3{\Lambda_3}^*}{4})v^{2} \nonumber \\
C= (-\Lambda_{2} -\Lambda_3 -{\Lambda_3}^*)v^{2} \quad &,& \quad D= (\frac{2\Lambda_{2} +\Lambda_3 +{\Lambda_3}^*}{2})v^{2} \nonumber \\
2E= (\frac{8\Lambda_{0}+12\Lambda_{2}+ 15\Lambda_3 +15{\Lambda_3}^*}{12})v^{2} \quad &,& \quad 2F= (\frac{\Lambda_3 +{\Lambda_3}^*}{4})v^{2} \nonumber \\
G= i(\frac{\Lambda_3 -{\Lambda_3}^*}{4})v^{2}&& \nonumber
\eea 
The eigenvalues of the mass matrix are:
\bea
m^2( w_a^{\pm})&:&  \Lambda_2 v^2 , \quad \frac{-4\Lambda_2 -3\Lambda_3 -3{\Lambda_3}}{2}v^2  \\
m^2(h_a)&:& \frac{v^2}{12}\left( 16\Lambda_0 +24\Lambda_1 -24\Lambda_2 -27\Lambda_3 -27{\Lambda_3}^* \right) \nonumber \\ 
&& \frac{v^2}{18}\left( 20\Lambda_0 +12\Lambda_1 +12\Lambda_2 +9\Lambda_3 +9{\Lambda_3}^* \right) \nonumber \\ 
&& \frac{v^2}{12}\left[ 8\Lambda_0 +12\Lambda_1 -6\Lambda_2 -21\Lambda_3 -21{\Lambda_3}^* + \sqrt{\Delta} \right] \nonumber \\ 
m^2(\eta_a)&:&  -\frac{7v^2}{4} \left(\Lambda_3 +{\Lambda_3}^* \right) , \quad -v^2 \left(\Lambda_3 +{\Lambda_3}^* \right) \nonumber \\ 
&&  \frac{v^2}{12}\left[ 8\Lambda_0 +12\Lambda_1 -6\Lambda_2 -21\Lambda_3 -21{\Lambda_3}^* - \sqrt{\Delta} \right] \nonumber \\ 
\mbox{with} && \Delta = \left(8\Lambda_0 +12\Lambda_1 -12\Lambda_2 -3\Lambda_3 -3{\Lambda_3}^*  \right)^2 + \frac{9(3\sqrt{3}-15)}{4} \left( \Lambda_3 - {\Lambda_3}^* \right)^2 \nonumber
\eea


\chapter{3HDM with octahedral symmetry}\label{chapter4}

In this Chapter we focus on a specific discrete symmetry group that can be imposed on the scalar potential of the three-Higgs-doublet model: the full (achiral) octahedral symmetry group $O_h$. We see two main goals in this study. 

First; $O_h$ is among the largest realizable finite symmetry groups which can be imposed on the scalar potential in 3HDM. The word "realizable" stresses that when imposing such a symmetry group, we obtain a potential that is symmetric exactly under this and not any larger group. It is therefore interesting to check what are the phenomenological consequences of such a high symmetry of the potential.

Second; the octahedral 3HDM serves as a good illustration of the power of geometric and group-theoretic methods which were developed in \cite{NHDM2010} for the space of Higgs bilinears and which are further developed in this Chapter. We believe that these methods can be used in the analysis of other similar models.

The structure of the Chapter is as follows:  In Section \ref{potential-octahedral} we introduce the octahedral 3HDM potential and describe its symmetry group and symmetry breaking possibilities. In Section \ref{minimization} we show the main geometric idea in order to minimize the potential, with a simplified example. Then we apply the method to the full 3-dimensional orbit space and derive positivity conditions. In Section \ref{different-minima} we perform a phenomenological analysis on each possible type of symmetry breaking, and in Appendix \ref{detailed} we study a specific minimum in more detail.

\section{The scalar potential}\label{potential-octahedral}

The Octahedral 3HDM potential is defined by (see Appendix B):
\be
V=-M_{0}r_{0}+\Lambda_{0}r_{0}^2+\Lambda_{1}(r_{1}^2+r_{4}^2+r_{6}^2)+\Lambda_{2}(r_{2}^2+r_{5}^2+r_{7}^2)+\Lambda_{3}(r_{3}^2+r_{8}^2)
\label{Octahedralgeneral}
\ee
To write the potential in terms of the doublets, we remind the reader of the explicit expressions for the field bilinears in 3HDM:
\bea
&& r_0 = {(\phi_1^\dagger\phi_1) + (\phi_2^\dagger\phi_2) + (\phi_3^\dagger\phi_3)\over\sqrt{3}}\\
&& r_3 = {(\phi_1^\dagger\phi_1) - (\phi_2^\dagger\phi_2) \over 2}\,,\quad
r_8 = {(\phi_1^\dagger\phi_1) + (\phi_2^\dagger\phi_2) - 2(\phi_3^\dagger\phi_3) \over 2\sqrt{3}} \nonumber\\
&&r_1 = \Re(\phi_1^\dagger\phi_2)\quad , \quad  r_2 = \Im(\phi_1^\dagger\phi_2) \quad , \quad  r_4 = \Re(\phi_3^\dagger\phi_1)\nonumber\\
&&r_5 = \Im(\phi_3^\dagger\phi_1)\quad, \quad r_6 = \Re(\phi_2^\dagger\phi_3)\quad , \quad r_7 = \Im(\phi_2^\dagger\phi_3)\nonumber
\eea
Therefore the potential has the following form in terms of fields: 
\bea
V&=&-\frac{M_0}{\sqrt{3}}\left(\phi_1^{\dagger}\phi_1+\phi_2^{\dagger}\phi_2+\phi_3^{\dagger}\phi_3\right)+\frac{\Lambda_0}{3}\left(\phi_1^{\dagger}\phi_1+\phi_2^{\dagger}\phi_2+\phi_3^{\dagger}\phi_3\right)^2  \\
&&+\Lambda_1\left[(\Re\phi_1^{\dagger}\phi_2)^2+(\Re\phi_2^{\dagger}\phi_3)^2+(\Re\phi_3^{\dagger}\phi_1)^2\right] \nonumber \\
&&+\Lambda_2\left[(\Im\phi_1^{\dagger}\phi_2)^2+(\Im\phi_2^{\dagger}\phi_3)^2+(\Im\phi_3^{\dagger}\phi_1)^2\right]\nonumber \\
&&+\frac{\Lambda_3}{3}\left[(\phi_1^{\dagger}\phi_1)^2+(\phi_2^{\dagger}\phi_2)^2+(\phi_3^{\dagger}\phi_3)^2-(\phi_1^{\dagger}\phi_1)(\phi_2^{\dagger}\phi_2)-(\phi_2^{\dagger}\phi_2)(\phi_3^{\dagger}\phi_3)-(\phi_3^{\dagger}\phi_3)(\phi_1^{\dagger}\phi_1)\right]\nonumber 
\eea
This potential is symmetric under full achiral group $O_h$ (Appendix B).


\subsection{Symmetry breaking patterns}\label{breaking-patterns}

As it is shown below, in a 3HDM a sufficiently symmetric group $G$, such as the octahedral symmetry, can not break down completely after EWSB, since it will create so many minima \cite{A4-potential}. So, it breaks down to certain subgroups of $G$, which narrows down the list of possible patterns of v.e.v.s. Several such patterns can be simply guessed. Whether they are actually the minima of the potential for some values of the parameters, will be studied later; with this group-theoretic argument we can only list patterns which have a chance to be minima.

Below we list several such patterns. We will focus on neutral minima only, and therefore we present only the lower components of the doublets.

\begin{itemize}
\item
{\boldmath $(v,v,v)$}. This minimum is invariant under permutations of the doublets and under $CP$; the symmetry group is $D_3\times Z_2$ of order 12. Applying transformations which do not leave it invariant, we obtain four minima of this type: $(v,v,v)$, $(v,v,-v)$, $(v,-v,v)$, and $(v,-v,-v)$.

\item
{\boldmath $(v,0,0)$}. This point is invariant under sign flips, under $CP$ and under transposition of the last two doublets, which form the symmetry group $D_4 \times Z_2$ of order 16. The three minima of this type are $(v,0,0)$, $(0,v,0)$, $(0,0,v)$. 

\item
{\boldmath $(v,v,0)$}.  This point is symmetric under the sign flip of the third doublet, permutation of the first and second doublet, $CP$ transformation and their combinations. The symmetry group is $(Z_2)^3$ of order 8. The six minima of this type are $(v,\pm v,0)$, $(v,0,\pm v)$, $(0,v,\pm v)$.

\item
{\boldmath $(v, iv,0)$}. This point is invariant under the sign flip of the third doublet, the $CP$ transformation compensated by the sign flip of the second doublet, exchange of the first two doublets compensated by sign flip of the second doublet, and their products. The symmetry group is $Dh_4$ of order 8. The six minima are $(v,\pm iv,0)$, $(v,0,\pm iv)$, $(0,v,\pm iv)$.

\item
{\boldmath $(v, v e^{i\pi/3},v e^{-i\pi/3})$} and {\boldmath $(v, v e^{2i\pi/3},v e^{-2i\pi/3})$}.
The symmetry group is $S_3$ and is made of permutations of the three doublets compensated by appropriate sign flips and/or $CP$. There are six distinct minima of the first type and two distinct minima of the second type.
\end{itemize}

This list exhausts all sufficiently symmetric points. In the next Section we will find which ones can actually be minima of the potential.

\section{Minimization of the scalar potential}\label{minimization}

The octahedral 3HDM potential (\ref{Octahedralgeneral}) looks deceptively simple. If one plunges into straightforward minimization of the potential, it becomes a challenging task to reconstruct the full phase diagram of the model: that is, to find the positions (and even types) of the minima for arbitrary values of $\Lambda_i$. Here we develop an efficient geometric method to solve this problem. Before we proceed with the solution, let us illustrate this method with a simple toy model.

\subsection{Toy model}\label{toy-model}

Suppose only real-valued fluctuations in all three doublets are allowed ($r_2 = r_5 = r_7 =0$). Then the potential simplifies to
\bea
V&=&-M_{0}r_{0}+r_{0}^2\left(\Lambda_{0} + \Lambda_{1} x + \Lambda_{3}z \right) \nonumber \\
\mbox{with} && x = n_{1}^2+n_{4}^2+n_{6}^2\quad , \quad z = n_3^2 + n_8^2
\label{toy1}
\eea
Recall from Section \ref{potential-3hdm-section} that in 3HDM $1/4 \le n^2_i \le 1$, therefore $1/4 \le x+z \le 1$, which defines a trapezoid shown in Figure (\ref{fig-xz}).

\begin{figure}[htb]
\begin{center}
\includegraphics[height=7cm]{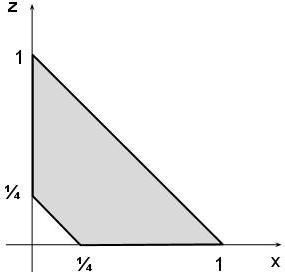}
\caption{The orbit space of the toy model projected on the $(x,z)$-plane.}
\label{fig-xz}
\end{center}
\end{figure}

The four vertices of this trapezoid correspond to the following patterns of the three doublets:
\bea
x = 0\,,\ z = 1: && \doublet{0}{v}\,,\  \doublet{0}{0}\,,\ \doublet{0}{0} \label{vertices-toy}\\
x = 1\,,\ z = 0: && \doublet{0}{v}\,,\  \doublet{0}{v}\,,\ \doublet{0}{v} \nonumber\\
x = 0\,,\ z = 1/4: && \doublet{0}{v}\,,\  \doublet{v}{0}\,,\ \doublet{0}{0} \nonumber\\
x = 1/4\,,\ z = 0: && \doublet{0}{v}\,,\  \doublet{v\sin\alpha}{v\cos\alpha}\,,\ \doublet{v\sin\beta}{v\cos\beta}\,,\ \alpha = - \beta = {\pi\over 3}  \nonumber
\eea
The first two points correspond to the neutral vacuum, and they were already listed in the previous section, while the last two are charge-breaking minima. One can also show that all points inside this trapezoid are realizable by fields (in other words, when the full orbit space is projected on the $(x,z)$-plane, it covers the entire trapezoid).

\subsubsection{Positivity conditions}

The positivity conditions are restrictions on the parameters which guarantee that the potential is bounded from below for all values of the Higgs fields. It is known that it is necessary and sufficient to require that the quartic part of the potential is positive inside the orbit space.

Now comes the crucial point. Since the potential (\ref{toy1}) is a linear function in $x$ and $z$ defined inside this trapezoid and therefore convex, it is sufficient to write the positivity conditions at the four vertices (\ref{vertices-toy}); then they will be automatically satisfied at all points inside the trapezoid. Thus, we arrive at the following positivity conditions for our toy model:
\be
\Lambda_0 + \Lambda_1 > 0\,,\quad 
\Lambda_0 + \Lambda_3 > 0\,,\quad 
4\Lambda_0 + \Lambda_1 > 0\,,\quad 
4\Lambda_0 + \Lambda_3 > 0 \label{positivity-toy}
\ee

In general, none of these inequalities can be removed because $\Lambda_0$ can be positive or negative. 

Now let us show how the global minima can be found for different values of the parameters. Since the potential is a linear function of $x$ and $z$, we can introduce on the $(x,z)$-plane the {\em direction of steepest descent} (DSD), which is given by vector $(-\Lambda_1,-\Lambda_3)$. Then, we need to find what points of the trapezoid lie farthest along this direction; such points give the $(x,z)$ coordinates of the global minima, that is, patterns of v.e.v.s of the doublets. The last step is to find the overall v.e.v. normalization $v$.	

By checking all possible directions of steepest descent, one finds situations where vertices or edges of the trapezoid correspond to the global minimum (see Figure(\ref{fig-xz-descent})).

\begin{figure}[htb]
\begin{center}
\includegraphics[height=7cm]{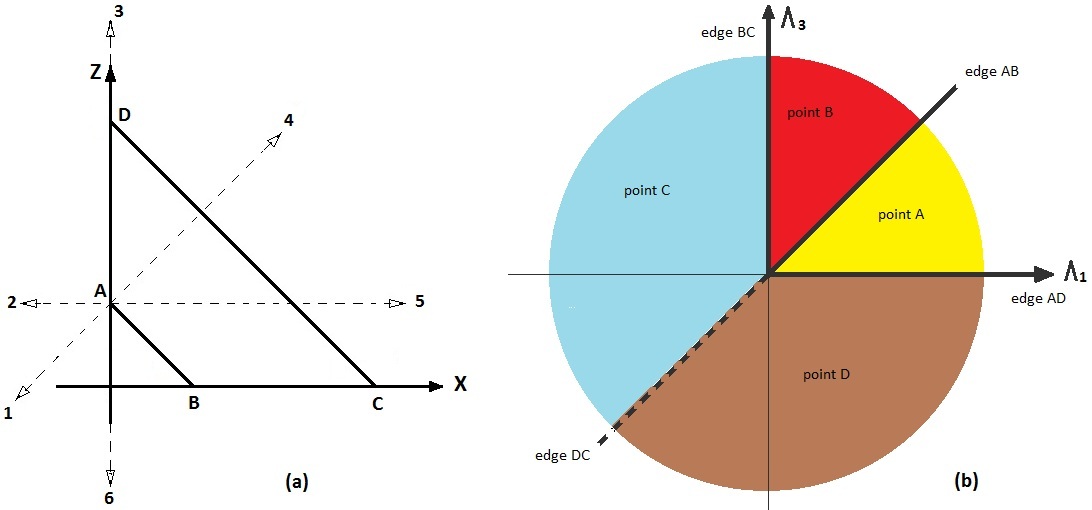}
\caption{(a) Various possible directions of steepest descent and the corresponding minima. (b) Phase diagram of the corresponding potential.}
\label{fig-xz-descent}
\end{center}
\end{figure}

If DSD is parallel to direction 1, which happens at $\Lambda_1=\Lambda_3 > 0$, then the entire edge $AB$ corresponds to the global minimum. If DSD is between directions 1 and 2 ($\Lambda_1 > \Lambda_3 > 0$), point A is the minimum. When DSD is along direction 2 ($\Lambda_1 > 0, \Lambda_3 = 0$), all the points on the $AD$ edge are minima. When DSD rotates further from direction 2 to direction 4, point D becomes the minimum. When DSD reaches direction 4 ($\Lambda_1 = \Lambda_3 < 0$), then all the points on the $DC$ edge become minima of the potential. And so on. In this way we can construct the full phase diagram of the model in the $\Lambda_i$ parameter space.



\section{Different minima}\label{different-minima}

In this section we analyse the potential for each of the minima found in Section \ref{breaking-patterns}. We focus only on the four possible points corresponding to the neutral vacuum.

\subsection{The minimum $(v,v,v)$}
In order for this point to be a minimum, it is necessary to have $\Lambda_1 <0$. The other parameters, $\Lambda_2, \Lambda_3$, can be of any sign, but if negative, they cannot be lower than a certain limit. Explicit expansion of the potential at this point gives the following scalar mass spectrum:
\bea
m^2(h_a^{\pm})&:& 0, \quad \frac{3}{2}|\Lambda_1|v^2 \quad \mbox{(double degenerate)} \nonumber \\
m^2(h_a^0)&:& (|\Lambda_1|+\Lambda_3)v^2 \quad \mbox{(double degenerate)} \nonumber \\
&& \frac{3}{2}(|\Lambda_1|+\Lambda_2)v^2 \quad \mbox{(double degenerate)} \nonumber\\
&& 0, \quad 2(\Lambda_0-|\Lambda_1|)v^2
\eea
The positivity of the mass eigenstates gives extra conditions on $\Lambda$'s:
\be 
\Lambda_1<0, \quad \Lambda_3+|\Lambda_1|>0, \quad \Lambda_0-|\Lambda_1|>0 , \quad \Lambda_2+|\Lambda_1|>0
\ee

The remarkable feature of this spectrum is that it is 2HDM-like, namely, it has only one massive charged scalar
and three massive neutral scalars. We study this interesting case in more detail in Appendix C.

\subsection{The minimum $(v,0,0)$}

In order for this point to be the minimum, $\Lambda_3$ must be negative. Checking the excitations around the minima, will result is $v^2=\frac{\sqrt{3}M_0}{\Lambda_0 - |\Lambda_3|}$, and the mass spectrum is:
\bea
m^2(h_a^{\pm})&:& 0, \quad \frac{|\Lambda_3|}{2}v^2 \quad \mbox{(double degenerate)} \nonumber \\
m^2(h_a^0)&:& 0, \quad \frac{2(\Lambda_0-|\Lambda_3|)}{3}v^2  \nonumber \\
&&\frac{\Lambda_1+|\Lambda_3|}{2}v^2 \quad \mbox{(double degenerate)} \nonumber \\
&&\frac{\Lambda_2+|\Lambda_3|}{2}v^2 \quad \mbox{(double degenerate)} 
\eea
With this breaking pattern, the spectrum is again 2HDM-like. Positivity of the mass eigenstates puts extra condition on $\Lambda$'s:
\be
\Lambda_0 > 0, \quad \Lambda_0 - |\Lambda_3| > 0, \quad \Lambda_1 +|\Lambda_3|> 0 , \quad \Lambda_2 +|\Lambda_3|> 0
\ee

\subsection{The minimum $(v,iv,0)$}
This point appears as a minimum when $\Lambda_1 >0, \Lambda_2,\Lambda_3<0$. For $v^2=\frac{6M_0}{\sqrt{3}(4\Lambda_0 - 3|\Lambda_2| -|\Lambda_3|)}$ the mass spectrum has the following form:
\bea
m^2(h_a^{\pm})&:&  0, \quad |\Lambda_2| v^2, \quad \frac{|\Lambda_2| +|\Lambda_3|}{2}v^2 \nonumber \\
m^2(h_a^0)&:& 0, \quad \frac{\Lambda_1+|\Lambda_3|}{2}v^2  \quad \mbox{(double degenerate)} \nonumber \\
&& \frac{3\Lambda_1+3|\Lambda_2|}{2}v^2, \quad (|\Lambda_2|-|\Lambda_3|)v^2, \quad \frac{4\Lambda_0-3|\Lambda_2|-|\Lambda_3|}{3}v^2 
\eea

There appear extra constraints on the $\Lambda$'s from the positivity of the mass eigenstates here:
\bea 
&|\Lambda_2| + |\Lambda_3 | >0,  \quad \Lambda_1 +|\Lambda_3| >0, \quad \Lambda_1 +|\Lambda_2| >0  \nonumber \\
&  |\Lambda_2| -|\Lambda_3| >0, \quad 4\Lambda_0 - 3|\Lambda_2| -|\Lambda_3| >0  
\eea

\subsection{The minimum $(\pm v,ve^{i\pi/3},ve^{-i\pi/3})$}

For this point to be minimum, one requires $\Lambda_2<0$, $\Lambda_1<|\Lambda_2|$ and $\Lambda_1<\Lambda_3$. The mass spectrum has the following form
\bea
m^2(h_a^{\pm})&:& 0, \quad \frac{3|\Lambda_2|}{2}v^2,\quad \frac{3(|\Lambda_2|-\Lambda_1)}{4}v^2  \nonumber \\
m^2(h_a^0)&:& 0, \quad (4\Lambda_0-3|\Lambda_2| +\Lambda_1) v^2  \nonumber \\
&& \left[\frac{3(|\Lambda_2| + \Lambda_1)}{2} + \Lambda_3 - \Lambda_1 \pm \sqrt{\Delta}\right]v^2  
\quad \mbox{(double degenerate)} \nonumber \\
&& \mbox{with} \quad \Delta=\left(\frac{3(|\Lambda_2| + \Lambda_1)}{2}\right)^2 + \left(\Lambda_3 - \Lambda_1\right)^2 
\eea
This spectrum shows a bizarre pattern: the charged Higgs bosons are non-degenerate, while the neutrals are 2HDM-like.


\chapter{Abelian symmetries in NHDM}\label{chapter5}

\section{Symmetries in NHDM}

One of the interesting branches of study in non-minimal Higgs sectors is the possible symmetries one can impose on the potential. In particular, three topics are of importance:
\begin{enumerate}
\item
Which symmetries can a general NHDM potential have? 
\item
What are the possible ways to spontaneously break these symmetries? 
\item
How can these symmetries be extended into the fermion sector?
\end{enumerate}

In this Chapter we explore the first question. When studying the symmetries of the potential, one focuses on the reparametrization transformations, 
which mix different doublets but leave the intradoublet structure untouched. These transformations keep the kinetic term invariant, and are either unitary or antiunitary.

In this Chapter we focus on the unitary transformations. Such transformations form the groups $U(N)$. We are looking for the physically distinct transformations, therefore we leave out the common phase rotations since they act trivially on the
potential. The overall phase factor multiplication which is taken into account by the gauge group $U(1)_Y$ should be left out. This leaves us with the special unitary group $SU(N)$ which has to be factor-grouped by its center $Z_N$. We call the resulting group $PSU(N)$. We construct this $PSU(N)$ for every NHDM, and look for the maximal Abelian subgroups of it, and then explore all its realizable subgroups. 

There are several maximal Abelian subgroups in $PSU(N)$, but they are all conjugate to each other. Therefore we could pick one maximal Abelian subgroup and exhaust its subgroups. Figure (\ref{TreeNHDM}) shows the tree of subgroups in NHDM.
\begin{figure}[htb]
\begin{center}
\includegraphics[height=7.5cm]{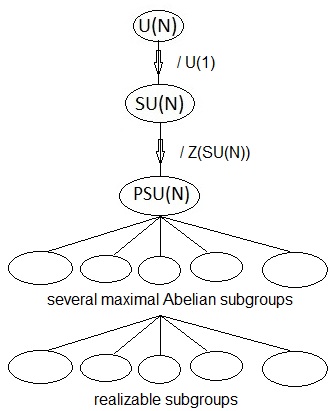}
\caption{The tree of subgroups in NHDM}
\label{TreeNHDM}
\end{center}
\end{figure}

When searching for the symmetry group which can be implemented in the scalar sector of a non-minimal Higgs model,
one must distinguish between {\em realizable} and {\em non-realizable} groups, \cite{Ivanov2}.
If it is possible to write a potential which is symmetric under a group $G$
but not symmetric under a larger symmetry group containing $G$, we call $G$ a realizable group.
A non-realizable group, on the contrary, automatically leads to a larger symmetry group of the potential.
The true symmetry properties of the potentials are reflected in realizable groups.

For example, a potential which depends on the first doublet only via $(\phi_1^\dagger \phi_1)$ 
is obviously symmetric under the cyclic group $\Z_n$ of discrete phase rotations of this doublet for any $n$. However, these $\Z_n$'s have no interest on their own because they trivially arise as subgroups of the larger symmetry group of this potential $U(1)$ describing arbitrary phase rotations. It is this $U(1)$, not its individual subgroups $\Z_n$, which has a chance to be the realizable symmetry group of this potential.


Therefore, if one focuses on the scalar sector of the model and aims at classifying the possible symmetries 
which can be implemented in the scalar potential, it is natural to restrict one's choice to the realizable symmetry groups.

In this Chapter we introduce the group of reparametrization transformations for NHDM. Next we develop the strategy to find all unitary Abelian realizable symmetry groups in NHDM. Using this strategy we present the full list of unitary Abelian realizable symmetries in 3HDM and 4HDM. In the attempt to try to derive the full list of unitary Abelian realizable symmetries in NHDM we prove several statements for a general NHDM potential. The results of this Chapter is published in \cite{Abelian2012}.

\section{Scalar sector of the $N$-Higgs doublet model}\label{section-symmetries}

\subsection{The group of reparametrization transformations}

When discussing symmetries of the potential, we focus on the reparametrization transformations, 
which are non-degenerate linear transformations mixing different doublets $\phi_a$ but keeping invariant the kinetic term (which includes interaction of the Higgs fields with the gauge sector of the model). Alternatively, they can be defined as norm-preserving transformations of doublets that do not change the intradoublet structure. 

Keeping the kinetic term invariant (and preserving charge conservation) requires conservation of the norm $(\phi_a^\dagger \phi_a)$, which results in the reparametrization transformation to be unitary (a Higgs-family transformation) or antiunitary (a generalized $CP$-transformation): 
\be
U: \quad \phi_a \mapsto U_{ab}\phi_b\qquad \mbox{or} \qquad U_{CP} = U \cdot J:\quad \phi_a \mapsto U_{ab}\phi^\dagger_b\,,
\ee
with a unitary matrix $U_{ab}$. The antiunitary transformations $U_{CP} = U\cdot J$, will be studied in Chapter \ref{chapter6}.

In this Chapter we focus on the unitary transformations $U$. A priori, such transformations form the group $U(N)$. However,
the overall phase factor multiplication is already taken into account by the $U(1)_Y$ from the gauge group.
This leaves us with the $SU(N)$ group of reparametrization transformations. 
\be 
\overbrace{ \left(\begin{array}{ccc}
e^{i\alpha_1} &   &    \\
  & \cdots &   \\
  &  &  e^{i\alpha_n}    \\
\end{array}
\right)}^{U(N)} 
\simeq 
 \overbrace{ \left(  \underbrace{ \begin{array}{ccc}
 e^{i\beta_1}  &   &   \\
  & \cdots &    \\
  &   & e^{i\beta_n}   \\ 
\end{array} }_{det=1}
 \right)}^{SU(N)} 
\cdot
 \overbrace{ e^{i\beta_0} }^{U(1)}
\ee
where $\beta_0=\frac{\Sigma \alpha_i}{n}$, $\beta_i=\alpha_i - \beta_0$ and $\Sigma \beta_n=0$.

Then group $SU(N)$ has a non-trivial center $Z(SU(N))= \Z_N$ generated by the diagonal matrix $\exp(2\pi i/N)\cdot 1_N$, where $1_N$ is the identity matrix. Therefore, the group of {\em physically distinct} unitary reparametrization transformations $G$ is
\be
G= PSU(N) \simeq SU(N)/\Z_N\,.\label{Gu} 
\ee

\section{Finding Abelian groups in NHDM}\label{section-strategy}

In this Section we introduce the strategy to find all realizable Abelian subgroups of the NHDM potential for any $N$. This strategy can be outlined as follows: we first describe maximal Abelian subgroups of $SU(N)$, then we construct the maximal Abelian subgroups of $PSU(N)$. Next we explore the realizable subgroups of this maximal Abelian subgroup.

Before presenting the strategy of constructing the maximal Abelian groups in $SU(N)$, we start with an example which would facilitate understanding the main idea.

\subsection{Heuristic example}

Consider a given potential $V$ of the $N$ Higgs doublet model. We want to find phase rotations which
leave this potential invariant. These phase rotations form the group $[U(1)]^N $, the group of all diagonal unitary $N\times N$ matrices acting in the space of doublets. This group is a subgroup of the group of all possible phase rotations of doublets $U(N)$, $[U(1)]^N \subset U(N)$, and can be parametrized by $N$ parameters $\alpha_j \in [0,2\pi)$:
\be
\mbox{diag}[e^{i\alpha_1},\, e^{i\alpha_2},\, \dots ,\, e^{i\alpha_N}]
\label{maximaltorusUN}
\ee
The potential $V$ is a collection of $k$ monomial terms each of the form $(\phi^\dagger_a \phi_b)$
or $(\phi^\dagger_a \phi_b)(\phi^\dagger_c \phi_d)$.
Upon a generic phase rotation (\ref{maximaltorusUN}), each monomial term gains its own phase rotation.
For example, $(\phi^\dagger_a \phi_b)$ with $a\not = b$ gains the phase $\alpha_b - \alpha_a$,
$(\phi^\dagger_a \phi_b)(\phi^\dagger_a \phi_c)$ with $a, b, c$ all distinct gains the phase $\alpha_b +\alpha_c - 2\alpha_a$, etc.

So, each monomial gets a phase rotation which is a linear function of $\alpha$'s with integer coefficients:
$\sum_{j=1}^N m_j \alpha_j$. The vector of coefficients $m_j$ can be brought by permutation and overall sign change to one of the following forms:
\bea
(1,\,-1,\,0,\,\dots) \quad &:& \quad (\phi_a^\dagger \phi_b)  \nonumber \\
(2,\,-2,\,0,\,\dots) \quad &:& \quad (\phi_a^\dagger \phi_b)^2 \nonumber \\
(2,\,-1,\,-1,\,\dots) \quad &:& \quad (\phi_a^\dagger \phi_b)(\phi_a^\dagger \phi_c) \nonumber \\
(1,\,1,\,-1,\,-1,\,\dots) \quad &:& \quad (\phi_a^\dagger \phi_b)(\phi_c^\dagger \phi_d) 
\label{fourtypes}
\eea
Note that in all cases $\sum_j m_j = 0$, which simply means that each bilinear is $U(1)$-symmetric.

The phase transformation properties of a given monomial are fully described by its vector $m_j$.
The phase transformation properties of the potential $V$, which is a collection of $k$ monomials, is characterized
by $k$ vectors $m_{1,j},\, m_{2,j},\, \dots,\, m_{k,j}$, each $m_{i,j}$ being of one of the types in (\ref{fourtypes}).

If we want a monomial to be invariant under a given transformation defined by phases $\{\alpha_j\}$,
we require that $\sum_{j=1}^N m_j \alpha_j = 2 \pi n$ with some integer $n$. If we want the entire potential to be invariant under a given 
phase transformation, we require this for each individual monomial. In other words, we require that 
there exist $k$ integers $\{n_i\}$ such that the phases $\{\alpha_j\}$ satisfy the following system of linear equations:
\be
\sum_{j=1}^N m_{i,j}\alpha_j = 2 \pi n_i\,,\quad \mbox{for all $1 \le i \le k$} \label{systemUN}
\ee
Solving this system for $\{\alpha_j\}$ yields the phase rotations that leave the given potential $V$ invariant.

One class of solutions can be easily identified: if all $\alpha_j$ are equal, $\alpha_j = \alpha$, then (\ref{systemUN}) 
with $n_i = 0$ is satisfied for any $\alpha$. These solutions form the $U(1)$ subgroup inside $[U(1)]^N$ 
and simply reflect the fact that the potential is constructed from bilinears $(\phi^\dagger_a \phi_b)$.
These solutions become trivial when we pass from the $U(N)$ to the $SU(N)$ group of transformations.
However, there can exist additional solutions of (\ref{systemUN}). They form a group which remains non-trivial 
once we pass from $U(N)$ to $SU(N)$ and further to $PSU(N)$, which is the group of physically distinct 
Higgs-family reparametrization transformations.
It is these solutions that we are interested in.

In order to find these solutions, we note that a matrix with integer entries can be ''diagonalized"
by a sequence of elementary operations on its rows or columns: permutation, sign change, and addition
of one row (column) to another row (column).
''Diagonalization" for a non-square matrix means that the only entries $m_{i,j}$ that can 
remain non-zero are at $i = j$. After that, the system splits into $k$ equations on $N$ phases of the generic form
\be
m_{i,i}\alpha'_i = 2\pi  n'_i\,, \quad  \alpha'_i \in [0,2\pi)\,,\quad n'_i \in \Z 
\ee
with non-negative integers $m_{i,i}$.

\begin{itemize}
\item
If $m_{i,i}=0$, this equation has a solution for any $\alpha_i$; the $i$-th equation gives a factor $U(1)$ to the 
symmetry group of the potential.
\item
If $m_{i,i} =1$, then this equation has no non-trivial solution, and the $i$-th equation does not contribute to the symmetry group.
\item
If $m_{i,i} =d_i > 1$, then this equation has $d_i$ solutions which are multiples of $\alpha_i = 2\pi/d_i$, 
and the $i$-th equation contributes the factor $\Z_{d_i}$ to the symmetry group.
\end{itemize}
The full symmetry group of phase rotations is then constructed as direct product of these factors.

Thus, the task reduces to studying which diagonal values of the matrix $m_{i,j}$ can arise in a model with $N$ doublets. 
For small values of $N$ this task can be solved explicitly,
while for general $N$ one must rely upon subtle properties of $m_{i,j}$ which stem from (\ref{fourtypes}).

\subsection{Maximal Abelian subgroups of $PSU(N)$}

In the previous Section we outlined the main idea of the strategy. However, we worked there in the group $U(N)$, while the reparametrization group is $PSU(N)$.

We start by reminding the reader that a maximal Abelian subgroup of $G$ (\ref{Gu}) is an Abelian group that is not contained in a larger Abelian subgroup of $G$. In principal, there can be several maximal Abelian subgroups in $G$, any subgroup of $G$ must be either a maximal one, or lie inside a maximal one. Therefore, we first need to identify all maximal Abelian subgroups of $G$, and then study their realizable subgroups.


In $SU(N)$, all maximal Abelian subgroups are the so called {\em maximal tori} \cite{Abelian2012},
\be
[U(1)]^{N-1} = U(1)\times U(1) \times \cdots \times U(1)
\label{maximaltorus1}
\ee

All such maximal tori are conjugate to each other, which means with two given maximal tori $T_1$ and $T_2$ there exist $g \in SU(N)$ such that $g^{-1}T_1g = T_2$. Therefore, without loss of generality one could pick up a specific maximal torus, for example, the one that is represented by phase rotations of individual doublets
\be
\mbox{diag}[e^{i\alpha_1},\, e^{i\alpha_2},\, \dots ,\, e^{i\alpha_{N-1}},\, e^{-i\sum\alpha_i}]
\label{maximaltorus2}
\ee
and study its subgroups.

The analysis then proceeds as we explained in the previous Section, with an additional condition that all $\alpha$'s sum to zero, $\left(\sum\alpha_N=0 \right)$.

However, the group of distinct reparametrization transformations is $PSU(N)$, which has a richer structure. It is proved that there are two sorts of maximal Abelian groups in $PSU(N)$ \cite{Abelian2012}:
\begin{itemize}
\item 
Maximal tori, which will be constructed in Section \ref{maximal-in-psun}.
\item
Certain finite Abelian groups which are not subgroups of maximal tori and should be treated separately.
\end{itemize}



\subsection{Maximal tori in $PSU(N)$}\label{maximal-in-psun}

Here we explicitly construct the maximal torus in $PSU(N)$. We first introduce some convenient notation. A diagonal unitary matrix acting in the space of Higgs doublets 
and performing phase rotations of individual doublets, such as (\ref{maximaltorus2}),
will be written as a vector of phases:
\be
\left(\alpha_1,\,\alpha_2,\,\dots,\,\alpha_{N-1}, -\sum\alpha_i\right)
\ee
In addition, if $M_1,\ldots,M_k$ are subsets of a group $G$, then $\langle M_1,\ldots,M_k\rangle$ denotes the subgroup generated by $M_1\cup\ldots\cup M_k$ of $G$.
Then we construct a maximal torus in $SU(N)$
\be
T_0 = \langle U(1)_1,U(1)_2 , \cdots , U(1)_{N-1}\rangle \nonumber 
\ee
where
\bea
U(1)_1 & = & \alpha_1(-1,\, 1,\, 0,\, 0,\, \dots,\, 0) \nonumber\\
U(1)_2 & = & \alpha_2(-2,\, 1,\, 1,\, 0,\, \dots,\, 0) \nonumber\\
U(1)_3 & = & \alpha_3(-3,\, 1,\, 1,\, 1,\, \dots,\, 0) \nonumber\\
\vdots &  & \vdots \nonumber\\
U(1)_{N-1} & = & \alpha_{N-1}(-N+1, \, 1,\, 1,\, 1,\, \dots,\, 1) 
\label{groupsUi}
\eea
with all $\alpha_i \in [0,2\pi)$.
 
For every $j$ we have $\langle U(1)_1,\ldots,U(1)_{j-1}\rangle\cap
U(1)_j=\{e\}$. Therefore:
\be 
T_0 = U(1)_1 \times U(1)_2 \times \cdots \times U(1)_{N-1} \label{T0}
\ee
In particular, any element $u \in SU(N)$ can be uniquely written as
\be
u = u_1(\alpha_1) u_2(\alpha_2) \cdots u_{N-1}(\alpha_{N-1})\,,\quad u_i \in U(1)_i
\ee
Moreover, the center $Z(SU(N))$ is contained in the last group and is generated by $\alpha_{N-1}=2\pi/N$.
One can therefore introduce
\be 
\overline{U(1)}_{N-1} = U(1)_{N-1}/Z(SU(N)) 
\ee
which can be parametrized as
\be
\overline{U(1)}_{N-1} = \alpha_{N-1}\left(-{N-1 \over N}, \, {1\over N},\, \dots,\, {1 \over N}\right)
\label{def-of-U}
\ee
where $\alpha_{N-1} \in [0,2\pi)$.
Therefore, the maximal torus in $PSU(N)$ is written as
\be
T = U(1)_1\times U(1)_2 \times \cdots \times \overline{U(1)}_{N-1} \label{maximal-torus-PSUN}
\ee


\subsection{Identifying the symmetries of the potential}\label{Identify}

Next we study which subgroups of the maximal torus $T$ in (\ref{maximal-torus-PSUN}) can be realizable in the scalar sector of NHDM.

We start from the most general $T$-symmetric potential:
\be
V(T) = - \sum_a m_a^2(\phi_a^\dagger \phi_a) + \sum_{a,b} \lambda_{ab} (\phi_a^\dagger \phi_a)(\phi_b^\dagger \phi_b)
+ \sum_{a \not = b} \lambda'_{ab} (\phi_a^\dagger \phi_b)(\phi_b^\dagger \phi_a) \label{Tsymmetric}
\ee
 
Each term in this potential transforms trivially under the entire $T$. The important fact is
that a sufficiently general potential of this form has no other unitary symmetry. In fact, when we start from the $T$-symmetric potential (\ref{Tsymmetric})
and add more terms, we will never generate any new unitary symmetry that was not already present in $T$ \cite{Abelian2012}. This is the crucial step in making sure that the groups described below are realizable. Our task now is to find which subgroups of $T$ can be obtained in this way.


Consider a bilinear $\phi_a^\dagger \phi_b$ where $a\not = b$. It gains a phase change under $T$ (\ref{maximal-torus-PSUN}) 
which linearly depends on the angles $\alpha_i$:
\be
\phi_a^\dagger \phi_b \to \exp[i(p_{ab}\alpha_1 + q_{ab}\alpha_2 + \dots + t_{ab}\alpha_{N-1})]\cdot \phi_a^\dagger \phi_b \label{pq-generic-NHDM}
\ee
with some integer coefficients $p_{ab},\, q_{ab},\, \dots,\, t_{ab}$.
Note that all coefficients are antisymmetric in their indices: $p_{ba}= - p_{ab}$.
These coefficients can be represented by real antisymmetric matrices with integer values, or graphically,
as labels of the edges of $N-1$ oriented graphs, one for each $U(1)$ group.
Each such graph has $N$ vertices, corresponding to doublets $\phi_a$; all vertices are joined with arbitrarily oriented edged, orientation indicated by an arrow. An edge oriented from $\phi_b$ towards $\phi_a$ (edge $b \to a$) is associated with the bilinear $\phi_a^\dagger\phi_b$ and is labelled by $p_{ba}$ in the first graph, $q_{ba}$ in the second graph, etc.
Examples of these graphs are shown for 3HDM and 4HDM in Section \ref{section-3HDM-4HDM}.

The Higgs potential is a sum of monomial terms which are linear or quadratic in $\phi_a^\dagger \phi_b$.
Consider one such term and calculate its coefficients $p,\dots, t$.
Let us first focus on how this term depends on any single $U(1)_i$ subgroup of $T$.
There are two possibilities depending on the value of the $i$-th coefficient:
\begin{itemize}
\item
If the coefficient $k$ in front of $\alpha_i$ is zero, this terms is $U(1)_i$-symmetric.
\item
If the coefficient $k \not = 0$, then this term is symmetric under the $\Z_{|k|}$ subgroup of $U(1)_i$-group
generated by phase rotations by $2\pi/|k|$.
\end{itemize}
However, even if a given monomial happens to have finite symmetry groups with respect to each single $U(1)_i$,
its symmetry group under the entire $T$ is still continuous: for any set of coefficients one can adjust
angles $\alpha_i$ in such a way that $p_{ab}\alpha_1 + \dots + t_{ab}\alpha_{N-1} = 0$. Therefore, to make sure the potential is symmetric under a finite symmetry group, we must sum at least $N-1$ such monomials. 

When studying symmetries of a given term or a sum of terms, we cannot limit ourselves 
to individual $U(1)_i$ groups but must consider the full maximal torus. 

The strategy presented below guarantees that we find all possible realizable subgroups of the maximal torus $T$, both finite and infinite. 

Consider a Higgs potential $V$ which, in addition to the $T$-symmetric part (\ref{Tsymmetric}) contains $k\ge 1$ additional terms, with coefficients $p_1,\,q_1,\,\dots t_1$ to $p_k,\,q_k,\,\dots t_k$. 
This potential defines the following $(N-1)\times k$ matrix of coefficients:
\be
X(V) = \left(\begin{array}{cccc}
p_1 & q_1 & \cdots & t_1\\
p_2 & q_2 & \cdots & t_2\\
\vdots & \vdots && \vdots \\
p_k & q_k & \cdots & t_k
\end{array}
\right) = 
\left(\begin{array}{ccc}
m_{1,1} & \cdots & m_{1,N-1}\\[2.5mm]
\vdots & \vdots & \vdots \\[2.5mm]
m_{k,1} &  \cdots & m_{k,N-1}
\end{array}
\right)
\ee
Here the second form of the matrix agrees with the notation of (\ref{systemUN}).
The symmetry group of this potential can be derived from the set of non-trivial solutions for $\alpha_i$ of the following equations:
\be
X(V) \left(\begin{array}{c} \alpha_1 \\ \vdots \\ \alpha_{N-1} \end{array}\right) 
= \left(\begin{array}{c} 2\pi n_1 \\ \vdots \\2 \pi n_{k} \end{array}\right) \label{XVeq}
\ee
There are two major possibilities depending on the rank of this matrix.
\begin{itemize}
\item
{\bf Finite symmetry group}\\
If rank$X(V)=N-1$, then there is no non-trivial solution of the equation (\ref{XVeq}) with the trivial right-hand side (i.e. all $n_i = 0$).
Instead, there exists a unique solution for any non-trivial set of $n_i$, and all such solutions form the finite group of phase rotations
of the given potential.

To find which symmetry groups can be obtained in this way, 
we take exactly $N-1$ monomials, so that the matrix $X(V)$ becomes a square matrix with a non-zero determinant.
After diagonalizing the matrix $X(V)$ with integer entries [by swapping rows (columns), adding and subtracting rows (columns)], the matrix $X(V)$ becomes diag$(d_1,\dots, d_{N-1})$, where $d_i$ are positive non-zero integers. 
This matrix still defines the equation (\ref{XVeq}) for $\alpha'_i$, which are linear combinations of $\alpha_i$'s and
with $n_i^\prime \in \Z$. Therefore, the finite symmetry group of this matrix is $\Z_{d_1} \times \cdots \times \Z_{d_{N-1}}$
(where $\Z_1$ means no non-trivial symmetry).

Note also that each of the allowed manipulations conserves the absolute value of the determinant of $X(V)$.
Therefore, even before diagonalization one can calculate the order of the finite symmetry group as $|\det X(V)|$.

This derivation leads us to the strategy that identifies all finite subgroups of torus realizable as symmetry groups of the Higgs potential
in NHDM: write down all possible monomials with $N$-doublets, consider all possible subsets with exactly $N-1$ distinct monomials,
construct the matrix $X$ for this subset and find its symmetry group following the above scheme. 
Although this strategy is far from being optimal, its algorithmic nature allows it to be easily implemented in a machine code.

\item
{\bf Continuous symmetry group}\\
If $\mathrm{rank} X(V) < N-1$, so that $D=(N-1)-\mathrm{rank}X(V) > 0$, then there exists a $D$-dimensional subspace 
in the space of angles $\alpha_i$, which solves the equation (\ref{XVeq}) for the trivial right-hand side.
One can then focus on the orthogonal complement of this subspace, where no non-trivial solution 
of the equation is possible, and repeat the above strategy
to find the finite symmetry group $G_D$ in this subspace. The symmetry group of the potential 
is then $[U(1)]^D \times G_D$.
\end{itemize}

\be
X(V) = \left( \begin{array}{ccccccc}
d_1 &   &  & | &  & & \\
  & \cdots & & |  & & & \\
  &   & d_i & |  & & & \\
-& -& -& -& - &- & -\\ 
 &   &  & | & 0 & & \\
 &   &  & | &  & \cdots & \\
 &   &  & | &  & & 0  \\ 
\end{array}
\right)
\ee 


\section{Abelian symmetries in general NHDM}\label{section-NHDM}

The algorithm described in Section (\ref{section-strategy}) can be used to find all Abelian groups realizable as the symmetry groups of the Higgs potential for any $N$. We do not yet have the full list of finite Abelian groups for a generic $N$ presented in a compact form, although we put forth a conjecture concerning this issue, see Conjecture \ref{conjecture-list} below. However, several strong results can be proved about the order and possible structure of finite realizable subgroups of the maximal torus.

Throughout this Section we will often use $n:=N-1$. Also, whenever we mention in this Section a finite Abelian group
we actually imply a finite realizable subgroup of the maximal torus.

\subsection{Upper bound on the order of finite Abelian groups} 

It can be expected from the general construction that for any given $N$ there exists an upper bound on the order of
finite realizable subgroups of the maximal torus in NHDM. In this Section we prove the following theorem:
\begin{theorem}\label{theorem-order}
The exact upper bound on the order of the realizable finite subgroup of maximal torus in NHDM is
\be
|G| \le 2^{N-1}
\ee
\end{theorem}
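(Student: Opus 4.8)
The plan is to reduce the statement to a lattice-index estimate and then bound the relevant determinant by a multilinear expansion into roots. By the analysis of Section~\ref{Identify}, a realizable finite subgroup of the maximal torus is produced by a choice of exactly $N-1$ monomials, and its order equals $|\det X|$, where the $j$-th row of $X$ is the charge vector $\mu^{(j)}$ of the $j$-th monomial (taking more than $N-1$ monomials only enlarges the generated sublattice and so cannot increase the order). Each charge vector satisfies $\sum_a\mu^{(j)}_a=0$, hence lies in the root lattice $Q=\{x\in\Z^N:\sum_a x_a=0\}$ of $A_{N-1}$, and by (\ref{fourtypes}) it is of one of four types. The crucial structural fact is that \emph{every} type is a sum of at most two roots $e_a-e_b$: the bilinear type $(1,-1,0,\dots)$ is a single root; $(2,-2,0,\dots)=2(e_a-e_b)$; $(2,-1,-1,0,\dots)=(e_a-e_b)+(e_a-e_c)$; and $(1,1,-1,-1,0,\dots)=(e_a-e_b)+(e_c-e_d)$. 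Writing $L'=\langle\mu^{(1)},\dots,\mu^{(N-1)}\rangle$, one has $|G|=|\det X|=[Q:L']$, so the theorem is equivalent to $[Q:L']\le 2^{N-1}$ together with attainability of equality.

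First I would convert the index into an ordinary determinant by appending the all-ones vector $\mathbf 1=(1,\dots,1)$, which is the normal to the hyperplane containing $Q$. Since $\mathrm{covol}(Q)=\sqrt N$, a short computation yields
\be
[Q:L']=\frac{1}{N}\left|\det\left[\mu^{(1)},\dots,\mu^{(N-1)},\mathbf 1\right]\right|.
\ee
Then I would expand this determinant multilinearly in its first $N-1$ columns, replacing each $\mu^{(j)}$ by its decomposition into roots. Every resulting term has the form $c\cdot\det[\rho_1,\dots,\rho_{N-1},\mathbf 1]$, where each $\rho_i$ is a single root $e_a-e_b$ and $c$ is a product of the per-column coefficients.

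The heart of the argument is the following lemma: for roots $\rho_1,\dots,\rho_{N-1}$ one has $\det[\rho_1,\dots,\rho_{N-1},\mathbf 1]\in\{0,\pm N\}$. Indeed, viewing the $\rho_i$ as oriented edges of a graph on $N$ vertices, the first $N-1$ columns are linearly dependent (determinant $0$) unless the edges form a spanning tree, in which case they are a $\Z$-basis of $Q$, whose covolume $\sqrt N$ forces the determinant to equal $\pm N$. Consequently every expansion term is at most $N|c|$ in absolute value, and the triangle inequality gives
\be
\left|\det\left[\mu^{(1)},\dots,\mu^{(N-1)},\mathbf 1\right]\right|\le N\prod_{j=1}^{N-1}s_j,
\ee
where $s_j$ is the sum of absolute values of the coefficients in the root decomposition of $\mu^{(j)}$. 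Reading off the four types, $s_j=1$ for the bilinear type and $s_j=2$ for the other three, so $\prod_j s_j\le 2^{N-1}$ and hence $[Q:L']\le 2^{N-1}$. Equality is attained by the $N-1$ monomials $(\phi_{i+1}^\dagger\phi_i)^2+\mathrm{h.c.}$, whose charge vectors $2(e_i-e_{i+1})$ are twice the simple roots; then $X$ becomes twice the identity after a basis change, $|\det X|=2^{N-1}$, and the symmetry group is $(\Z_2)^{N-1}$.

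The main obstacle is the determinant lemma yielding the exact value $\pm N$ rather than a crude size bound: a direct Hadamard estimate on the column norms is too weak, overshooting $2^{N-1}$ by a factor $2^{(N-1)/2}$, so one genuinely needs the rigidity that $N-1$ roots contribute either nothing or a full spanning-tree basis of $Q$. A secondary point to handle with care is the reduction itself—checking that it suffices to treat exactly $N-1$ monomials, and that the pertinent lattice is precisely the root lattice $Q$, so that the normalization $\mathrm{covol}(Q)=\sqrt N$ enters the index formula correctly.
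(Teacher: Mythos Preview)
Your argument is correct and takes a genuinely different route from the paper. The paper rewrites $X=cA$ with $\det A=1$ and then proves $|\det c|\le 2^{n}$ by induction on $n$: it classifies the rows of $c$ into nine types (\ref{nine-types}) and, for each possible type of the first row, performs a suitable column operation so that the cofactor expansion reduces to minors whose rows are again of the allowed nine types, invoking the inductive hypothesis case by case. Your proof instead stays in $\Z^N$, observes that every charge vector in (\ref{fourtypes}) is a sum of at most two $A_{N-1}$ roots with coefficient sum $s_j\le 2$, appends $\mathbf 1$ to get an honest $N\times N$ determinant, and uses the spanning-tree lemma that $N-1$ roots contribute either $0$ or $\pm N$; the bound then follows in one line from multilinearity and the triangle inequality without induction or case analysis. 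What your approach buys is conceptual transparency: the factor $2$ per monomial is visibly the ``two roots per quartic'', and the exactness of the lemma (roots form a $\Z$-basis of $Q$ precisely when they form a spanning tree) replaces the paper's careful tracking that column operations preserve the nine row types. What the paper's approach buys is complete elementarity: it never leaves integer row/column manipulations and needs no lattice covolumes or identification of the torus character lattice with the root lattice. Your reduction $|G|=[Q:L']$ is justified by the paper's own $|G|=|\det X|$ together with $\det A=1$, since the vectors $a_i$ are the roots $e_{i+1}-e_1$ expressed in the torus coordinates and hence implement the isomorphism $Q\cong\Z^{N-1}$; and your attainability witness $(\phi_{i+1}^\dagger\phi_i)^2$ is equivalent to the paper's Proposition~\ref{prop-many-cyclic-NHDM} with all blocks of size one.
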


Before presenting the proof, let us first develop some convenient tools.
First, with the choice of the maximal torus (\ref{maximal-torus-PSUN}), we construct $n=N-1$ bilinears
$(\phi_1^\dagger \phi_{i+1})$, $i=1,\dots,n$.
The vectors of coefficients $a_i = (p_i,\, q_i,\, \dots,\, t_i)$ defined in (\ref{pq-generic-NHDM}) can be easily written:
\bea
a_1 &=& (2,3,4,\dots,n,1) \nonumber\\
a_2 &=& (1,3,4,\dots,n,1) \nonumber\\
a_3 &=& (1,2,4,\dots,n,1) \nonumber\\
\vdots && \vdots \nonumber\\
a_n &=& (1,2,3,\dots,n-1,1) \label{a-matrix-def}
\eea
One can use these vectors to construct the $n\times n$ matrix $A$:
\be
A = \left(\begin{array}{c}
a_1\\
\vdots\\
a_n
\end{array}
\right)\,, \quad \det A = 1 \label{detaa}
\ee
From the unit determinant we can also conclude that after diagonalization the matrix $A$ becomes the unit matrix.

Now consider a bilinear $(\phi_i^\dagger \phi_j)$ with $i,j \not = 1$; its vector of coefficients can be represented as $a_{j-1} - a_{i-1}$, [phase $(\phi_i^\dagger \phi_j) =$ phase $(\phi_i^\dagger \phi_1)(\phi_1^\dagger \phi_j)$]. More generally, for any monomial $(\phi_i^\dagger \phi_j)(\phi_k^\dagger \phi_m)$ with any $i,j,k,m$, the vector of coefficients
has the form $a_{j-1} - a_{i-1} + a_{m-1} - a_{k-1}$, where $a_0$ is understood as zero.
This means that the vector of coefficients of any monomial can be represented as a linear combination of $a$'s
with coefficients $0$, $\pm 1$ and $\pm 2$:
\bea
(\phi_i^\dagger \phi_j) , (\phi_i^\dagger \phi_1)(\phi_1^\dagger \phi_j) \quad : \quad & a_{j-1} - a_{i-1} & \quad \rightarrow  \quad +1, -1  \nonumber\\ 
(\phi_1^\dagger \phi_j) \quad : \quad & a_{j-1} & \quad \rightarrow  \quad  +1 , 0  \nonumber\\ 
(\phi_1^\dagger \phi_j)^2 \quad : \quad & 2 a_{j-1} & \quad \rightarrow  \quad  +2 , 0  \nonumber\\ 
(\phi_1^\dagger \phi_i)(\phi_1^\dagger \phi_j) \quad : \quad & a_{j-1}+a_{i-1} & \quad \rightarrow  \quad  +1   \nonumber\\ 
(\phi_i^\dagger \phi_j)^2 \quad : \quad & 2a_{j-1} - 2a_{i-1} &\quad \rightarrow  \quad +2, -2 
\eea

Since the $X$-matrix is constructed from $n$ such vectors,  we can represent it as
\be
X_{ik} = c_{ij}A_{jk} \label{XcA}
\ee
The square $n \times n$ matrix $c_{ij}$ can contain rows only of the following nine types (up to permutation and overall sign change):
\bea
\mbox{type }1: &(\phi_i^\dagger \phi_1)(\phi_1^\dagger \phi_1)& :(1,\,0,\,\cdots,\,0) \nonumber\\
\mbox{type }2: &(\phi_1^\dagger \phi_i)^2 & :(2,\,0,\,\cdots,\,0) \nonumber\\
\mbox{type }3: &(\phi_1^\dagger \phi_i)(\phi_1^\dagger \phi_j) & :(1,\,1,\,0,\,\cdots,\,0) \nonumber\\
\mbox{type }4: &(\phi_1^\dagger \phi_i)(\phi_j^\dagger \phi_1)& :(1,\,-1,\,0,\,\cdots,\,0) \nonumber\\
\mbox{type }5: &(\phi_1^\dagger \phi_i)(\phi_j^\dagger \phi_i)& :(2,\,-1,\,0,\,\cdots,\,0) \label{nine-types}\\
\mbox{type }6: &(\phi_i^\dagger \phi_j)(\phi_k^\dagger \phi_l)& :(1,\, 1,\,-1,\,0,\,\cdots,\,0) \nonumber\\
\mbox{type }7: &(\phi_i^\dagger \phi_j)^2& :(2,\, -2,\,0,\,\cdots,\,0) \nonumber\\
\mbox{type }8: &(\phi_i^\dagger \phi_j)(\phi_i^\dagger \phi_k)& :(2,\, -1,\,-1,\,0,\,\cdots,\,0) \nonumber\\
\mbox{type }9: &(\phi_i^\dagger \phi_j)(\phi_k^\dagger \phi_l)& :(1,\, 1,\,-1,\,-1,\, 0,\,\cdots,\,0) \nonumber
\eea
It follows from (\ref{XcA}) and (\ref{detaa}) that $\det X = \det c\cdot \det A = \det c$.
Therefore, the order of any finite group is given by the module of determinant of $c$: $|G| = |\det c|$.

Let us also note two properties of the strings of type 1--9. Take any such string of length $n$, $x_{(n)} = (x_1,\,x_2,\,\dots,\,x_n)$,
which is obtained from (\ref{nine-types}) by an arbitrary permutation and possibly an overall sign flip.
Then any substring $x_{(n-1)} = (x_1,\,\dots,\,x_{k-1},\, x_{k+1},\,\dots,\,x_n)$ obtained by removing
an arbitrary element $x_k$ is also of type 1--9.
Moreover, the element removed can be added at any place, and still the string
$x'_{(n-1)} = (x_1,\,\dots,\,x_m+x_k,\,\dots,\,x_{k-1},\, x_{k+1},\,\dots,\,x_n)$
remains of type 1--9.
Both properties can be proved by direct inspection of all the strings.

Now we are ready to prove Theorem~\ref{theorem-order}.

\begin{proof}
We prove by induction. Suppose that for any $(n-1)\times (n-1)$ square matrix $D_{n-1}$
whose rows are strings of type 1--9, its determinant $d_{n-1} = \det D_{n-1}$ is limited by
$|d_{n-1}| \le 2^{n-1}$. Take now any $n\times n$ matrix $D_{n}$ constructed from the same
family of strings and compute its determinant $d_{n}$ by minor expansion over the first row,
with $d^{(1)}_{n-1}$, $d^{(2)}_{n-1}$, $\dots$, being the relevant minors.
The procedure then depends on what type the first row is.
\begin{enumerate}
\item
If the first row is of type 1 or 2, then $|d_{n}| \le 2 |d^{(1)}_{n-1}| \le 2^n$.
\item
If the first row is of type 3 or 4, then $|d_{n}| = | d^{(1)}_{n-1} \pm d^{(2)}_{n-1}| \le
| d^{(1)}_{n-1}| + |d^{(2)}_{n-1}|  \le 2^n$.
\item
If the first row is of type 5, then we permute the columns so that it becomes exactly as in (\ref{nine-types})
and then add the second column to the first one. Then, the first row becomes $(1,\,-1,\,0,\,\cdots,\,0)$.
The first minor does not change and contains rows such as $(x_2,\,x_3,\,\dots,\,x_n)$,
while the second minor contains rows $(x_1+x_2,\,x_3,\,\dots,\,x_n)$.
Due to the properties discussed above, these strings are also of type 1--9,
therefore the induction assumption applies to both minors.
We therefore conclude that
$|d_{n}| = | d^{(1)}_{n-1} + d^{(2)}_{n-1}| \le 2^n$.
\item
If the first row is of type 6, then repeat the same procedure with only one change, that is we add the second column
to the third. The first row becomes $(1,\,1,\,0,\,\dots,\,0)$, while
the other rows have generic form $(x_1,\,x_2,\,x_2+x_3,\, x_4,\, \dots)$.
The first minor contains rows of the form $(x_2,\,x_2+x_3,\, x_4,\, \dots)$, which is
equivalent to $(x_2,\,x_3,\, x_4,\, \dots)$, while the second minor contains
$(x_1,\,x_2+x_3,\, x_4,\, \dots)$. Both rows are of type 1--9, therefore
$|d_{n}| = | d^{(1)}_{n-1} - d^{(2)}_{n-1}| \le 2^n$.
\item
If the first row is of type 7, we follow the procedure described for type 5 and get
$|d_{n}| = 2| d^{(1)}_{n-1}| \le 2^n$.
\item
If the first row is of type 8, we add the second and the third columns to the first, so that
the first row becomes $(0,\,-1,\,-1,\,\dots,\,0)$, while the other rows are of the form
$(x_1+x_2+x_3,\,x_2,\,x_3,\, x_4,\,\dots)$.
The second minor then contains rows $(x_1+x_2+x_3,\,x_3,\,x_4,\, \dots)$ which are equivalent
to $(x_1+x_2,\,x_3,\,x_4,\, \dots)$, being of an allowed type.
The third minor contains $(x_1+x_2+x_3,\,x_2,\,x_4,\, \dots)$ equivalent
to $(x_1+x_3,\,x_2,\,x_4,\, \dots)$, again of an allowed type.
Therefore the induction assumption applies to both minors, and we conclude that
$|d_{n}| = | d^{(2)}_{n-1} - d^{(3)}_{n-1}| \le 2^n$.
\item
If the first row is of type 9, we add the third column to the first and the fourth column to the second.
The first row turns into $(0,\,0,\,-1,\,-1,\,\dots,\,0)$, while the other rows become of the form
 $(x_1+x_3,\,x_2+x_4,\,x_3,\,x_4,\, \dots)$.
The third minor is built of rows $(x_1+x_3,\,x_2+x_4,\,x_4,\, \dots)$, which are equivalent to
$(x_1+x_3,\,x_2,\,x_4,\, \dots)$, while the fourth minor is built of rows $(x_1+x_3,\,x_2+x_4,\,x_3,\, \dots)$ equivalent to
$(x_1,\,x_2+x_4,\,x_3,\, \dots)$. Both are of the allowed type, so the induction assumption
applies to both minors, and
$|d_{n}| = | d^{(4)}_{n-1} - d^{(3)}_{n-1}| \le 2^n$.
\end{enumerate}
Therefore, $|d_{n}| \le 2^n$ follows for any type of the first row, which completes the proof.
\end{proof}

\subsection{Cyclic groups and their products}

In this Section we prove two propositions which show that a rather broad class of finite Abelian groups 
are indeed realizable as the symmetry groups of the Higgs
potential in the NHDM. First we show which cyclic groups are realizable and then consider direct products
of cyclic groups. 

Before we proceed to the main results, we need to prove that whenever we have the decomposition (\ref{XcA}), 
it is indeed sufficient to diagonalize the matrix $c_{ij}$ instead of full matrix $X$
in order to reconstruct the finite symmetry group.
Let us recall that the matrix $X$ defines a system of linear equations on angles $\alpha_i$:
\be
X_{ij}\alpha_j = c_{ik}A_{kj} \alpha_j = 2\pi r_i 
\ee
This system remains the same upon
\begin{itemize}
\item
simultaneous sign change of the $k$-th column in $c$ and $k$-th row in $A$ 
\item
simultaneous exchange of two columns in $c$ and two rows in $A$ 
\item
simultaneous summation of two columns in $c$ and subtraction of two rows in $A$  
\item
sign changes of the $k$-th column in $A$ and $\alpha_k$ 
\item
exchange of columns $i$ and $j$ in $A$ and exchange $\alpha_i \leftrightarrow \alpha_j$ 
\item 
summation of two columns in $A$ and subtraction of two $\alpha$'s 
\item
sign changes of the $k$-th row in $c$ and of the integer parameters $r_k$ 
\item
exchange of two rows in $c$ and of two $r$'s 
\item
summation of two rows in $c$ and of two parameters $r$ 
\end{itemize}
It means that all allowed manipulations with integer matrices described earlier
can be used for $A$ and $c$.

The next step is to diagonalize $A$; the matrix $c$ becomes modified by a series of allowed transformations.
But diagonal $A$ is equal to the unit matrix. Therefore, we are left only with $c$ in the system of equations.
We then diagonalize $c$ and construct the symmetry group from its diagonal values. 

This means that in order to prove a given group is realizable, we simply need to give an example of matrix $c_{ij}$ constructed from rows of type 1-9 (\ref{nine-types}) , which yields the desired diagonal values after diagonalization.

Now we move to two propositions.

\begin{proposition}\label{prop-cyclic-NHDM}
The cyclic group $\Z_p$ is realizable for any positive integer $p \le 2^n$.
\end{proposition}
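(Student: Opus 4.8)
The plan is to invoke the reduction established just before the proposition: to show that $\Z_p$ is realizable it suffices to exhibit a single $n\times n$ integer matrix $c$ (with $n=N-1$) whose rows each belong to the list of nine types (\ref{nine-types}) and whose Smith normal form equals $\diag(1,\dots,1,p)$, since the symmetry group read off from the diagonal values is then exactly $\Z_p$. The inequality $p\le 2^n$ is forced in the other direction by Theorem~\ref{theorem-order} ($|G|\le 2^{N-1}$), so only the existence half needs proof. The key point to appreciate first is that a naive diagonal choice (a few $2$'s and the rest $1$'s on the diagonal) does give a determinant equal to a power of two, but its group is $(\Z_2)^k$ rather than the \emph{cyclic} $\Z_{2^k}$; producing a genuinely cyclic group of a prescribed order while keeping every row inside the rigid list (\ref{nine-types}) is the real content.

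Concretely, the construction I propose is to put the bidiagonal ``core'' ($2$ on the diagonal, $-1$ on the subdiagonal) in the first $n-1$ columns and to use the last column as a vector of ``digits'' $\gamma_1,\dots,\gamma_n$:
\[
c=\begin{pmatrix}
2 & & & & \gamma_1\\
-1 & 2 & & & \gamma_2\\
 & -1 & \ddots & & \vdots\\
 & & \ddots & 2 & \gamma_{n-1}\\
 & & & -1 & \gamma_n
\end{pmatrix}.
\]
Thus row $1$ is $(2,0,\dots,0,\gamma_1)$, the middle rows are $(0,\dots,-1,2,0,\dots,\gamma_i)$, and the last row is $(0,\dots,0,-1,\gamma_n)$. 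A short case check then shows these rows lie in (\ref{nine-types}) precisely when $\gamma_1,\dots,\gamma_{n-1}\in\{0,-1\}$ (types $2$ or $5$ for the top row, types $5$ or $8$ for the middle rows) and $\gamma_n\in\{1,2\}$ (types $4$ or $5$).

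Next I would compute the determinant by cofactor expansion along the last column. Each minor is block-triangular and collapses to $\pm 2^{\,i-1}$, so the whole thing telescopes to the clean formula $\det c=\sum_{i=1}^{n}2^{\,i-1}\gamma_i$. With $\gamma_1,\dots,\gamma_{n-1}\in\{0,-1\}$ the partial sum $\sum_{i<n}2^{\,i-1}\gamma_i$ runs over every integer in $\{-(2^{n-1}-1),\dots,0\}$ (ordinary base-$2$ subset sums), so adding $2^{n-1}\gamma_n$ sweeps $\{1,\dots,2^{n-1}\}$ when $\gamma_n=1$ and $\{2^{n-1}+1,\dots,2^{n}\}$ when $\gamma_n=2$; together these realize $|\det c|=p$ for every $1\le p\le 2^{n}$. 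Cyclicity follows by inspecting the submatrix on rows $2,\dots,n$ and columns $1,\dots,n-1$, which is triangular with determinant $\pm1$: hence the $(n-1)$-th determinantal divisor of $c$ is $1$, the invariant factors are $1,\dots,1,|\det c|$, and the Smith normal form is indeed $\diag(1,\dots,1,p)$. The degenerate case $n=1$ (2HDM) is immediate, since $(1)$ and $(2)$ give $\Z_1$ and $\Z_2$.

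I expect the main obstacle to be exactly the three-way tension highlighted above: arranging a single off-diagonal pattern that simultaneously keeps every row legal in the sense of (\ref{nine-types}), makes the determinant a fully controllable base-$2$ sweep across the \emph{entire} interval $[1,2^n]$, and forces a $\pm1$ minor so that the group is cyclic rather than a product of smaller factors. The bidiagonal core is precisely the device that reconciles all three demands at once; the only genuinely fiddly steps are the row-type bookkeeping for the admissible ranges of the $\gamma_i$ and the cofactor computation establishing $\det c=\sum_i 2^{\,i-1}\gamma_i$.
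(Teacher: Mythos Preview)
Your proposal is correct and follows essentially the same idea as the paper's proof: both build a bidiagonal $n\times n$ matrix with $2$'s on the diagonal and $-1$'s adjacent, then flip a binary pattern of entries in one column to realise every determinant value from $1$ to $2^n$ while keeping each row within the nine admissible types. The only cosmetic differences are that the paper places the $-1$'s on the superdiagonal and perturbs the first column (then diagonalises by explicit row/column moves), whereas you place them on the subdiagonal, perturb the last column, and certify cyclicity via the $\pm1$ minor and Smith normal form; neither variation affects the substance.
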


\begin{proof}We start with the following $n \times n$ matrix $c_{ij}$:
\be
c_{2^n} = 
\left(\begin{array}{ccccccc}
2 & -1 & 0 & 0 &\cdots & 0 & 0 \\
0 & 2 & -1 & 0 &\cdots & 0 & 0 \\
0 & 0 & 2 & -1 &\cdots & 0 & 0 \\
\vdots & \vdots & \vdots & \vdots & & \vdots & \vdots \\
0 & 0 & 0 & 0 &\cdots & 2 & -1 \\
0 & 0 & 0 & 0 & \cdots & 0 & 2
\end{array}\right) \label{c2n}
\ee
We could manipulate this matrix, by multiplying the last column by 2 and adding it to the $n-1$'th column, and repeating this procedure for other columns, we arrive at 
\be
\left(\begin{array}{ccccccc}
0 & -1 & 0 & 0 &\cdots & 0 & 0 \\
0 & 0 & -1 & 0 &\cdots & 0 & 0 \\
0 & 0 & 0 & -1 &\cdots & 0 & 0 \\
\vdots & \vdots & \vdots & \vdots & & \vdots & \vdots \\
0 & 0 & 0 & 0 &\cdots & 0 & -1 \\
2^n & 2^{n-1} & 2^{n-2} & 2^{n-3} & \cdots & 2^2 & 2
\end{array}\right) \nonumber
\ee
Now we could vanish the elements $2^{n-1}, \cdots, 2^2, 2$ in the last row, by adding other rows to it as many times as needed. Then we swap the rows $(-1)^n$ times to bring the matrix into a diagonal form, which also kills the minus sign in front of all the $1$'s;
\be 
c_{2^n} = \left(\begin{array}{ccccccc}
2^n & 0 & 0 &\cdots & 0 & 0 \\
0 &  1 & 0 &\cdots & 0 & 0 \\
0 & 0 & 1 &\cdots & 0 & 0 \\
\vdots & \vdots & \vdots & & \vdots & \vdots \\
0 & 0 & 0 & \cdots & 0 & 1
\end{array}\right)  \nonumber
\ee
which produces the finite group $\Z_{2^n}$.

Now, let us modify $c_{2^n}$ in (\ref{c2n}) by replacing the zero at the left lower corner by $-1$.
Then the same transformation leads us to the group $\Z_{2^n-1}$.

If, instead, we replace any zero in the first column by $-1$, $c_{k1} = -1$,
then the diagonalization procedure leads us to the group $\Z_p$ with $p = 2^n - 2^{n-k}$.

Now, generally, take any integer $0 \le q < 2^{n}$ and write it in the binary form.
This form uses at most $n$ digits. Write this binary form as a vector with $n$ components
and subtract it from the first column of the matrix $c_{2^n}$. Then, after diagonalization,
we obtain the symmetry group $\Z_p$ with $p = 2^n - q$. This completes the proof.
\end{proof}

Just to illustrate this construction, for $n=5$ and $q = 23$ we write
$$
q = 23_{10} = 10111_2 = \left(\begin{array}{c}
1 \\
0 \\
1 \\
1 \\
1
\end{array}
\right); \quad c_{2^5} - q \equiv 
\left(\begin{array}{ccccccc}
1 & -1 & 0 & 0 & 0 \\
0 & 2 & -1 & 0 & 0 \\
-1 & 0 & 2 & -1 & 0 \\
-1 & 0 & 0 & 2 & -1 \\
-1 & 0 & 0 & 0 & 2
\end{array}\right)  
$$
Repeating the above procedure we obtain at the lowest left corner $2^5 - 2^4 - 2^2 - 2^1 - 2^0 = 32 - 23 = 9$,
which gives the $\Z_9$ group.

Now we show that not only cyclic groups but many of their products can be realized as symmetry groups of some potential.

\begin{proposition}\label{prop-many-cyclic-NHDM}
Let $n = \sum_{i=1}^k n_i$ be a partitioning of $n$ into a sum of non-negative integers $n_i$.
Then, the finite group 
\be
G = \Z_{p_1}\times \Z_{p_2}\times \cdots \times \Z_{p_k}\label{manyZn}
\ee
is realizable for any $0 < p_i \leq 2^{n_i}$.
\end{proposition}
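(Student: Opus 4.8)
The plan is to reduce the statement to Proposition~\ref{prop-cyclic-NHDM} by a direct-sum (block-diagonal) construction. Recall from the discussion preceding Proposition~\ref{prop-cyclic-NHDM} that to realize a given finite group it suffices to exhibit an $n\times n$ integer matrix $c$ whose rows belong to the nine types of (\ref{nine-types}) and which, after the allowed integer row and column manipulations, diagonalizes to a matrix whose diagonal entries are the prescribed orders. Since the group in (\ref{manyZn}) is itself a direct product, the natural idea is to build $c$ as a block-diagonal matrix, one block per cyclic factor.

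Concretely, first I would split the $n$ columns --- equivalently, the torus generators $a_1,\dots,a_n$ of (\ref{a-matrix-def}), i.e.\ the doublets other than $\phi_1$ --- into consecutive blocks $B_1,\dots,B_k$ of sizes $n_1,\dots,n_k$. For each block with $n_i\ge 1$, Proposition~\ref{prop-cyclic-NHDM} furnishes an $n_i\times n_i$ matrix $c^{(i)}$, assembled from rows of type 1--9, that diagonalizes to $\diag(p_i,1,\dots,1)$ for the prescribed $p_i\le 2^{n_i}$; a block with $n_i=0$ forces $p_i=1$ and contributes only a trivial factor, so it may be discarded. I would then set $c=\diag(c^{(1)},\dots,c^{(k)})$ and take the associated potential.

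It then remains to check two points. (i) Each row of $c$ is a row of some $c^{(i)}$ with zeros inserted in the columns of the other blocks; since the nine types in (\ref{nine-types}) are classified by the multiset of their nonzero entries --- a classification insensitive to permutations, overall sign, and the insertion of zeros --- every row of $c$ is again of type 1--9, so $c$ defines an admissible potential of the form treated in Section~\ref{Identify}. (ii) The integer operations that diagonalize $c^{(i)}$ act only on the rows and columns inside $B_i$, and because the matrix is block-diagonal these operations leave the other blocks untouched; performing them block by block brings $c$ to diagonal form with entries $p_1,1,\dots,1,p_2,\dots,p_k,1,\dots,1$. By the prescription of Section~\ref{Identify} the symmetry group is the product of the $\Z$-factors read off the diagonal, namely $\Z_{p_1}\times\cdots\times\Z_{p_k}$, the $\Z_1$'s being trivial. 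Realizability proper follows because $\det c=\prod_i p_i\neq 0$, so $X=cA$ has full rank $n$ (using $\det A=1$ and $X=cA$ from (\ref{detaa}), (\ref{XcA})) and, by the genericity argument of Section~\ref{Identify}, no larger unitary symmetry is generated. The main obstacle here is not computational but verificational: one must be sure that zero-padding keeps each row in the list 1--9 and that the block structure is respected by all the manipulations. Both are immediate once stated, which is precisely why the real work has already been carried out in Proposition~\ref{prop-cyclic-NHDM}.
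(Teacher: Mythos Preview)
Your proposal is correct and follows essentially the same approach as the paper: construct a block-diagonal $c$-matrix with blocks of sizes $n_1,\dots,n_k$, and within each block invoke the construction of Proposition~\ref{prop-cyclic-NHDM} to encode $\Z_{p_i}$. The paper phrases this as starting from $c_{2^n}$ and zeroing out selected superdiagonal $-1$'s to split it into blocks, but the content is identical; your explicit checks that zero-padding preserves membership in types 1--9 and that $\det c\neq 0$ guarantees realizability are details the paper leaves implicit.
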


\begin{proof}
Let us start again from the matrix (\ref{c2n}). For any partitioning of $n$ one can turn this matrix into a block-diagonal
matrix by replacing some of the $-1$'s by zeros. The matrix is then represented by the smaller square blocks of size
$n_1,\, n_2,\, \dots ,\, n_k$. The $i$-th block has exactly the form of (\ref{c2n}), with $n$ replaced by $n_i$.
Therefore, within this block one can encode any cyclic group $\Z_{p_i}$, with $0 < p_i \le 2^{n_i}$.
Since each block can be treated independently, we can encode any group of the form (\ref{manyZn}).


\be 
\left(\begin{array}{cccccccccc}
2 & -1 &| & 0 & 0 &\cdots & 0 & |& 0 &0 \\
0 & 2 &| & 0 & 0 &\cdots & 0 & |& 0 &0 \\
- & - &- & - & - &\cdots & - & -& - &- \\
0 & 0 &| & 2 & -1 &\cdots & 0 & |& 0 &0 \\
\vdots & \vdots & | & \vdots & \vdots &  & \vdots & | & \vdots & \vdots \\
0 & 0 &| & 0 & 0 &\cdots & 2 & |& 0  & 0 \\
- & - &- & - & - &\cdots & - & - & - &- \\
0 & 0 &| & 0 & 0 &\cdots &0  & |& 2 & -1 \\
0 & 0 &| & 0 & 0 & \cdots & 0 & |& 0 & 2
\end{array}\right)
\ee

\end{proof}

Let us note that this proposition does not yet exhaust all possible finite Abelian groups with order $\le 2^n$.
For example, for $n=5$, we can think of the group $\Z_5 \times \Z_5$ whose order is smaller than $2^5 = 32$.
However, there exists no partitioning of $5$ that would lead to this group by applying the proposition just proved.
At this moment, we cannot prove whether such groups are also realizable.
However, we conjecture that they are. Together with the earlier conjecture on the upper bound on the order of the
finite group, we can now propose the following conjecture:
\begin{conjecture}\label{conjecture-list}
Any finite Abelian group with order $\le 2^{N-1}$ is realizable in NHDM.
\end{conjecture}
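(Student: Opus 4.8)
The plan is to recast the conjecture as a statement purely about integer matrices and then to confront exactly the cases that Proposition~\ref{prop-many-cyclic-NHDM} leaves open. By the analysis of Section~\ref{Identify}, a finite Abelian group $G$ is realizable precisely when there is an $n\times n$ integer matrix $c$ (with $n=N-1$) whose rows are of the nine types in (\ref{nine-types}) and whose diagonalized (Smith normal) form reproduces the invariant factors of $G$. Since the order satisfies $|G|=|\det c|$, Theorem~\ref{theorem-order} already shows $|G|\le 2^n$ is necessary; the conjecture asserts it is also sufficient. Thus the entire task is to prove that \emph{every} invariant-factor profile with product $\le 2^n$ can be matched by such a type-restricted matrix.

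First I would dispose of the cyclic groups and reduce the width of the problem. By the structure theorem write $G\cong \Z_{d_1}\times\cdots\times\Z_{d_r}$ in invariant-factor form, with $d_1\mid d_2\mid\cdots\mid d_r$ and $\prod_i d_i\le 2^n$; here $r$ is the rank of $G$. If $r=1$ then $G$ is cyclic and Proposition~\ref{prop-cyclic-NHDM} already gives realizability. Since cyclic factors of coprime order can always be fused by the Chinese remainder theorem, the invariant-factor form is the most economical presentation, and the difficulty grows with the rank: the troublesome groups are those of high rank supported on small \emph{odd} primes, for which no single cyclic factor absorbs much of the order.

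Next I would pin down exactly where Proposition~\ref{prop-many-cyclic-NHDM} is insufficient. Setting $p_i=d_i$, that proposition succeeds whenever $n$ admits a partition $n=\sum_i n_i$ with $n_i\ge\lceil\log_2 d_i\rceil$, i.e. whenever the rounding loss $\sum_i(\lceil\log_2 d_i\rceil-\log_2 d_i)$ is absorbed by the slack $n-\sum_i\log_2 d_i$. The residual hard cases are the \emph{tight} groups, where $\prod_i d_i$ is close to $2^n$ and no factor is a power of two—the prototype being $\Z_5\times\Z_5$ at $n=5$, where two factors each demand three rows but only five rows are available. For these I would abandon the block-diagonal construction and try \emph{coupled} matrices: take two bidiagonal "$2$ on the diagonal, $-1$ above" blocks as in (\ref{c2n}), join them by a single type-4 off-block entry, and modify the first columns of each block by binary vectors exactly as in the proof of Proposition~\ref{prop-cyclic-NHDM}. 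The aim is to tune the coupling so that the global Smith form splits the determinant $p^2$ into two invariant factors $p$, rather than one factor $p^2$, thereby realizing the group in $n<2\lceil\log_2 p\rceil$ rows.

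The main obstacle is precisely this last step: controlling the Smith normal form of coupled, type-restricted matrices. The determinant is trivial to track, but the invariant factors depend on the greatest common divisors of \emph{all} minors, and the rows are confined to the nine allowed shapes, so one has none of the Hermite/Smith freedom needed to impose an arbitrary diagonal. A complete proof would require a general lemma—plausibly an induction parallel to that of Theorem~\ref{theorem-order}, but carrying the full gcd-of-minors data rather than only $|\det c|$—guaranteeing that any factorization of the determinant into invariant factors $d_1\mid\cdots\mid d_r$ with $\prod_i d_i\le 2^n$ can be synthesized. Establishing such a lemma, equivalently ruling out any hidden arithmetic obstruction beyond the order bound $|G|\le 2^n$, is exactly the content of the conjecture and the point at which the argument currently stops short of a proof.
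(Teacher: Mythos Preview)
The statement in question is a \emph{conjecture} in the paper, not a theorem: the paper explicitly says ``At this moment, we cannot prove whether such groups are also realizable. However, we conjecture that they are.'' There is therefore no proof in the paper to compare against, and you are not expected to supply one.

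Your proposal is honest about this: you correctly reduce the problem to producing a type-restricted $n\times n$ integer matrix with prescribed Smith normal form, you correctly identify the cases covered by Propositions~\ref{prop-cyclic-NHDM} and~\ref{prop-many-cyclic-NHDM}, and you correctly isolate the residual difficulty as the ``tight'' groups of high rank on small odd primes, with $\Z_5\times\Z_5$ at $n=5$ as the prototypical open case---exactly the example the paper itself flags. Your final paragraph explicitly acknowledges that controlling the full Smith form (not just the determinant) under the nine row-type constraints is the missing ingredient, and that this is precisely the content of the conjecture. That diagnosis is accurate and matches the paper's own assessment; there is no gap in your reasoning beyond the one you have already named, which is the conjecture itself.
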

If proven true, this conjecture will give the complete classification of realizable subgroups of the maximal torus in NHDM.

\section{Examples of 3HDM and 4HDM}\label{section-3HDM-4HDM}

In this Section we illustrate the general strategy presented above with the particular cases of 3HDM and 4HDM.
We give the full list of Abelian groups realizable as symmetry groups of the Higgs potential and show explicit examples of such potentials. 

\subsection{Unitary Abelian symmetries of the 3HDM potential}

In 3HDM the representative maximal torus $T \subset PSU(3)$ is parametrized as
\be
T = U(1)_1\times U(1)_2\,,\quad U(1)_1 = \alpha(-1,\,1,\,0)\,,\quad U(1)_2 = \beta\left(-{2 \over 3},\, {1 \over 3},\, {1 \over 3}\right)
\label{3HDM-maximaltorus}
\ee
where $\alpha,\beta \in [0,2\pi)$ (see \ref{def-of-U}). 
The most general Higgs potential symmetric under $T$ is
\bea
\label{3HDMpotential-Tsymmetric}
V &=& - m_1^2 (\phi_1^\dagger \phi_1) - m_2^2 (\phi_2^\dagger \phi_2) - m_3^2 (\phi_3^\dagger \phi_3)\\
&&+ \lambda_{11} (\phi_1^\dagger \phi_1)^2 + \lambda_{22} (\phi_2^\dagger \phi_2)^2 + \lambda_{33} (\phi_3^\dagger \phi_3)^2\nonumber\\
&&+ \lambda_{12} (\phi_1^\dagger \phi_1) (\phi_2^\dagger \phi_2) 
+ \lambda_{23} (\phi_2^\dagger \phi_2) (\phi_3^\dagger \phi_3)
+ \lambda_{13} (\phi_3^\dagger \phi_3) (\phi_1^\dagger \phi_1)
\nonumber\\
&&+ \lambda'_{12} |\phi_1^\dagger \phi_2|^2 + \lambda'_{23} |\phi_2^\dagger \phi_3|^2 + \lambda'_{13} |\phi_3^\dagger \phi_1|^2 \nonumber
\eea
There are six bilinear combinations of doublets transforming non-trivially under $T$ with coefficients
\be
\begin{tabular}{c | cc}
& $p$ & $q$ \\
\hline
$(\phi_2^\dagger \phi_1)$ & $-2$ & $-1$ \\
$(\phi_3^\dagger \phi_2)$ & $1$ & $0$ \\
$(\phi_1^\dagger \phi_3)$ & $1$ & $1$
\end{tabular}
\label{pq-explicit-3HDM}
\ee
and their conjugates whose coefficients $p$ and $q$ carry opposite signs with respect to (\ref{pq-explicit-3HDM}). 
These coefficients are shown graphically 
as labels of the edges of two oriented graphs shown in Figure (\ref{graph3HDM}).

\begin{figure} [ht]
\centering
\includegraphics[height=3cm]{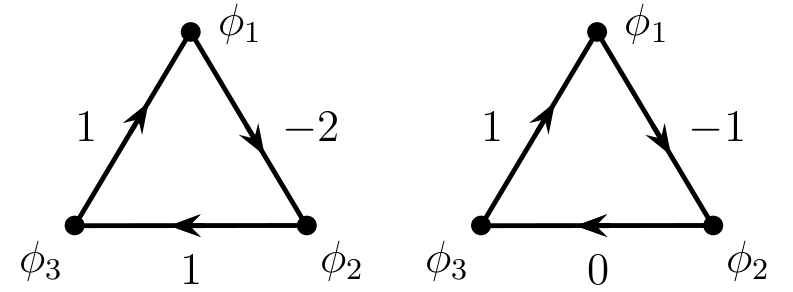}
\caption{Coefficients $p$ and $q$ as labels of triangles representing bilinears in 3HDM}
\label{graph3HDM}
\end{figure}

Before applying the general strategy described in the previous Section, let us 
check which finite symmetry groups can arise as subgroups of either $U(1)_1$ or $U(1)_2$ individually.

Consider first $U(1)_1$. The coefficient $p$ of any monomial can be obtained either by picking the labels from the first graph in Figure (\ref{graph3HDM}) directly or by summing two such labels, multiplied by $-1$ when needed.
In this way we can obtain any $|p|$ from $0$ to $4$. For a monomial with $p=0$, the symmetry group is the entire $U(1)_1$. 
For $|p|=1$, there is no non-trivial symmetry. For $|p| = 2,3,4$, we obtain the cyclic group $\Z_{|p|}$.
In each case it is straightforward to construct the monomials with a given symmetry;
for example, $(\phi_1^\dagger \phi_3)(\phi_2^\dagger \phi_3)$ and its conjugate are $U(1)_1$-symmetric,
while $(\phi_1^\dagger \phi_2)(\phi_1^\dagger \phi_3)$,  $(\phi_1^\dagger \phi_2)(\phi_3^\dagger \phi_2)$
and their conjugates are symmetric under the $\Z_3$-group with a generator 
\be
a = (\omega,\,1,\,\omega^{-1})\,,\quad \omega \equiv \exp\left(2\pi i/3\right) \label{Z3generator}
\ee
In the case of the group $U(1)_2$, the labels can sum up to $|q| = 0,1,2$. Thus, the largest finite group here is $\Z_2$.

As we mentioned in the previous Section, one cannot limit oneself to subgroups of individual $U(1)_i$ factors
or to direct products of such subgroups. 
In order to find all realizable groups, one has to write the full list of possible monomials and then calculate the symmetry
group of all distinct pairs (for $N=3$) of monomials. For example, if the two monomials are
$v_1 = \lambda_1(\phi_1^\dagger \phi_2)(\phi_1^\dagger \phi_3)$ and $v_2 = \lambda_2(\phi_2^\dagger \phi_1)(\phi_2^\dagger \phi_3)$,
then the matrix $X(v_1+v_2)$ has form
\be
X(v_1+v_2) = \left(\begin{array}{cc} 3 & 2 \\ -3 & -1 \end{array}\right) \to \left(\begin{array}{cc} 3 & 2 \\ 0 & 1 \end{array}\right) 
\to \left(\begin{array}{cc} 3 & 0 \\ 0 & 1 \end{array}\right)
\ee
and it produces the symmetry group $\Z_3$. The solution of the equation 
\be
X(v_1+v_2) \left(\begin{array}{c} \alpha \\ \beta \end{array}\right) = \left(\begin{array}{c} 2\pi n_1 \\ 2 \pi n_2 \end{array}\right)\label{Xv1v2eq}
\ee
yields $\alpha = 2\pi/3\cdot k$, $\beta=0$,
which implies the transformation matrix of the doublets (\ref{Z3generator}).

In 3HDM there are, up to complex conjugation, three bilinears and nine products of two bilinears transforming non-trivially under $T$. 

The full list of monomials in this case is as follows:
\begin{itemize}
\item
Linear: $(\phi_2^\dagger \phi_1), (\phi_3^\dagger \phi_2),(\phi_1^\dagger \phi_3) $
\item
Quadratic: $(\phi_2^\dagger \phi_1)^2, (\phi_3^\dagger \phi_2)^2,(\phi_1^\dagger \phi_3)^2, (\phi_2^\dagger \phi_1)(\phi_3^\dagger \phi_2),(\phi_3^\dagger \phi_2)(\phi_1^\dagger \phi_3),  (\phi_1^\dagger \phi_3)(\phi_2^\dagger \phi_1), \\ (\phi_2^\dagger \phi_1)(\phi_2^\dagger \phi_3),(\phi_3^\dagger \phi_2)(\phi_3^\dagger \phi_1), (\phi_1^\dagger \phi_3)(\phi_1^\dagger \phi_2)$ and also $(\phi_2^\dagger \phi_1)(\phi_i^\dagger \phi_i), (\phi_3^\dagger \phi_2)(\phi_i^\dagger \phi_i), \\ (\phi_1^\dagger \phi_3)(\phi_i^\dagger \phi_i)$ and their complex conjugates
\end{itemize}
It is a straightforward exercise to check all possible pairs of monomials (67 pairs); in fact, by using permutations of the doublets
the number of truly distinct cases is rather small. This procedure reveals just one additional finite group $\Z_2 \times \Z_2$,
which arises when at least two terms among $(\phi_1^\dagger\phi_2)^2$, $(\phi_2^\dagger\phi_3)^2$, $(\phi_3^\dagger\phi_1)^2$
are present. This group is simply the group of sign flips of individual doublets.

Thus, we arrive at the full list of subgroups of the maximal torus realizable as the symmetry groups of the Higgs potential in 3HDM:
\be
\Z_2,\quad \Z_3,\quad  \Z_4,\quad \Z_2\times \Z_2,\quad U(1),\quad U(1)\times \Z_2,\quad U(1)\times U(1)  \label{list3HDM}
\ee
Most of these groups were identified in \cite{Ferreira} in the search of "simple" symmetries of 3HDM scalar potential.
In that work a symmetry was characterized not by its group, as in our work, but by providing
a single unitary transformation $S$ and then reconstructing the pattern in the Higgs potential which arises
after requiring that it is $S$-symmetric.
In certain cases the authors of  \cite{Ferreira} found that the potential is symmetric under a larger group than $\langle S\rangle$,
in accordance with the notion of realizable symmetry discussed above.

The explicit correspondence between the seven symmetries $S_1,\dots, S_7$ of \cite{Ferreira} and the list (\ref{list3HDM})
is the following:
\bea
&&S_1 \to \Z_2\,,\quad S_2 \to U(1) \ \mbox{realized as}\ U(1)_2\,,\quad S_3 \to \Z_3 \label{correspondence}\\
&&S_4 \to \Z_4\,,\quad S_5 \to U(1) \ \mbox{realized as}\ U(1)_1\,,\quad S_6 \to U(1)\times \Z_2, 
\quad S_7 \to U(1)\times U(1) \nonumber
\eea
In addition to these symmetries, our list (\ref{list3HDM}) contains one more group $\Z_2\times \Z_2$,
which was not found in \cite{Ferreira} because it does not correspond to a "simple' symmetry.

Let us explicitly write the potentials which are symmetric under each group in (\ref{list3HDM}).
\begin{itemize}
\item
${\bf U(1)\times U(1) = T}$\\ 
The most general $T$-symmetric potential of 3HDM is given by (\ref{3HDMpotential-Tsymmetric}).
Every term in this potential is of the form $(\phi_i^{\dagger}\phi_i)$ or $(\phi_i^{\dagger}\phi_j)(\phi_j^{\dagger}\phi_i)$ from which the coefficients of $p$ and $q$ both become zero, therefore this potential is symmetric under $U(1)_1 \times U(1)_2$.
\item
${\bf U(1)}$\\ This group can be realized in two non-equivalent ways, which are conjugate either to $U(1)_1$ or $U(1)_2$ in (\ref{3HDM-maximaltorus}).

The general $U(1)_1$-invariant potential contains the following terms, in addition to (\ref{3HDMpotential-Tsymmetric}):
\be
\lambda_{1323}(\phi_1^\dagger\phi_3)(\phi_2^\dagger\phi_3) + h.c.\label{U11potential}
\ee

while the general $U(1)_2$-invariant potential contains
\bea
&&-m_{23}^2(\phi_2^\dagger\phi_3) + \left[\lambda_{1123} (\phi_1^\dagger\phi_1) + \lambda_{2223} (\phi_2^\dagger\phi_2) 
+ \lambda_{3323} (\phi_3^\dagger\phi_3)\right](\phi_2^\dagger\phi_3)\nonumber\\
&&+ \lambda_{2323}(\phi_2^\dagger\phi_3)^2 + \lambda_{2113}(\phi_2^\dagger\phi_1)(\phi_1^\dagger\phi_3) + h.c. \label{U12potential}
\eea

It must be stressed that these potentials are written for the specific convention of groups $U(1)_1$ and $U(1)_2$ used in 
(\ref{3HDM-maximaltorus}). In other bases, the explicit terms symmetric under a $U(1)_1$ or $U(1)_2$-type groups will look differently.
For example, the term $(\phi_2^\dagger\phi_1)(\phi_3^\dagger\phi_1)$ is symmetric under a $U(1)_1$-type
transformation with phases $(0,\alpha,-\alpha)$.

\item
${\bf U(1) \times \Z_2}$\\
Looking back at (\ref{U11potential}), one might be tempted to think that the true symmetry group of this term is not $U(1)$ but $U(1)\times\Z_2$,
because this term is also invariant under sign flip of $\phi_3$. However, inside $SU(3)$, a transformation with phases $(0,0,\pi)$
is equivalent to $(-\pi,\pi,0)$ which is already included in $U(1)_1$. 

A potential whose true symmetry group is $U(1) \times \Z_2$ is given by, in addition to (\ref{3HDMpotential-Tsymmetric}),
\be 
\lambda_{2323} (\phi_2^\dagger\phi_3)^2 + h.c.
\ee
This term is symmetric not only under the full $U(1)_2$,
but also under $(-\pi,\pi,0)$, which generates a $\Z_2$ subgroups inside $U(1)_1$.
\item
${\bf \Z_4}$\\
The potential symmetric under $\Z_4$ contains, in addition to (\ref{3HDMpotential-Tsymmetric}), the following terms:
\be
\lambda_{1323}(\phi_1^\dagger\phi_3)(\phi_2^\dagger\phi_3) + \lambda_{1212}(\phi_1^\dagger\phi_2)^2 + h.c. \label{3hdm-z4-pot}
\ee

As discussed before, the potential must contain at least $N-1=2$ distinct terms to be symmetric under a finite symmetry group. Again, this set of terms is specific for the particular choice of the $(U(1)_1,U(1)_2)$-basis on the maximal torus. This group is generated simply by $a_4=(i,-i,1)$.
\item
${\bf \Z_3}$\\
The potential symmetric under $\Z_3$ contains
\be
\lambda_{1232}(\phi_1^\dagger\phi_2)(\phi_3^\dagger\phi_2) + 
\lambda_{2313}(\phi_2^\dagger\phi_3)(\phi_1^\dagger\phi_3) + 
\lambda_{3121}(\phi_3^\dagger\phi_1)(\phi_2^\dagger\phi_1) + h.c. \label{z3-3hdm-pot}
\ee
In fact, any pair of the three terms is already sufficient to define a $\Z_3$-symmetric potential.
Note also that different $(U(1)_1,U(1)_2)$-bases on the maximal torus lead to the same $\Z_3$-group, because
inside $PSU(3)$ the following groups are isomorphic:
\be
\langle\left(1,\omega,\omega^2\right)\rangle \simeq
\langle\left(\omega,\omega^2,1\right)\rangle \simeq
\langle\left(\omega^2,1,\omega\right)\rangle\,,\quad \omega \equiv \exp(2\pi i/3)\,.
\ee
\item
${\bf \Z_2 \times \Z_2}$\\
This group can be realized simply as a group of independent sign flips of the three doublets (up to the overall sign flip), generated by $a_2=(-1,-1,1)$ and $a'_2=(1,-1,-1)$ .
Every term in the potential must contain each doublet in pairs. In addition to (\ref{3HDMpotential-Tsymmetric}),
the $\Z_2 \times \Z_2$-symmetric potential can contain
\be
\lambda_{1212}(\phi_1^\dagger\phi_2)^2 + 
\lambda_{2323}(\phi_2^\dagger\phi_3)^2 + 
\lambda_{3131}(\phi_3^\dagger\phi_1)^2 + h.c. \label{z2Xz2-3hdm-pot}
\ee
\item
${\bf \Z_2}$\\
This group can be realized, for example, as a group of sign flips of the third doublet, generated by $a_2=(-1,-1,1)$.
The potential can contain any term where $\phi_3$ appears in pairs.

\end{itemize}

In Figure (\ref{3HDMtree}) we show a pictorial description of subgroups of unitary realizable Abelian symmetries in 3HDM.

\begin{figure} [ht]
\centering
\includegraphics[height=7cm]{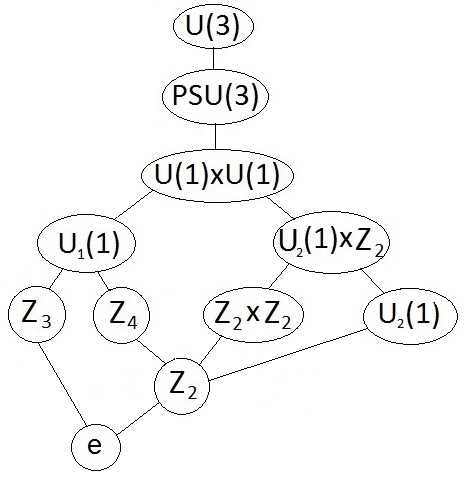}
\caption{Tree of subgroup of unitary Abelian symmetries in 3HDM}
\label{3HDMtree}
\end{figure}

\subsection{The $\Z_3 \times \Z_3$-group}

The only finite Abelian group that is not contained in any maximal torus in $PSU(3)$ is $\Z_3 \times \Z_3$.
Although there are many such groups inside $PSU(3)$, all of them are conjugate to each other. Thus, only one
representative case can be considered, and we choose the following two generators
\be
a = \left(\begin{array}{ccc} 1 & 0 & 0 \\ 0 & \omega & 0 \\ 0 & 0 & \omega^2 \end{array}\right),\quad
b = \left(\begin{array}{ccc} 0 & 1 & 0 \\ 0 & 0 & 1 \\ 1 & 0 & 0 \end{array}\right),\quad \omega = \exp\left({2\pi i \over 3}\right)
\ee
A generic potential that stays invariant under this group of transformations is
\bea
V & = &  - m^2 \left[(\phi_1^\dagger \phi_1)+ (\phi_2^\dagger \phi_2)+(\phi_3^\dagger \phi_3)\right]
+ \lambda_0 \left[(\phi_1^\dagger \phi_1)+ (\phi_2^\dagger \phi_2)+(\phi_3^\dagger \phi_3)\right]^2 \nonumber\\
&&+ \lambda_1 \left[(\phi_1^\dagger \phi_1)^2+ (\phi_2^\dagger \phi_2)^2+(\phi_3^\dagger \phi_3)^2\right]
+ \lambda_2 \left[|\phi_1^\dagger \phi_2|^2 + |\phi_2^\dagger \phi_3|^2 + |\phi_3^\dagger \phi_1|^2\right] \nonumber\\
&&+ \lambda_3 \left[(\phi_1^\dagger \phi_2)(\phi_1^\dagger \phi_3) + (\phi_2^\dagger \phi_3)(\phi_2^\dagger \phi_1) + (\phi_3^\dagger \phi_1)(\phi_3^\dagger \phi_2)\right]
+ h.c.\label{VZ3Z3}
\eea
with real $M_0$, $\lambda_0$, $\lambda_1$, $\lambda_2$ and complex $\lambda_3$.

The potential (\ref{VZ3Z3}) is not symmetric under any continuous Higgs-family transformation,
which can be proved, for example, using the adjoint representation of the bilinears described in Chapter \ref{chapter2}. 
However it is symmetric under the exchange of any two doublets, 
e.g. $\phi_1 \leftrightarrow \phi_2$, generating group $Z_2$. But this $Z_2$ group doesn't commute with $\Z_3 \times \Z_3$ group.
Therefore, the potential (\ref{VZ3Z3}) is symmetric under a larger group which is non-Abelian. We can conclude that the symmetry group $\Z_3 \times \Z_3$ is not realizable for 3HDM.

\subsection{Unitary Abelian symmetries of the 4HDM potential}\label{4HDM-examples}

In the case of four Higgs doublets the representative maximal torus in $PSU(4)$ is $T = U(1)_1\times U(1)_2 \times U(1)_3$,
where
\be
U(1)_1  =  \alpha(-1,\, 1,\, 0,\, 0)\,,\quad
U(1)_2  =  \beta(-2,\, 1,\, 1,\, 0)\,,\quad
U(1)_3  =  \gamma\left(-{3 \over 4}, \, {1 \over 4},\, {1 \over 4},\, {1 \over 4}\right)   \label{Phases-4HDM}
\ee
The phase rotations of a generic bilinear combination of doublets under $T$ is characterized by three integers $p,\,q,\,r$;
\be
(\phi_a^\dagger \phi_b) \to \exp[i(p_{ab}\alpha + q_{ab}\beta + r_{ab}\gamma)](\phi_a^\dagger \phi_b) 
\ee
There are twelve bilinear combinations of doublets transforming non-trivially under $T$ with coefficients:
\be
\begin{tabular}{c | ccc}
& $p$ & $q$ & $r$ \\
\hline
$(\phi_2^\dagger \phi_1)$ & $-2$ & $-3$  &$-1$ \\
$(\phi_3^\dagger \phi_2)$ & $1$ & $0$  &$0$ \\
$(\phi_4^\dagger \phi_3)$ & $0$ & $1$  &$0$ \\
$(\phi_4^\dagger \phi_1)$ & $-1$ & $-2$  &$-1$ \\
$(\phi_4^\dagger \phi_2)$ & $1$ & $1$  &$0$ \\
$(\phi_1^\dagger \phi_3)$ & $1$ & $3$  &$1$ \\
\end{tabular}
\label{pqr-explicit-4HDM}
\ee
and their conjugates whose coefficients $p$, $q$ and $r$ carry opposite signs.

These coefficients can again be represented graphically as labels of edges of three simplices shown in Figure (\ref{graph4HDM}).

\begin{figure} [ht]
\centering
\includegraphics[height=4cm]{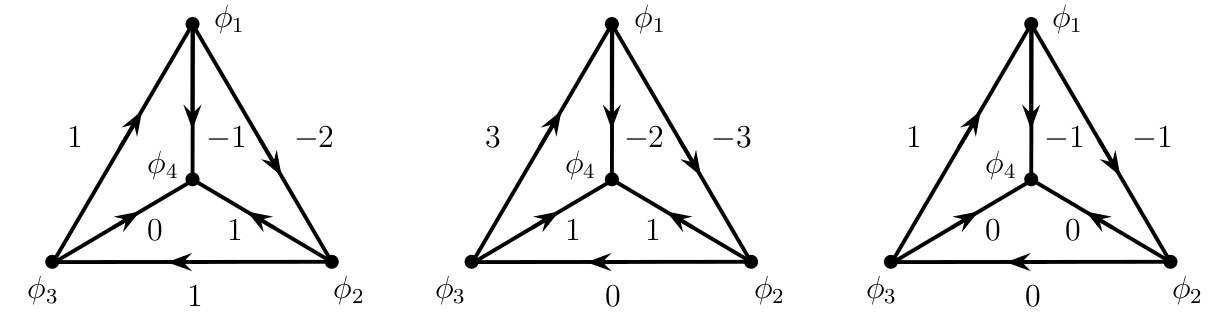}
\caption{Coefficients $p$, $q$ and $r$ as labels of simplices representing bilinears
in 4HDM}
\label{graph4HDM}
\end{figure}

With four doublets, one can construct (up to conjugation) 42 monomials transforming non-trivially under $T$.
The full list of monomials in this case is as follows:
\begin{itemize}
\item
Linear: $(\phi_1^\dagger \phi_2), (\phi_2^\dagger \phi_3), (\phi_3^\dagger \phi_4), (\phi_4^\dagger \phi_1), (\phi_2^\dagger \phi_4), (\phi_3^\dagger \phi_1)$
\item
Quadratic: $(\phi_1^\dagger \phi_2)^2, (\phi_2^\dagger \phi_3)^2, (\phi_3^\dagger \phi_4)^2, (\phi_4^\dagger \phi_1)^2, (\phi_2^\dagger \phi_4)^2, (\phi_3^\dagger \phi_1)^2, \\ 
(\phi_1^\dagger \phi_2)(\phi_2^\dagger \phi_3),  (\phi_1^\dagger \phi_2)(\phi_3^\dagger \phi_4),(\phi_1^\dagger \phi_2)(\phi_4^\dagger \phi_1),(\phi_1^\dagger \phi_2)(\phi_2^\dagger \phi_4),(\phi_1^\dagger \phi_2)(\phi_3^\dagger \phi_1), \\
 (\phi_2^\dagger \phi_3)(\phi_3^\dagger \phi_4), (\phi_2^\dagger \phi_3)(\phi_4^\dagger \phi_1) , (\phi_2^\dagger \phi_3)(\phi_2^\dagger \phi_4), (\phi_2^\dagger \phi_3)(\phi_3^\dagger \phi_1) , (\phi_3^\dagger \phi_4)(\phi_4^\dagger \phi_1), \\ (\phi_3^\dagger \phi_4)(\phi_2^\dagger \phi_4), (\phi_3^\dagger \phi_4)(\phi_3^\dagger \phi_1),  (\phi_4^\dagger \phi_1)(\phi_2^\dagger \phi_3),(\phi_4^\dagger \phi_1)(\phi_3^\dagger \phi_1), (\phi_2^\dagger \phi_1)(\phi_3^\dagger \phi_1),\\ $ along with their complex conjugates, and also  $(\phi_1^\dagger \phi_2)(\phi_i^\dagger \phi_i), (\phi_2^\dagger \phi_3)(\phi_i^\dagger \phi_i), \\ (\phi_3^\dagger \phi_4)(\phi_i^\dagger \phi_i), (\phi_4^\dagger \phi_1)(\phi_i^\dagger \phi_i), (\phi_2^\dagger \phi_4)(\phi_i^\dagger \phi_i), (\phi_3^\dagger \phi_1)(\phi_i^\dagger \phi_i)$ and their complex conjugates
\end{itemize}

When we search for realizable finite groups, we pick up all possible combinations of three distinct monomials $v_1$, $v_2$, $v_3$,
construct the $3 \times 3$ matrix $X(v_1+v_2+v_3)$, check that it is non-degenerate and then diagonalize it to obtain diag$(d_1,d_2,d_3)$.
The symmetry group is then $\Z_{d_1}\times \Z_{d_2}\times \Z_{d_3}$.
Although this brute force algorithm can give the full list of all realizable finite subgroups of the maximal torus in 4HDM, 
we can in fact apply the results of the Section \ref{section-NHDM} to find this list:
\be
\Z_k\ \mbox{with}\ k = 2,\, \dots ,\, 8, \qquad
\Z_2\times \Z_k\ \mbox{with}\ k = 2,\,3,\,4, \qquad
\Z_2 \times \Z_2 \times \Z_2 \label{list4HDMfinite}
\ee
In short, these are all finite Abelian groups of order $\leq 8$.
Perhaps, the most surprising of these groups is $\Z_7$, because it does not appear as a realizable subgroup of 
any single $U(1)_i$ factor [no two labels on Figure (\ref{graph4HDM}) sum up to 7].
Realizable continuous Abelian symmetry groups can also be found following the general strategy.

Let us explicitly write the potentials which are symmetric under each group in (\ref{list4HDMfinite}).
\begin{itemize}

\item
${\bf U(1)\times U(1) \times U(1) = T}$\\ 
The most general $T$-symmetric potential of 4HDM is:
\bea
V &=& - m_1^2 (\phi_1^\dagger \phi_1) - m_2^2 (\phi_2^\dagger \phi_2) - m_3^2 (\phi_3^\dagger \phi_3)- m_4^2 (\phi_4^\dagger \phi_4)  \nonumber \\
&+& \lambda_{11} (\phi_1^\dagger \phi_1)^2 + \lambda_{22} (\phi_2^\dagger \phi_2)^2 + \lambda_{33} (\phi_3^\dagger \phi_3)^2 + \lambda_{44} (\phi_4^\dagger \phi_4)^2 \nonumber \\
&+& \lambda_{12} (\phi_1^\dagger \phi_1) (\phi_2^\dagger \phi_2) 
+ \lambda_{13}  (\phi_1^\dagger \phi_1)(\phi_3^\dagger \phi_3)
+ \lambda_{14} (\phi_1^\dagger \phi_1)(\phi_4^\dagger \phi_4)  \nonumber \\
&+& \lambda_{23} (\phi_2^\dagger \phi_2) (\phi_3^\dagger \phi_3)
+ \lambda_{24} (\phi_2^\dagger \phi_2) (\phi_4^\dagger \phi_4)
+ \lambda_{34} (\phi_3^\dagger \phi_3) (\phi_4^\dagger \phi_4) \nonumber \\
&+& \lambda'_{12} |\phi_1^\dagger \phi_2|^2 + \lambda'_{13} |\phi_1^\dagger \phi_3|^2 + \lambda'_{14} |\phi_1^\dagger \phi_4|^2  \nonumber \\
&+& \lambda'_{23} |\phi_2^\dagger \phi_3|^2 +\lambda'_{24} |\phi_2^\dagger \phi_4|^2 +\lambda'_{34} |\phi_3^\dagger \phi_4|^2  
\label{4HDM-Tsymmetric-pot}
\eea
which are the terms of the form $(\phi_i^\dagger \phi_i)$ or $(\phi_i^\dagger \phi_j)(\phi_j^\dagger \phi_i)$.

\item
${\bf U(1)}$\\
The general $U(1)_1$-invariant potential contains, in addition to (\ref{4HDM-Tsymmetric-pot}), the following terms:
\bea 
&&-m_{43}^2(\phi_4^\dagger \phi_3)+\left[\lambda_{1143} (\phi_1^\dagger\phi_1) + \lambda_{2243} (\phi_2^\dagger\phi_2) 
+ \lambda_{3343} (\phi_3^\dagger\phi_3)\right](\phi_4^\dagger\phi_3) \nonumber \\
&&+ \lambda_{4343}(\phi_4^\dagger\phi_3)^2 +\lambda_{4142}(\phi_4^\dagger \phi_1)(\phi_4^\dagger \phi_2)+h.c.+\lambda_{4113}(\phi_4^\dagger \phi_1)(\phi_1^\dagger \phi_3)+h.c. \nonumber \\
&&+\lambda_{4132}(\phi_4^\dagger \phi_1)(\phi_3^\dagger \phi_2)+h.c.
\eea
The general $U(1)_2$-invariant potential contains, in addition to (\ref{4HDM-Tsymmetric-pot}), the following terms:
\bea 
&&-m_{32}^2(\phi_3^\dagger \phi_2)+\left[\lambda_{1132} (\phi_1^\dagger\phi_1) + \lambda_{2232} (\phi_2^\dagger\phi_2) 
+ \lambda_{3332} (\phi_3^\dagger\phi_3)\right](\phi_3^\dagger\phi_2)  \\ \nonumber
&&+ \lambda_{3232}(\phi_3^\dagger\phi_2)^2 +\lambda_{4324}(\phi_4^\dagger \phi_3)(\phi_2^\dagger \phi_4)+h.c.+\lambda_{2113}(\phi_2^\dagger \phi_1)(\phi_1^\dagger \phi_3)+h.c.  \nonumber
\eea
The general $U(1)_3$-invariant potential contains, in addition to (\ref{4HDM-Tsymmetric-pot}), the following terms:
\bea 
&&-m_{32}^2(\phi_3^\dagger \phi_2)+\left[\lambda_{1132} (\phi_1^\dagger\phi_1) + \lambda_{2232} (\phi_2^\dagger\phi_2) 
+ \lambda_{3332} (\phi_3^\dagger\phi_3)\right](\phi_3^\dagger\phi_2) \nonumber \\
&&+ \lambda_{3232}(\phi_3^\dagger\phi_2)^2 -m_{43}^2(\phi_4^\dagger \phi_3)+\left[\lambda_{1143} (\phi_1^\dagger\phi_1) + \lambda_{2243} (\phi_2^\dagger\phi_2) + \lambda_{3343} (\phi_3^\dagger\phi_3)\right](\phi_4^\dagger\phi_3) \nonumber \\
&&+ \lambda_{4343}(\phi_4^\dagger\phi_3)^2 -m_{42}^2(\phi_4^\dagger \phi_2)+ \left[\lambda_{1142} (\phi_1^\dagger\phi_1) + \lambda_{2242} (\phi_2^\dagger\phi_2) + \lambda_{3342} (\phi_3^\dagger\phi_3)\right](\phi_4^\dagger\phi_2) \nonumber \\
&&+ \lambda_{4242}(\phi_4^\dagger\phi_2)^2 +\lambda_{4113}(\phi_4^\dagger \phi_1)(\phi_1^\dagger \phi_3)+h.c.+\lambda_{1241}(\phi_1^\dagger \phi_2)(\phi_4^\dagger \phi_1)+h.c. \nonumber \\
&&+\lambda_{2113}(\phi_2^\dagger \phi_1)(\phi_1^\dagger \phi_3)+h.c. 
\eea

\item
${\bf Z_2}$\\
One could write terms that are $Z_2$-symmetric according to the labels of the edges in Figure (\ref{graph4HDM}). However, to make sure that a potential is only symmetric under a $Z_2$ group, we use the strategy in Section (\ref{section-NHDM}). We start with the following $3 \times 3$ matrix:
\be 
c_{2^3}=\left(\begin{array}{ccc}
2 & -1 & 0  \\
0 & 2 & -1  \\
0 & 0 & 2   
\end{array}\right)  \label{4HDM-matrix}
\ee
For any group $Z_p$ we need to write $q=2^{N-1}-p$ in a binary form, which uses at most $n=N-1$ digits. Now we write this binary form as a vector with $n$ components and subtract it form the first column of the matrix (\ref{4HDM-matrix}).
\be
q = 6_{10} = 110_2 = \left(\begin{array}{c}
1 \\
1 \\
0
\end{array}
\right),  \qquad c_{2^3} - q \equiv 
\left(\begin{array}{ccc}
1 & -1 & 0  \\
-1 & 2 & -1  \\
0 & 0 & 2 
\end{array}\right)   \label{c-matrix-z2-4hdm}
\ee
The $3 \times 3$ $c$-matrix (\ref{c-matrix-z2-4hdm}) represents the coefficients of the $A$-matrix (\ref{detaa}). Therefore the matrix (\ref{c-matrix-z2-4hdm}) could be written as:
\be 
\left(\begin{array}{c}
a_1 -a_2  \\
-a_1 +2a_2 -a_3 \\
2a_3
\end{array}\right)  \label{a-Z2}
\ee
Recalling that $a_1 = (\phi_1^\dagger \phi_2)$, $a_2 =(\phi_1^\dagger \phi_2)$, and $a_3 = (\phi_1^\dagger \phi_2)$, the terms corresponding to the elements of the matrix (\ref{a-Z2}) are as follows:
\bea
a_1 -a_2  &:& (\phi_1^\dagger \phi_2)(\phi_3^\dagger \phi_1)    \nonumber \\
-a_1 +2a_2 -a_3 &:&  (\phi_2^\dagger \phi_3)(\phi_4^\dagger \phi_3)  \nonumber \\
2a_3 &:&  (\phi_1^\dagger \phi_4)^2  
\eea
Therefore a $Z_2$-invariant potential contains, in addition to (\ref{4HDM-Tsymmetric-pot}), the following terms:
\be 
\lambda_{1231}(\phi_1^\dagger \phi_2)(\phi_3^\dagger \phi_1)+\lambda_{2343}(\phi_2^\dagger \phi_3)(\phi_4^\dagger \phi_3)+ \lambda_{1414}(\phi_1^\dagger \phi_4)(\phi_1^\dagger \phi_4) +h.c.  \label{Z2-invariant}
\ee

To find the phases of the transformation with which this $Z_2$ group is generated, we have to reproduce the $X$-matrix form the $c$-matrix (\ref{c-matrix-z2-4hdm}). 
\be 
X_{ik} = c_{ij} A_{jk} \quad \rightarrow \quad X= \left(\begin{array}{ccc}
1 & -1 & 0  \\
-1 & 2 & -1  \\
0 & 0 & 2 
\end{array}\right) \left(\begin{array}{c}
a_1  \\
a_2  \\
a_3 
\end{array}\right)
\ee
The $A$-matrix for 4HDM ($n=3$) is known from (\ref{a-matrix-def}). Therefore:
\be 
X= \left(\begin{array}{ccc}
1 & -1 & 0  \\
-1 & 2 & -1  \\
0 & 0 & 2 
\end{array}\right) \left(\begin{array}{ccc}
2 & 3 & 1  \\
1 & 3 & 1  \\
1 & 2 & 1 
\end{array}\right) = \left(\begin{array}{ccc}
1 & 0 & 0  \\
-1 & 1 & 0  \\
2 & 4 & 2 
\end{array}\right)
\ee
Equation (\ref{XVeq}) in this case is reduced to the form:
\be 
\left(\begin{array}{ccc}
1 & 0 & 0  \\
-1 & 1 & 0  \\
2 & 4 & 2 
\end{array}\right) \left(\begin{array}{c}
\alpha  \\
\beta  \\
\gamma 
\end{array}\right) = \left(\begin{array}{c}
2\pi n_1  \\
2\pi n_2  \\
2\pi n_3 
\end{array}\right)
\ee
Solving this set of equations for $\alpha$, $\beta$ and $\gamma$ results in $\alpha = \beta =0$ and $\gamma=\pi \cdot k$. Adding the phases in (\ref{Phases-4HDM}), with the known values of $\alpha$, $\beta$ and $\gamma$, one finds the phases of the $Z_2$ group generator:
\be 
a_2=\frac{\pi}{4}(-3,1,1,1)
\ee
One can check that $a^2$ lies in the center $Z(SU(4))$, and that the terms (\ref{Z2-invariant}) are invariant under $a$.

\item
${\bf Z_3}$\\
Using the same strategy used for the $Z_2$ group, one can write a $Z_3$-invariant potential with the following terms, in addition to (\ref{4HDM-Tsymmetric-pot}):
\be 
\lambda_{1231}(\phi_1^\dagger \phi_2)(\phi_3^\dagger \phi_1)+\lambda_{1343}(\phi_1^\dagger \phi_3)(\phi_4^\dagger \phi_3)+\lambda_{2414}(\phi_2^\dagger \phi_4)(\phi_1^\dagger \phi_4)+h.c.
\ee
This group is generated by the transformation with phases
\be 
a_3=\frac{2\pi}{12}(-5,3,3,-1)
\ee

\item
${\bf Z_4}$\\
We use the same strategy to find $Z_4$-invariant potential. In addition to (\ref{4HDM-Tsymmetric-pot}), this potential contains the following terms:
\be 
\lambda_{1231}(\phi_1^\dagger \phi_2)(\phi_3^\dagger \phi_1)+\lambda_{1343}(\phi_1^\dagger \phi_3)(\phi_4^\dagger \phi_3)+\lambda_{1414}(\phi_1^\dagger \phi_4)^2+h.c.
\ee
The phases of the $Z_4$ group generator are:
\be 
a_4=\frac{\pi}{2}(-2,1,1,0)
\ee

\item
${\bf Z_5}$\\
Following the same method, one finds a $Z_5$-invariant potential contains, in addition to (\ref{4HDM-Tsymmetric-pot}), the following terms:
\be 
\lambda_{1232}(\phi_1^\dagger \phi_2)(\phi_3^\dagger \phi_2)+\lambda_{2343}(\phi_2^\dagger \phi_3)(\phi_4^\dagger \phi_3)+\lambda_{2414}(\phi_2^\dagger \phi_4)(\phi_1^\dagger \phi_4)+h.c.
\ee
This group is generated by the transformation with phases
\be 
a_5=\frac{2\pi}{20}(-9,7,3,-1)
\ee

\item
${\bf Z_6}$\\
An example of a $Z_6$-invariant potential would have, in addition to (\ref{4HDM-Tsymmetric-pot}), the following terms:
\be 
\lambda_{1232}(\phi_1^\dagger \phi_2)(\phi_3^\dagger \phi_2)+\lambda_{2343}(\phi_2^\dagger \phi_3)(\phi_4^\dagger \phi_3)+\lambda_{1414}(\phi_1^\dagger \phi_4)^2+h.c.
\ee
The phases of the $Z_6$ group generator are:
\be 
a_6=\frac{\pi}{3}(-3,2,1,0)
\ee

\item
${\bf Z_7}$\\
An example of terms that possess this symmetry group is
\be 
\lambda_{1232}(\phi_1^\dagger \phi_2)(\phi_3^\dagger \phi_2)+\lambda_{1343}(\phi_1^\dagger \phi_3)(\phi_4^\dagger \phi_3)+\lambda_{2414}(\phi_2^\dagger \phi_4)(\phi_1^\dagger \phi_4)+h.c.
\ee
including (\ref{4HDM-Tsymmetric-pot}).
This group is generated by the transformation with phases
\be 
a_7=\frac{\pi}{14}(-21,19,3,-1)
\ee

\item
${\bf Z_8}$\\
The matrix (\ref{4HDM-matrix}) is already the appropriate $C$-matrix for a $Z_8$ symmetry. From this matrix one could write the terms for a potential with this symmetry:
 \be 
\lambda_{1232}(\phi_1^\dagger \phi_2)(\phi_3^\dagger \phi_2)+\lambda_{1343}(\phi_1^\dagger \phi_3)(\phi_4^\dagger \phi_3)+\lambda_{1414}(\phi_1^\dagger \phi_4)^2+h.c.
\ee
in addition to (\ref{4HDM-Tsymmetric-pot}).
The phases of the $Z_8$ group generator are:
\be 
a_8=\frac{\pi}{16}(-15, 21, -7, 1)
\ee

\item
${\bf Z_2 \times Z_2}$\\
To find a $Z_2 \times Z_2$-symmetric potential, we use the strategy described in preposition (\ref{prop-many-cyclic-NHDM}). Starting from the matrix (\ref{4HDM-matrix}), we break it into two block diagonal matrices by replacing the element $c_{12}$ by zero. 
\be
 \left(\begin{array}{cccc}
2 & | 0 & 0  \\
- & - & - & -\\
0 &|& 2 & -1  \\
0 &|& 0 & 2   
\end{array}\right) \nonumber
\ee
The first matrix represents a $Z_2$ group. To get a $Z_2$ group out of the second matrix one has to follow the same strategy as before; writing $2(=2^2-2)$ in a binary code, and subtract the vector of this binary code from the first column of the second matrix. Therefore having:
$$
q = 2_{10} = 10_2 = \left(\begin{array}{c}
1 \\
0
\end{array}
\right); \quad c_{2^2} - q \equiv 
\left(\begin{array}{cc}
1 & -1  \\
0 & 2 
\end{array}\right)
$$
Therefore the full $c$-matrix that represents a $Z_2 \times Z_2$ symmetry has the following form:
\be 
\left(\begin{array}{ccc}
2 & 0 & 0  \\
0 & 1 & -1  \\
0 & 0 & 2 
\end{array}\right)
\ee
From this matrix, one could easily derive the terms of a $Z_2 \times Z_2$-symmetric potential:
 \be 
\lambda_{1212}(\phi_1^\dagger \phi_2)^2+\lambda_{1341}(\phi_1^\dagger \phi_3)(\phi_4^\dagger \phi_1)+\lambda_{1414}(\phi_1^\dagger \phi_4)^2+h.c.
\ee
in addition to (\ref{4HDM-Tsymmetric-pot}).
Equation (\ref{XVeq}) in this case has two sets of solutions for $\alpha$, $\beta$ and $\gamma$. Therefore the $Z_2 \times Z_2$ group is generated by the two transformations with phases
\be 
a_{22}=\frac{\pi}{4}(-3,1,1,1) \quad , \quad a'_{22}=\frac{\pi}{4}(-5,-1,7,-1)
\ee

\item
${\bf Z_2 \times Z_3}$\\
Following the same strategy, we find the terms in a $ Z_2 \times Z_3$-symmetric potential are:
\be 
\lambda_{1212}(\phi_1^\dagger \phi_2)^2+\lambda_{1343}(\phi_1^\dagger \phi_3)(\phi_4^\dagger \phi_3)+\lambda_{3414}(\phi_3^\dagger \phi_4)(\phi_1^\dagger \phi_4)+h.c.
\ee
in addition to (\ref{4HDM-Tsymmetric-pot}).
To find the generators of this group we follow the same method as before, resulting in two generators;
\be 
a_{23}=\frac{2\pi}{3}(-3,2,1,0) \quad , \quad a'_{23}=\frac{\pi}{4}(-5,3,3,-1)
\ee
However $2$ and $3$ are mutually prime numbers, therefore $Z_2 \times Z_3 \equiv Z_6$, so this group only has one generator. 
Adding up $a_{23}$ and $a'_{23}$ one gets;
\be 
a''_{23}=\frac{\pi}{12}(9,-7,1,-3)
\ee
which is equivalent to the generator of $Z_6$ upon phase rotations.

\item
${\bf Z_2 \times Z_4}$\\
Using the same method one could write the terms of a $Z_2 \times Z_4$-symmetric potential:
\be 
\lambda_{1212}(\phi_1^\dagger \phi_2)^2+\lambda_{1343}(\phi_1^\dagger \phi_3)(\phi_4^\dagger \phi_3)+\lambda_{1414}(\phi_1^\dagger \phi_4)^2+h.c.
\ee
in addition to (\ref{4HDM-Tsymmetric-pot}).
This group is generated by the transformation with phases
\be 
a_{24}=\frac{\pi}{8}(-7,1,5,1) \quad , \quad a'_{24}=\frac{\pi}{4}(-5,-1,3,3)
\ee

\item
${\bf Z_2 \times Z_2 \times Z_2}$\\
To obtain this symmetry, one has to break the matrix (\ref{4HDM-matrix}) into three block diagonal matrices, by replacing all $-1$'s with zero. Therefore the simple matrix:
\be 
\left(\begin{array}{ccc}
2 & 0 & 0  \\
0 & 2 & 0  \\
0 & 0 & 2 
\end{array}\right)
\ee
results in a $ Z_2 \times Z_2 \times Z_2$-symmetric potential, with terms:
\be 
\lambda_{1212}(\phi_1^\dagger \phi_2)^2+\lambda_{13q3}(\phi_1^\dagger \phi_3)^2+\lambda_{1414}(\phi_1^\dagger \phi_4)^2+h.c.
\ee
in addition to (\ref{4HDM-Tsymmetric-pot}).

The phases of the $Z_2 \times Z_2 \times Z_2$ group generators are:
\be 
a_{222}=\frac{\pi}{4}(-3,1,1,1)\quad , \quad a'_{222}=\frac{\pi}{2}(-3,3,1,1)\quad , \quad a''_{222}=\pi(-1,1,0,0)
\ee
\end{itemize}


\chapter{Abelian groups containing anti-unitary transformations in NHDM}\label{chapter6}

In this Chapter we use the strategy to find all Abelian unitary symmetries in NHDM, introduced in Chapter \ref{chapter5}, and extend it to groups that include anti-unitary (generalized CP) transformations. We use the Higgs potential of the general NHDM and the notion of realizable symmetries from Chapter \ref{chapter5}.
First we introduce the group of reparametrization transformations for anti-unitary transformations. Then we describe the strategy to embed an Abelian group of unitary transformation into a larger Abelian group which includes anti-unitary transformations. Next as an example we apply this strategy to 3HDM for all group found in the list \ref{list3HDM}. The results of this Chapter are published in \cite{Abelian2012}.

\section{The group of reparametrization transformations}
                                                                                                                                                                                                                                                                                                                                                                                                                                                                                                                                                                                                                                                                                                                                                                                                                                                                                                                                                                                                                                                                                                                                                                                                                                                                                                                                                                                                                                                                                                                                                                                                                                                                                                                                                                             
To find the symmetries of the potential, we focus on the reparametrization transformations. A reparametrization transformation must be unitary (a Higgs-family transformation) or anti-unitary (a generalized $CP$-transformation): 
\be
U: \quad \phi_a \mapsto U_{ab}\phi_b\qquad \mbox{or} \qquad U_{CP} = U \cdot J:\quad \phi_a \mapsto U_{ab}\phi^\dagger_b 
\ee
with a unitary matrix $U_{ab}$. The transformation $J \equiv CP$ acts on doublets by complex conjugation and satisfies $J^2 = 1$.

The unitary transformations $U$ were studied in detail in Chapter \ref{chapter5}.

Consider now antiunitary transformations $U_{CP} = U\cdot J$, $U\in U(N)$.
One can define an action of $J$ on the group $U(N)$ by $ J\cdot U \cdot J$, where
\be
(J\cdot U \cdot J) \phi = JU (J \phi)= JU(\phi ^*)=J(U\phi^*)=(U\phi^*)^*= U^* \phi \rightarrow J\cdot U \cdot J=U^*  \label{JactiononUN}
\ee
The asterisk denotes complex conjugation.
This action leaves invariant both the overall phase subgroup $U(1)_Y$ and the center of the $SU(N)$ [$J$ maps a unitary transformation onto another unitary transformation in the same space].
Therefore, we can consider only such $U_{CP} = U\cdot J$ that $U \in PSU(N)$, where $PSU(N) = SU(N) / Z_n$ is the group of physically distinct unitary reparametrization transformation described in Chapter \ref{chapter5}.

Once this action is defined, we can represent all distinct reparametrization transformations as a semidirect product
\be
G = PSU(N) \rtimes \Z_2 \label{G}
\ee
where $J$ is the generator of $Z_2$, and $PSU(N)$ a normal subgroup of $G$.

\section{Finding anti-unitary transformations}\label{subsection-antiunitary-general}

As it was described in Chapter \ref{chapter5} all Abelian symmetries of NHDM are subgroups of $PSU(N)$. We studied which subgroups of the maximal torus $T$ (\ref{maximal-torus-PSUN}) can be realizable. Having found the list of realizable subgroups of torus $T$, we can ask whether these groups
can be extended to larger Abelian groups that would include not only unitary but also antiunitary transformations.
Here we describe the strategy that solves this problem.

We defined the action of the $CP$-transformation $J$ on $PSU(N)$ 
given by (\ref{JactiononUN}): $J\cdot U \cdot J = U^*$. This action can be restricted to the maximal torus $T$ (\ref{maximal-torus-PSUN}) and even further, to any subgroup $A \subset T$ (indeed, if $a \in A$, then $a^* \in A$). However, $J$ does not commute with a generic element of $T$ (meaning $J \cdot U \cdot J \neq U$), therefore the group $\langle A, J\rangle$ is not, in general, Abelian.

In order to embed $A$ in a larger Abelian group, we must find
an antiunitary transformation $J' = b\cdot J$ which commutes with any element $a \in A$:
\be
J'a(J')^{-1} = a\quad \Leftrightarrow \quad bJaJb^{-1}=a \quad \Leftrightarrow \quad ba^{-1}b^{-1}=a \quad \Leftrightarrow\quad  b=aba \label{Jprime}
\ee
The last expression here is a linear matrix equation for the matrix $b$ at any given $a$. 
Since $a$ is diagonal, $(aba)_{ij} = a_{ii}b_{ij}a_{jj}$, so whenever $a_{ii}a_{jj} \not = 1$,
one must set $b_{ij}$ to zero. 

If at least one matrix $b$ satisfying (\ref{Jprime}) is found, all the other matrices can be constructed with the help of the last form of this equation.
Suppose $b'=xb$ also satisfies this equation; then, 
\bea 
&&b'a^{-1}b'^{-1}=a  \nonumber\\ 
&&\Rightarrow (xb)a^{-1}(xb)^{-1}=a \Rightarrow xba^{-1}b^{-1}x^{-1}=a \Rightarrow xax^{-1}=a   \nonumber\\ 
&&\Rightarrow xa=ax  \Rightarrow a^{-1}xa=a^{-1}ax \Rightarrow  a^{-1}xa=x 
\eea 
Therefore, $x$ can be any unitary matrix from $PSU(N)$ commuting with $a$,
that is, $x$ centralizes the chosen Abelian group $A$.

\subsection{The strategy}

The strategy for embedding a given Abelian group $A \subset T$ into a larger Abelian group $A_{CP}$
which includes antiunitary transformations proceeds in five steps:
\begin{itemize}
\item
Find one $b$ which solves the matrix equation (\ref{Jprime}) for all $a \in A$.
\item
Find all $x \in PSU(N)$ which commute with each element of $A$, that is, construct the centralizer of $A$ in $PSU(N)$.
\item
Find restrictions placed on $x$'s by the requirement that product of two different antiunitary transformations $(x_1J')$ and $(x_2 J')$ belongs to $A$.
This procedure can result in several distinct groups $A_{CP}$. 
\item
Find which terms among the $A$-symmetric terms in the potential are also symmetric under some of $A_{CP}$;
drop terms which are not.
\item
Check that the resulting collection of terms is not symmetric under a larger group of unitary transformations,
that is, $A$ is still a realizable symmetry group of this collection of terms.
\end{itemize}

As an exercise, let us apply this strategy to the full maximal torus $T$. 
A generic $a \in T$ acting on a doublet $\phi_i$ generates a non-trivial phase rotation 
$\psi_i(\alpha_1,\dots,\alpha_{N-1})$ which can be reconstructed from (\ref{groupsUi}).
The equation (\ref{Jprime}) then becomes
\be
\left(\begin{array}{ccc} 
e^ {i\psi_1} &  &   \\
 & e^{i\psi_2} &  \\
 &  & \cdots
\end{array}\right) b_{ij} \left(\begin{array}{ccc} 
e^ {i\psi_1} &  &   \\
 & e^{i\psi_2} &  \\
 &  & \cdots
\end{array}\right) = b_{ij} \quad \rightarrow \quad 
e^{i(\psi_i+\psi_j)}b_{ij} = b_{ij} 
\ee
Since $\psi_i + \psi_j \not = 0$ for any $i, j$, the only solution to this equation is $b_{ij}=0$
for all $i$ and $j$.
This means that there is no $J'$ that would commute with every element of the torus $T$.
Thus, $T$ cannot be embedded into a larger Abelian group $T_{CP}$ that would include antiunitary transformations.


\section{Examples in 3HDM}

Applying the strategy just introduced, to 3HDM for all groups from the list (\ref{list3HDM}),
we obtain the following additional realizable Abelian groups:
\be
\Z_2^*\,,\quad \Z_2\times \Z_2^*\,, \quad \Z_2\times \Z_2\times \Z_2^*\,, \quad \Z_4^* 
\ee
where the asterisk indicates that the generator of the corresponding cyclic group is an anti-unitary transformation.

Before we explicitly describe embedding of each of the unitary Abelian groups (\ref{list3HDM}) into a larger Abelian group
that contains antiunitary transformations, let's just briefly explain why groups such as $\Z_6^*$, $\Z_8^*$,
$U(1)\times \Z_2^*$ do not appear in this list.

When we search for anti-unitary transformations commuting with the selected subgroup of $T$,
we can indeed construct such groups at the price of imposing certain restrictions on the coefficients of the $T$-symmetric part (\ref{3HDMpotential-Tsymmetric}). This makes the potential symmetric under a larger group of unitary transformations, which is non-Abelian. According to our definition, this means that the groups such as $\Z_6^*$ are not realizable, although they are subgroups of larger realizable non-Abelian symmetry groups.

\subsection{Embedding $U(1)_1\times U(1)_2$} 

We proved in the general case that the maximal torus cannot be extended to a larger Abelian group.
Let us repeat the argument in the specific case of 3HDM.
At the first step of the strategy of embedding a unitary Abelian group into an Abelian group that can contain
antiunitary transformations, we search for a matrix $b\in PSU(3)$ that satisfies $aba=b$ for all $a \in T$ where $T=U(1)_1 \times U(1)_2$ (\ref{3HDM-maximaltorus}).
A generic $a\in T$ is:
\be 
a =  \left(\begin{array}{ccc} 
e^ {-i\alpha - \frac{2i}{3} \beta } &  &   \\
 & e^{i\alpha + \frac{i}{3} \beta} &  \\
 &  & e^{\frac{i}{3} \beta}
\end{array}\right) \label{a}
\ee
Therefore $aba=b$ has the form:
\be
aba =  \left(\begin{array}{ccc} 
e^{-2i\alpha - 4i\beta/3}b_{11} & e^{-i\beta/3}b_{12} & e^{-i\alpha -i \beta/3}b_{13} \\
e^{-i \beta/3}b_{21} & e^{2i\alpha + 2i\beta/3}b_{22} & e^{i\alpha + 2i\beta/3}b_{23} \\
e^{-i\alpha -i \beta/3}b_{31} & e^{i\alpha + 2i\beta/3}b_{32} & e^{2i\beta/3}b_{33} 
\end{array}\right)\, = b \label{aba1}
\ee
Each element $(aba)_{ij}$ is equal to $b_{ij}$ up to a phase rotation $\psi_{ij}$:
\be
\psi_{ij} =  \left(\begin{array}{ccc} 
-2\alpha - 4\beta/3 & -\beta/3 & -\alpha - \beta/3 \\
- \beta/3 & 2\alpha + 2\beta/3 & \alpha + 2\beta/3 \\
-\alpha - \beta/3 & \alpha + 2\beta/3 & 2\beta/3 
\end{array}\right) \label{psi-ij}
\ee
In order for an element $b_{ij}$ to be non-zero, the corresponding phase rotation must be zero or multiple of $2\pi$. It is impossible to find such $b$ for arbitrary $\alpha$ and $\beta$. Therefore, $T = U(1)_1 \times U(1)_2$
cannot be embedded into a larger Abelian group.

\subsection{Embedding $U(1)$} 

There are two distinct types of $U(1)$ groups inside $T$: $U(1)_1$-type and $U(1)_2$-type. Consider first $U(1)_1$.

\subsubsection{$U(1)_1$ type} 

The phase rotations for the group $U(1)_1$ ($\beta=0$, arbitrary $\alpha$) allow $b$ to have non-zero entries $b_{12}$, $b_{21}$ and $b_{33}$ only. With $a$ of the form:
\be 
a =  \left(\begin{array}{ccc} 
e^ {-i\alpha} &  &   \\
 & e^{i\alpha} &  \\
 &  & 1
\end{array}\right)\,\label{a-u1}
\ee
The equation $aba=b$ gives:
\be
aba =  \left(\begin{array}{ccc} 
e^{-2i\alpha}b_{11} & b_{12} & e^{-i\alpha}b_{13} \\
b_{21} & e^{2i\alpha}b_{22} & e^{i\alpha}b_{23} \\
e^{-i\alpha}b_{31} & e^{i\alpha}b_{32} & b_{33} 
\end{array}\right)= b \quad \rightarrow \quad b = \left(\begin{array}{ccc} 
0 & b_{12} & 0 \\
b_{21} & 0 & 0 \\
0 & 0 & b_{33} 
\end{array}\right)\,
\ee

We are free to choose $b_{12} = b_{21} = 1$, $b_{33}=1$ (the fact that $\det b = -1$ instead of $1$ is inessential;
we can always multiply all the transformations by the overall $-1$ factor). Then, the transformation $J'$ which commutes with any $a \in U(1)_1$ is
\be
J' =  \left(\begin{array}{ccc} 
0 & 1 & 0 \\
1 & 0 & 0 \\
0 & 0 & 1 
\end{array}\right)\cdot J\,,\quad
aJ' = J'a \quad \forall a \in U(1)_1 
\label{J'-in-U1}
\ee
Next, we search for transformations $x \in PSU(3)$ such that $xJ'$ also commutes with any element $a \in U(1)_1$;
such transformations form the centralizer of $U(1)_1$.

From the equation $axa^{-1} = x$, analysed with the same technique as phase rotations, 
we find that $x$ must be diagonal. 
\be
axa^{-1} =  \left(\begin{array}{ccc} 
x_{11} & e^{-2i\alpha}x_{12} & e^{-i\alpha}x_{13} \\
e^{2i\alpha}x_{21} & x_{22} & e^{i\alpha}x_{23} \\
e^{i\alpha}x_{31} & e^{-i\alpha}x_{32} & x_{33} 
\end{array}\right)= x \quad \rightarrow \quad x = \left(\begin{array}{ccc} 
x_{11} & 0 & 0 \\
0 & x_{22} & 0 \\
0 & 0 & x_{33} 
\end{array}\right)\,
\ee
We also know that $x \in PSU(3)$, therefore $x$ must be of the following form:
\be 
x =  \left(\begin{array}{ccc} 
e^ {i\gamma _1} &  &   \\
 & e^{i\gamma _2} &  \\
 &  & e^{-i(\gamma _1 + \gamma _2)}
\end{array}\right)\,\label{x}
\ee
We close the group under product, by checking that the production of $xJ'$ and $x'J'$ stays inside $U(1)_1$: 
\be 
xJ' \cdot x'J' = x b \cdot J \cdot x' b \cdot J = x b \cdot x'^* b
\ee
Also:
\begin{small}
\be 
b x'^* b = \left(\begin{array}{ccc} 
0 & 1 & 0 \\
1 & 0 & 0 \\
0 & 0 & 1 
\end{array}\right) \left(\begin{array}{ccc} 
e^ {-i\gamma ' _1} &  &   \\
 & e^{-i\gamma ' _2} &  \\
 &  & e^{i(\gamma ' _1 + \gamma ' _2)} 
\end{array}\right) \left(\begin{array}{ccc} 
0 & 1 & 0 \\
1 & 0 & 0 \\
0 & 0 & 1 
\end{array}\right) = \left(\begin{array}{ccc} 
e^ {-i\gamma ' _2} &  &   \\
 & e^{-i\gamma ' _1} &  \\
 &  & e^{i(\gamma ' _1 + \gamma ' _2)} 
\end{array}\right) \nonumber
\ee
\end{small}
Therefore:
\be 
x b x'^* b = \left(\begin{array}{ccc} 
e^ {i\gamma _1} &  &   \\
 & e^{i\gamma _2} &  \\
 &  & e^{-i(\gamma _1 + \gamma _2)} 
\end{array}\right) 
\left(\begin{array}{ccc} 
e^ {-i\gamma ' _2} &  &   \\
 & e^{-i\gamma ' _1} &  \\
 &  & e^{i(\gamma ' _1 + \gamma ' _2)} 
\end{array}\right)
\ee
So, the vector of phases has the form
\be 
\left( \quad \gamma _1 - \gamma '_2 \quad , \quad \gamma _2 - \gamma '_1  \quad , \quad \gamma '_1 + \gamma '_2 - \gamma _1 - \gamma _2 \quad \right)
\ee
Which should stay inside $U(1)$, therefore is of the form $(-\alpha , \alpha , 0 )$. By introducing $ \xi$ and $ \gamma$, where 
\bea 
&&\gamma '_1 + \gamma '_2 = \gamma _1 + \gamma _2 = 2\xi  \nonumber\\
&&\gamma _1 = \gamma + \xi  \quad , \quad \gamma _2 = \gamma - \xi \nonumber\\
&&\gamma '_1 = \gamma '+ \xi  \quad , \quad \gamma '_2 = \gamma '- \xi  
\eea
One could write $x$ as 
\be 
x =  \left(\begin{array}{ccc} 
e^ {i(\gamma +\xi)} &  &   \\
 & e^{i(\xi - \gamma)} &  \\
 &  & e^{-2i(\xi)}
\end{array}\right)\ = \left(\begin{array}{ccc} 
e^{-i\gamma} & 0 & 0 \\
0 & e^{i\gamma} & 0 \\
0 & 0 & 1 
\end{array}\right)
\left(\begin{array}{ccc} 
e^{i\xi} & 0 & 0 \\
0 & e^{i\xi} & 0 \\
0 & 0 & e^{-2i\xi}
\end{array}\right)\, \label{xgamxi}
\ee
We see that $x$ belongs to a $U(1)$-type group $X_{\xi} $. Here $\xi \in [0,2\pi/3)$ is an arbitrary but fixed parameter specifying which $X_{\xi}$ group we take, while $\gamma$ is the running angle parametrizing the elements of this group. 
At this stage, any choice of $\xi$ is acceptable.
Thus, we have embedded $U(1)_1$ into an Abelian group generated by 
$\langle U(1)_1, J''_{\xi}\rangle \simeq U(1) \times \Z_2^*$;
\be 
X_{\xi} \cdot J'= \underbrace{\left(\begin{array}{ccc} 
e^{-i\gamma} & 0 & 0 \\
0 & e^{i\gamma} & 0 \\
0 & 0 & 1 
\end{array}\right)}_{U(1)}
\underbrace{\left(\begin{array}{ccc} 
e^{i\xi} & 0 & 0 \\
0 & e^{i\xi} & 0 \\
0 & 0 & e^{-2i\xi}
\end{array}\right) \cdot b \cdot J }_{J''_{\xi}}
\ee
Therefore
\be
J''_{\xi} = 
\left(\begin{array}{ccc} 
e^{i\xi} & 0 & 0 \\
0 & e^{i\xi} & 0 \\
0 & 0 & e^{-2i\xi}
\end{array}\right)\cdot J' = 
\left(\begin{array}{ccc} 
0 & e^{i\xi} & 0 \\
e^{i\xi} & 0 & 0 \\
0 & 0 & e^{-2i\xi}
\end{array}\right)\cdot J\,\label{J''xi}
\ee
Let us now see whether this group is realizable and which $\xi$ must be chosen.
The only $U(1)_1$ symmetric terms in the potential are (\ref{U11potential})
\be
\lambda (\phi_1^\dagger \phi_3)(\phi_2^\dagger \phi_3)  + \lambda^* (\phi_3^\dagger \phi_1)(\phi_3^\dagger \phi_2)
\label{extratermsU1}
\ee
Let's see how the sum of two terms transform under $J''_{\xi}$;
is the phase of $\lambda$.
\be 
J'' \phi = \left(\begin{array}{ccc} 
0 & e^{i\xi} & 0 \\
e^{i\xi} & 0 & 0 \\
0 & 0 & e^{-2i\xi}
\end{array} \right) \cdot J \cdot \left(\begin{array}{c}
\phi _1 \\
\phi _2 \\
\phi _3 
\end{array} \right)  = \left( \begin{array}{c}
e^{i\xi} \phi _2^\dagger \\
e^{i\xi}\phi _1^\dagger \\
e^{-2i\xi}\phi _3^\dagger  
\end{array} \right)
\ee
Therefore,
\bea 
&&\lambda (\phi_1^\dagger \phi_3)(\phi_2^\dagger \phi_3) \quad \underrightarrow{\mbox{under J}''} \quad |\lambda| e^{i\psi _ \lambda} e^{-6i\xi} (\phi_3^\dagger \phi_2)(\phi_3^\dagger \phi_1)    \nonumber \\
&&\lambda^* (\phi_3^\dagger \phi_1)(\phi_3^\dagger \phi_2) \quad \underrightarrow{\mbox{under J}''} \quad |\lambda| e^{-i\psi _ \lambda} e^{6i\xi} (\phi_1^\dagger \phi_3)(\phi_2^\dagger \phi_3)\nonumber \\
&& \quad \rightarrow \quad \psi_\lambda - 6\xi = -\psi_\lambda \quad \rightarrow \quad \xi = \psi_\lambda/3  \label{ui-invariant-under-j''}
\eea
Therefore the sum of these two terms are invariant under $J''_{\xi}$ provided $\xi = \psi_\lambda/3$, where $\psi_\lambda$ is the phase of $\lambda$.

This prescription uniquely specifies the group $X_\xi \subset PSU(3)$. 
Recall that a general $U(1)$ symmetric potential also contains the $T$-symmetric terms (\ref{3HDMpotential-Tsymmetric}). In order to guarantee that the $T$-symmetric terms are invariant under $J''$, we must set
\be
m_{1}^2=m_{2}^2\,,\quad \lambda_{11}=\lambda_{22}\,,\quad \lambda_{13} = \lambda_{23}\,,\quad \lambda'_{13} = \lambda'_{23} 
\label{U1-restrictions}
\ee
However upon this reduction of free parameters we observe that the resulting potential is also symmetric under under another $Z_2$ group generated by the unitary transformation
\be
d =  \left(\begin{array}{ccc} 
0 & e^{i\delta} & 0 \\
e^{-i\delta} & 0 & 0 \\
0 & 0 & 1 
\end{array}\right)
\ee
But this transformation does not commute with $U(1)_1$. Therefore, the true symmetry group of such potential is non-Abelian. According to our definition, we conclude that $U(1) \times \Z_2^*$ is not realizable in 3HDM.

\subsubsection{$U(1)_2$ type} 
Consider now $U(1)_2$: $\alpha=0$, arbitrary $\beta$. 
\be 
aba = \left(\begin{array}{ccc} 
e^{- 4i\beta/3}b_{11} & e^{-i\beta/3}b_{12} & e^{-i \beta/3}b_{13} \\
e^{-i \beta/3}b_{21} & e^{2i\beta/3}b_{22} & e^{2i\beta/3}b_{23} \\
e^{-i \beta/3}b_{31} & e^{2i\beta/3}b_{32} & e^{2i\beta/3}b_{33} 
\end{array}\right)\, = b \quad \rightarrow \quad b_{ij} = 0
\ee
No non-trivial matrix $b \in PSU(3)$
can satisfy $aba = b$. Therefore, $U(1)_2$ cannot be embedded into a larger Abelian group with antiunitary transformations.
Note also that selecting various diagonal $U(1)$ subgroups of $T$, for example by setting $\alpha+\beta/3 = 0$, will result just in another version of $U(1)_1$.

\subsection{Embedding $ U(1)\times \Z_2$}

The unitary $U(1)\times \Z_2$ emerges only when the continuous group is of the $U(1)_2$-type. Since $U(1)_2$ cannot be embedded into a larger Abelian group with antiunitary transformations, $U(1)\times \Z_2$ cannot be embedded into a larger realizable Abelian group either.

Therefore in 3HDM all realizable continuous Abelian groups are necessarily unitary and cannot contain antiunitary transformations.

\subsection{Embedding $\Z_4$}

The $\Z_4$ symmetry group with generator $a$ arises as a subgroup of $U(1)_1$. Therefore, $b$ and $J'$ can be chosen
as before, (\ref{J'-in-U1}), but the new conditions for $x$ should now be checked.

The vector of phases in this case is:
\be 
\left( \quad \gamma _1 - \gamma '_2 \quad , \quad \gamma _2 - \gamma '_1  \quad , \quad \gamma '_1 + \gamma '_2 - \gamma _1 - \gamma _2 \quad \right)
\ee
which should be a subgroup of $Z_4$. Therefore it has to agree with $ (-\frac{\pi}{2}k , \frac{\pi}{2}k , 0 )$, which could be satisfied with $\gamma_1 =\frac{\pi}{2}n$ or $\gamma_1 =\frac{\pi}{2}n + \frac{\pi}{4} $. Thus $x$ should have one of the following forms:
\be 
x = \left(\begin{array}{ccc} 
e^{-i\frac{\pi}{2}k +i\xi} & 0 & 0 \\
0 & e^{i\frac{\pi}{2}k +i\xi} & 0 \\
0 & 0 & e^{-2i\xi}
\end{array}\right) \quad \mbox{or} \quad \left(\begin{array}{ccc} 
e^{-i\frac{\pi}{2}k - i\frac{\pi}{4} +i\xi} & 0 & 0 \\
0 & e^{i\frac{\pi}{2}k + i\frac{\pi}{4}+i\xi} & 0 \\
0 & 0 & e^{-2i\xi}
\end{array}\right)
\ee
Correspondingly, two possibilities for embedding the $\Z_4$ group arise: $\Z_4 \times \Z_2^*$ and $\Z_8^*$.

Consider first the $\Z_4 \times \Z_2^*$ embedding.

\subsubsection{$\Z_4 \times \Z_2^*$ group} 
\be 
X_{\xi} \cdot J'= \underbrace{\left(\begin{array}{ccc} 
e^{-i\frac{\pi}{2}k} & 0 & 0 \\
0 & e^{i\frac{\pi}{2}k} & 0 \\
0 & 0 & 1 
\end{array}\right)}_{Z_4}
\underbrace{\left(\begin{array}{ccc} 
e^{i\xi} & 0 & 0 \\
0 & e^{i\xi} & 0 \\
0 & 0 & e^{-2i\xi}
\end{array}\right) \cdot b \cdot J }_{J''_{\xi}}
\ee
The $\Z_2^*$ subgroup is generated by $J''$;
\be
J''_{\xi} = 
\left(\begin{array}{ccc} 
e^{i\xi} & 0 & 0 \\
0 & e^{i\xi} & 0 \\
0 & 0 & e^{-2i\xi}
\end{array}\right)\cdot J' = 
\left(\begin{array}{ccc} 
0 & e^{i\xi} & 0 \\
e^{i\xi} & 0 & 0 \\
0 & 0 & e^{-2i\xi}
\end{array}\right)\cdot J
\ee
The $Z_4$ symmetric terms in the 3HDM potential are (\ref{3hdm-z4-pot});
\be
\lambda_{1323}(\phi_1^\dagger\phi_3)(\phi_2^\dagger\phi_3)+ \lambda^*_{1323} (\phi_3^\dagger \phi_1)(\phi_3^\dagger \phi_2) + \lambda_{1212}(\phi_1^\dagger\phi_2)^2 + \lambda^{*}_{1212} (\phi_2^\dagger\phi_1)^2 
\ee
As in (\ref{ui-invariant-under-j''}), the first two terms are invariant under $J''$ for $\xi =  \psi_{\lambda_{1323}}/3$, where $\psi_{\lambda_{1323}}$ is the phase of $\lambda_{1323}$.

The last two terms are automatically invariant under $J''$;
\be 
\lambda_{1212} (\phi_1^\dagger\phi_2)^2 + \lambda^{*}_{1212} (\phi_2^\dagger\phi_1)^2 \quad \underrightarrow{\mbox{under J}''} \quad |\lambda_{1212}| e^{i\psi _ {\lambda_{1212}}}(\phi_1 ^ \dagger \phi_2)^2 + |\lambda_{1212}| e^{-i\psi _ {\lambda_{1212}}}(\phi_2 ^ \dagger \phi_1)^2 \nonumber
\ee
where $\psi_{\lambda_{1212}}$ is the phase of $\lambda_{1212}$, and therefore they are symmetric under the full $\Z_4 \times \Z_2^*$. 

However, similar to the previous case, restrictions (\ref{U1-restrictions}) on the $T$-symmetric terms (\ref{3HDMpotential-Tsymmetric}), result in a potential which is symmetric under a $\Z_2$ group generated by the unitary transformation
\be
d =  \left(\begin{array}{ccc} 
0 & e^{i\delta} & 0 \\
e^{-i\delta} & 0 & 0 \\
0 & 0 & 1 
\end{array}\right)   \label{dZ2}
\ee
with the condition $\delta = {\psi_{\lambda_{1212}}/2}$ resulting from; 
\bea 
&&\lambda_{1212} (\phi_1^\dagger\phi_2)^2 \quad \underrightarrow{\mbox{under  d}} \quad |\lambda _{1212}| e^{i\psi _ {\lambda_{1212}}}e^{-4i\delta}(\phi_2 ^ \dagger \phi_1)^2  \nonumber\\  
&& \lambda^{*}_{1212} (\phi_2^\dagger\phi_1)^2  \quad \underrightarrow{\mbox{under  d}} \quad |\lambda_{1212}| e^{-i\psi _ {\lambda_{1212}}} e^{4i\delta}(\phi_1 ^ \dagger \phi_2)^2  \nonumber\\ 
&&\quad \rightarrow \quad \psi _{\lambda_{1212}} -4\delta = - \psi _{\lambda_{1212}} \quad \rightarrow \quad \delta = \frac{\psi _{\lambda '}}{2}
\eea

However, $d$ does not commute with $Z_4$ phase rotations, therefore, the resulting symmetry group of the potential is non-Abelian.
Hence, the  $\Z_4 \times \Z_2^*$ symmetry group is not realizable.

\subsubsection{$\Z_8^*$ group}

Consider now the $\Z_8^*$ group generated by
\be
J'''_{\xi} =  J'' \cdot b = \left(\begin{array}{ccc} 
0 & e^{i\xi - i\pi/4} & 0 \\
e^{i\xi+i\pi/4} & 0 & 0 \\
0 & 0 &  e^{-2i\xi} 
\end{array}\right)\cdot J\label{J'''}
\ee
with the property $(J''')^2 = a$, the generator of the $\Z_4$.
\be 
(J''')^2 = \left(\begin{array}{ccc} 
e^{- i\pi/2} & 0 & 0 \\
0 & e^{i\pi/2} & 0 \\
0 & 0 &  1 
\end{array}\right)
\ee
Note that upon action of $J'''$ one of the $Z_4$-symmetric terms (\ref{3hdm-z4-pot}) $(\phi_1^\dagger \phi_2)^2$ changes its sign; 
\be 
(\phi_1^\dagger \phi_2)^2 \quad \underrightarrow{\mbox{under J}'''} \quad e^{i\pi} (\phi_1^\dagger \phi_2)^2 = -(\phi_1^\dagger \phi_2)^2
\ee
Therefore, such a term cannot appear in the potential. This means we are left with only one type of $\Z_4$-symmetric 
term in the potential. According to the discussion in Section \ref{Identify}, rank$X(V) = 1$ and, therefore, the potential
becomes symmetric under a continuous symmetry group. Thus, $\Z_8^*$ is not realizable as a symmetry group of a 3HDM potential.

\subsection{Embedding $\Z_3$}

The $\Z_3$ subgroup of the maximal torus $T$ is generated by the transformation $a$
\be 
a = \left(\begin{array}{ccc} 
e^{-2i\pi/3} & 0 & 0 \\
0 & e^{2i\pi/3} & 0 \\
0 & 0 & 1
\end{array}\right) \label{a-z3}
\ee
This group arises from the $U(1)_1$-type group, therefore the representation for $J'$ in (\ref{J'-in-U1}) is still valid.
Therefore the vector of phases has the form:
\be 
\left( \quad \gamma _1 - \gamma '_2 \quad , \quad \gamma _2 - \gamma '_1  \quad , \quad \gamma '_1 + \gamma '_2 - \gamma _1 - \gamma _2 \quad \right)
\ee
And has to agree with $(-\frac{2\pi}{3}k , \frac{2\pi}{3}k , 0)$. With the same method as before, we see that $J''$ has the form
\be
J''_{\xi} =  \left(\begin{array}{ccc} 
0 & e^{i\xi-i\pi/3} & 0 \\
e^{i\xi+i\pi/3} & 0 & 0 \\
0 & 0 & e^{-2i\xi} 
\end{array}\right)\cdot J\,\label{J'-Z3}
\ee
Since $(J'')^2 = a$ in (\ref{a-z3}), this transformation generates the group $\Z_6^*$.
The $\Z_3$-symmetric terms in the potential are (\ref{z3-3hdm-pot})
\be
\lambda_1 (\phi_1^\dagger \phi_2)(\phi_1^\dagger \phi_3) + 
\lambda_2 (\phi_2^\dagger \phi_3)(\phi_2^\dagger \phi_1) + 
\lambda_3 (\phi_3^\dagger \phi_1)(\phi_3^\dagger \phi_2) + h.c.\label{Z3symmetric}
\ee
with complex $\lambda_i$.
Let's see what conditions have to be satisfied to make these terms symmetric under $\Z_6^*$.
\be 
J'' \phi = \left(\begin{array}{ccc} 
0 & e^{i\xi - i \pi/3} & 0 \\
e^{i\xi +i\pi/3} & 0 & 0 \\
0 & 0 & e^{-2i\xi}
\end{array} \right) \cdot J \cdot \left(\begin{array}{c}
\phi _1 \\
\phi _2 \\
\phi _3 
\end{array} \right)  = \left( \begin{array}{c}
e^{i\xi- i \pi/3} \phi _2^\dagger \\
e^{i\xi+i\pi/3}\phi _1^\dagger \\
e^{-2i\xi}\phi _3^\dagger  
\end{array} \right)
\ee
Therefore
\bea
&&\lambda_1 (\phi_1^\dagger \phi_2)(\phi_1^\dagger \phi_3) \quad \underrightarrow{\mbox{under J}''} \quad |\lambda_1| e^{i \psi_{\lambda_1}} e^{i \pi-3i\xi} (\phi_1^\dagger \phi_2)(\phi_3^\dagger \phi_2) \nonumber\\ 
&&\lambda_2 (\phi_2^\dagger \phi_3)(\phi_2^\dagger \phi_1) \quad \underrightarrow{\mbox{under J}''} \quad |\lambda_2| e^{i \psi_{\lambda_2}} e^{-i \pi-3i\xi} (\phi_3^\dagger \phi_1)(\phi_2^\dagger \phi_1) \nonumber\\ 
&&\lambda_3 (\phi_3^\dagger \phi_1)(\phi_3^\dagger \phi_2) \quad \underrightarrow{\mbox{under J}''} \quad |\lambda_3| e^{i \psi_{\lambda_3}} e^{6i\xi} (\phi_2^\dagger \phi_3)(\phi_1^\dagger \phi_3) 
\eea
For these terms to be $Z_3$ invariant, the following conditions have to be satisfied:
\be 
|\lambda_1|  = |\lambda_2| \quad \mbox{and} \quad \psi_{\lambda_1} + \psi_{\lambda_2} = 3\xi + \pi \label{Z6conditions}
\ee
And since $ \psi_{\lambda_3} = -3\xi$, one can deduce that 
\be 
 \psi_{\lambda_1} + \psi_{\lambda_2} + \psi_{\lambda_3} = \pi
\ee
where $\psi_i$ are phases of $\lambda_i$.

It is, therefore, possible to have a $\Z_6^*$-symmetric potential by requiring (\ref{Z6conditions}) and applying the usual
restrictions (\ref{U1-restrictions}) on the $T$-symmetric part of the potential (\ref{3HDMpotential-Tsymmetric}).

However, just as in the previous cases, the resulting potential becomes symmetric under a larger unitary symmetry group, of the form (\ref{dZ2}), with the conditions:
\be 
|\lambda_1|  = |\lambda_2| \quad \mbox{and} \quad \psi_{\lambda_1} - \psi_{\lambda_2} = 3\delta
\ee
which come from requiring the sum of the following terms to be invariant under $d$:
\bea
&&\lambda_1 (\phi_1^\dagger \phi_2)(\phi_1^\dagger \phi_3) \quad \underrightarrow{\mbox{under d}} \quad |\lambda_1| e^{i \psi_{\lambda_1}} e^{-3i\delta} (\phi_2^\dagger \phi_1)(\phi_2^\dagger \phi_3)  \nonumber\\
&&\lambda_2 (\phi_2^\dagger \phi_3)(\phi_2^\dagger \phi_1) \quad \underrightarrow{\mbox{under d}} \quad |\lambda_2| e^{i \psi_{\lambda_2}} e^{3i\delta} (\phi_1^\dagger \phi_3)(\phi_1^\dagger \phi_2) \nonumber\\
&&\lambda_3 (\phi_3^\dagger \phi_1)(\phi_3^\dagger \phi_2) \quad \underrightarrow{\mbox{under d}} \quad |\lambda_3| e^{i \psi_{\lambda_3}} (\phi_3^\dagger \phi_2)(\phi_3^\dagger \phi_1) 
\eea
Again this $d$ does not commute with $Z_3$, and therefore the group $\Z_6^*$ is not realizable.

\subsection{Embedding $\Z_2\times \Z_2$}

The $\Z_2 \times \Z_2 \subset T$ symmetry can be realized as a group of simultaneous sign flips of any pair of doublets
(or alternatively of independent sign flips of any of the three doublets). 
If the simultaneous sign flip of doublets $i$ and $j$ are denoted as $R_{ij}$,
then the group contains $\{ 1,\,R_{12},\, R_{13},\, R_{23} \} $, with $R_{12}R_{13}=R_{23}$ and any $(R_{ij})^2=1$.
Therefore the generator of $Z_2 \times Z_2$ is:
\be 
a = \left(\begin{array}{ccc} 
\sigma_1 &   &  \\
  & \sigma_2 &   \\
  &   & \sigma_3 
\end{array}\right), \qquad \mbox{where} \quad \sigma_i = \pm 1
\ee
From the equation $ a x a^{-1} = x$, we have:
\be 
\left(\begin{array}{ccc} 
\sigma_1 ^2 x_{11} & \sigma_1 \sigma_2 x_{12}  & \sigma_1 \sigma_3 x_{13} \\
\sigma_2 \sigma_1 x_{21} & \sigma_2 ^2 x_{22} &  \sigma_2 \sigma_3 x_{23} \\
\sigma_3 \sigma_1 x_{31}  & \sigma_3 \sigma_2 x_{32}  & \sigma_3 ^2 x_{33} 
\end{array}\right) = \left(\begin{array}{ccc} 
x_{11} & x_{12}  & x_{13} \\
x_{21}  & x_{22} &  x_{23} \\
x_{31}  & x_{32}  & x_{33} 
\end{array}\right)
\ee
Since the products $\sigma_1 \sigma_2$, $\sigma_1 \sigma_3$ and $\sigma_2 \sigma_3$ can not all be $+1$ at the same time, therefore $x_{ij} =0 , i \neq j$. So, $x$ is a diagonal matrix, and since it has to be unitary, it has the form:
\be 
x = \left(\begin{array}{ccc} 
e^{i\xi_1} & 0 &0 \\
0 & e^{i\xi_2} & 0 \\
0 & 0 & e^{-i\xi_1 - i\xi_2} 
\end{array}\right)
\ee
This group can be embedded into $\Z_2 \times \Z_2 \times \Z_2^*$, where $\Z_2^*$ is generated by 
\be
J'_{\xi_1,\xi_2} =  \left(\begin{array}{ccc} 
e^{i\xi_1} & 0 &0 \\
0 & e^{i\xi_2} & 0 \\
0 & 0 & e^{-i\xi_1 - i\xi_2} 
\end{array}\right)\cdot J \label{J'-Z2Z2}
\ee
Up to here, any $\xi_1$, $\xi_2$ can be used to construct the $\Z_2 \times \Z_2 \times \Z_2^*$ group.
Note that this group can also be written as $\Z_2 \times \Z_2^* \times \Z_2^*$ or $\Z^*_2 \times \Z^*_2 \times \Z_2^*$ by redefining the generators.

The $\Z_2 \times \Z_2$-symmetric potential contains the following terms (\ref{z2Xz2-3hdm-pot});
\be
\lambda_{12} (\phi_1^\dagger \phi_2)^2 + 
\lambda_{23} (\phi_2^\dagger \phi_3)^2 + 
\lambda_{31} (\phi_3^\dagger \phi_1)^2 + h.c.\label{termsZ2Z2} 
\ee
with complex $\lambda_{ij}$. 
Let's see with which conditions these terms are invariant under $\Z_2 \times \Z_2 \times \Z_2^*$-symmetric potential:
\be 
J' \phi = \left(\begin{array}{ccc} 
e^{i\xi_1} & 0 &0 \\
0 & e^{i\xi_2} & 0 \\
0 & 0 & e^{-i\xi_1 - i\xi_2}
\end{array} \right) \cdot J \cdot \left(\begin{array}{c}
\phi _1 \\
\phi _2 \\
\phi _3 
\end{array} \right)  = \left( \begin{array}{c}
e^{i\xi_1} \phi _1^\dagger \\
e^{i\xi_2}\phi _2^\dagger \\
e^{-i\xi_1 - i\xi_2}\phi _3^\dagger  
\end{array} \right)
\ee
Therefore,
\bea
&&\lambda_{12} (\phi_1^\dagger \phi_2)^2 \quad \underrightarrow{\mbox{under J}'} \quad |\lambda_{12}| e^{i \psi_{12}} e^{-2i\xi_1 +2i\xi_2} (\phi_1^\dagger \phi_2)^2  \nonumber\\
&&\lambda_{23} (\phi_2^\dagger \phi_3)^2 \quad \underrightarrow{\mbox{under J}'} \quad | \lambda_{23}| e^{i \psi_{23}} e^{-2i\xi_1 -4i\xi_2} (\phi_2^\dagger \phi_3)^2 \nonumber\\
&&\lambda_{31} (\phi_3^\dagger \phi_1)^2 \quad \underrightarrow{\mbox{under J}'} \quad | \lambda_{31}| e^{i \psi_{31}} e^{4i\xi_1 +2i\xi_2} (\phi_3^\dagger \phi_1)^2 
\eea
So, we require the following conditions:
\bea
\psi_{12} -2\xi_1 +2\xi_2 = -\psi_{12}  \qquad &\quad&  \quad \psi_{12} = \xi_1 - \xi_2  \nonumber\\
\psi_{23} -2\xi_1 -4\xi_2 = -\psi_{23}  \qquad &\rightarrow& \quad \psi_{23} = \xi_1 + 2\xi_2  \nonumber\\  
\psi_{31} +4\xi_1 +2\xi_2 = -\psi_{31} \qquad &\quad&  \quad \psi_{31} = -2\xi_1 -\xi_2 
\eea
Which adds up to $\psi_{12} + \psi_{23} + \psi_{31} = 0 $,  or alternatively, when the product 
$\lambda_{12} \lambda_{23} \lambda_{31}$ is real.

Note that in contrast to the previous cases, the full $T$-symmetric potential (\ref{3HDMpotential-Tsymmetric})
is invariant under $\Z_2 \times \Z_2 \times \Z_2^*$ and no larger group. Therefore this symmetry group is realizable.

\subsection{Embedding $\Z_2$}

The unitary Abelian symmetry group $\Z_2$ can be implemented in a variety of ways, all of which are equivalent.
One can take, for example, the $\Z_2$ group generated by sign flips of the first and second doublets, $R_{12}$.
The usual $CP$-transformation $J$ commutes with $R_{12}$. An analysis similar to the previous case shows that the $\Z_2 \times \Z_2^*$ group generated by $R_{12}$ and $J$
is realizable.

It is also possible to embed the $\Z_2$ group into $\Z_4^*$. Consider the following
transformation
\be
J'_{\xi} =  \left(\begin{array}{ccc} 
0 & e^{i\xi} & 0 \\
-e^{i\xi} & 0 & 0 \\
0 & 0 & e^{-2i\xi} 
\end{array}\right)\cdot J\,,\quad \mbox{with} \quad (J')^2 = R_{12} \label{J'-Z2}
\ee
where $\xi$ is an arbitrary but fixed parameter.
The $Z_2$-symmetric potential consists of the $T$-symmetric terms (\ref{3HDMpotential-Tsymmetric}) with the usual restrictions (\ref{U1-restrictions}) and the specifically $Z_2$-symmetric terms:
\bea
&&\lambda_5(\phi_1^\dagger \phi_2)^2 + \lambda_6(\phi_1^\dagger\phi_2)\left[(\phi_1^\dagger\phi_1)-(\phi_2^\dagger\phi_2)\right]
+ \lambda_7 (\phi_1^\dagger\phi_3)(\phi_2^\dagger\phi_3) \nonumber \\
&&+\lambda_8 (\phi_1^\dagger\phi_3)^2 + \lambda_9 (\phi_2^\dagger\phi_3)^2 + h.c. 
\eea
Let's study how these terms transform under $J'$;
\be 
J' \phi = \left(\begin{array}{ccc} 
0 & e^{i\xi} &0 \\
-e^{i\xi} & 0 & 0 \\
0 & 0 & e^{-2i\xi}
\end{array} \right) \cdot J \cdot \left(\begin{array}{c}
\phi _1 \\
\phi _2 \\
\phi _3 
\end{array} \right)  = \left( \begin{array}{c}
e^{i\xi} \phi _2^\dagger \\
-e^{i\xi}\phi _1^\dagger \\
e^{-2i\xi}\phi _3^\dagger  
\end{array} \right)
\ee
Therefore,
\bea
\lambda_5(\phi_1^\dagger \phi_2)^2  & \underrightarrow{\mbox{under J}'} &  |\lambda_5| e^{i \psi_5} (-\phi_1^\dagger \phi_2)^2 \\ \nonumber
\lambda_6(\phi_1^\dagger\phi_2)\left[(\phi_1^\dagger\phi_1)-(\phi_2^\dagger\phi_2)\right]  & \underrightarrow{\mbox{under J}'} &  |\lambda_6| e^{i \psi_6}  (-\phi_1^\dagger\phi_2)\left[(\phi_2^\dagger\phi_2)-(\phi_1^\dagger\phi_1)\right] \\ \nonumber
\lambda_7 (\phi_1^\dagger\phi_3)(\phi_2^\dagger\phi_3) & \underrightarrow{\mbox{under J}'} & |\lambda_7| e^{i \psi_7} e^{-6i\xi} (\phi_3^\dagger \phi_1)(\phi_3^\dagger \phi_2) \\  \nonumber
\lambda_8 (\phi_1^\dagger\phi_3)^2 & \underrightarrow{\mbox{under J}'} & |\lambda_8|e^{i \psi_8} e^{-6i\xi}(\phi_3^\dagger\phi_2)^2 \\ \nonumber
\lambda_9 (\phi_2^\dagger\phi_3)^2 & \underrightarrow{\mbox{under J}'} & |\lambda_8|e^{i \psi_9} e^{-6i\xi}(-\phi_3^\dagger\phi_1)^2 \\ \nonumber
\eea
To keep these terms invariant, we require:
\bea
&&|\lambda_8| =|\lambda_9| \nonumber\\
&& \psi_8 + \psi_9 = 6\xi \nonumber\\
&& 2\psi_7 = 6\xi 
\eea
Or simply $\psi_8 + \psi_9 = 2\psi_7$. 
By adjusting $\lambda_5$ and $\lambda_6$, one can guarantee that no other unitary symmetry appears as a result of (\ref{U1-restrictions}).
Therefore, this $\Z_4^*$ group is realizable.


\chapter{$Z_p$ scalar dark matter from NHDM}\label{chapter7}

The discrepancy between the masses of large astronomical objects, determined from their gravitational effects, and the masses calculated from the luminous matter they contain, led to the hypothesis of dark matter. 

In 1932, by measuring the orbital velocities of stars in the Milky Way, Jan Oort was the first to find evidence for dark matter, when he calculated that the mass of the galactic plane must be more than the mass of the material that can be seen \cite{Freeman}. In 1933, while examining the Coma galaxy cluster, Fritz Zwicky used the virial theorem to infer the existence of unseen matter, which he referred to as "dark matter" \cite{Zwicky1,Zwicky2}. He calculated the gravitational mass of the galaxies within the cluster and obtained a value at least 400 times greater than expected from their luminosity, which meant that most of the matter must be dark. 

Dark matter is a currently unknown type of matter responsible for 23 percent of the mass-energy content of the universe. It does not emit or absorb light (or any other electromagnetic radiation), and therefore cannot be seen. Dark matter particle must be (almost) stable on cosmological time-scales \cite{Hambye}. Clearly its decay lifetime has to be larger than the age of the Universe ($\sim{10}^{18}$ seconds). 

Despite the astronomical evidence for the existence of dark matter (DM) \cite{DM}, there is still no direct experimental clue of which particle can play the role of a dark matter candidate. None of the Standard Model particles fully explain the anomalies observed. However, in the models beyond SM one can find dark matter candidates, which interestingly also address other particle physics issues such as electroweak symmetry breaking and small neutrino masses. 

Initially DM models contained completely inert dark matter, which would not be destroyed in any reaction, and would only annihilate through $dd^* \to X_{SM}$ (We use the notation $d$: dark matter candidate, $d^*$: dark matter anti-particle). Many models were introduced in which the stability of DM is provided by a conserved $\Z_2$ symmetry. 

The motivation for stabilization of DM using $\Z_2$ symmetry comes from the Standard Model, where stability results from the gauge symmetries and particle content, rather than from an ad-hoc symmetry.

In supersymmetric models the role of the $\Z_2$ symmetry is played by the $R$-parity, while in more phenomenologically oriented models such as the Inert doublet model \cite{inert}, or the minimal singlet model \cite{singlet}, the $\Z_2$ symmetry is imposed by hand when constructing the Lagrangian. In these models, all the SM particles including the SM-like Higgs boson have positive parity, while the dark sector particles are of negative parity. The lightest among these negative parity particles, which we will generically denote as $d$ here, is stable and represents the dark matter candidate.

Although $\Z_2$ symmetric models avoid the decay of $d$ to Standard Model particles, they allow direct two-particle annihilation $dd \to X_{SM}$, where $X_{SM}$ is any set of SM particles.
This certainly changes the kinetics of dark matter evolution in the early Universe and its relic abundance after the freeze-out. 

Trying to avoid the direct two-particle annihilation $dd \to X_{SM}$ , it is natural to explore groups larger than $\Z_2$ and try to stabilize dark matter by conserving their discrete quantum number. This idea was explored in a number of papers, in which both Abelian \cite{MaZ3,ZNdetailed,Z2Z2} and non-Abelian \cite{nonAbelian} finite groups were used. 

One particular class of groups used to stabilize dark matter are cyclic groups $\Z_p$ \cite{MaZ3,ZNdetailed}. With this choice, all fields are characterized by a conserved quantum number which we will call the $\Z_p$ charge $q$ and which is additive modulo $p$. Usually one assigns zero $\Z_p$ charge ($q=0$), to all the SM fields including the SM-like Higgs boson, while the dark matter candidates have non-zero $\Z_p$ charges. This strategy prohibits the direct two-particle annihilation $dd \to X_{SM}$, but opens up the possibility of a new two-particle process such as $dd \to d^* X_{SM}$, which was called semi-annihilation in \cite{semi-annihilation}, or even multi-particle versions of it, "multiple semi-annihilation". Such processes, have different impact on the kinetics of the dark mater abundances in the early Universe \cite{semi-annihilation-evolution}.

In this Chapter we demonstrate that $\Z_p$-stabilized scalar dark matter can easily arise in multi-Higgs-doublet models. 
The advantage of these models is that even with few doublets one can get $\Z_p$ with a rather large $p$. This fact is useful in avoiding semi-annihilation process $dd \to d^* X_{SM}$, which is allowed in a $\Z_3$-symmetric model and can be avoided by using a $\Z_p$ group with $p >3$.

The structure of the Chapter is as follows. In Section~\ref{Z3-symmetric} we study $Z_p$-symmetric models, particularly a 3HDM example in which dark matter candidates are stabilized by the $\Z_3$ symmetry group. In Section~\ref{avoiding} we explore $\Z_p$-symmetric 4HDMs, and the possibility to avoid semi-annihilation processes with large $\Z_p$, in particular we study a $\Z_7$-symmetric potential in detail. The results of this Chapter are published in \cite{Zp-scalar-dm}.

\section{$\Z_3$-symmetric 3HDM}\label{Z3-symmetric}

Before presenting the example of a $\Z_3$-stabilized dark matter in 3HDM, let's list the conditions that an electroweak symmetry breaking (EWSB) model with $\Z_p$-stabilized scalar dark matter must satisfy.
\begin{itemize}
\item
The entire Lagrangian and not only the Higgs potential must be $\Z_p$-symmetric. The simplest way to achieve this is to set the $\Z_p$ charges of all the SM particles to zero and to require that only one Higgs doublet (the SM-like doublet) couples to fermions. The $\Z_p$ charge of this doublet must be zero, and it does not matter which doublet is chosen to be SM-like due to the freedom to simultaneously shift the $\Z_p$ charges of all the doublets.

\item
The $\Z_p$ symmetry must remain after EWSB. This is possible when only the SM-like doublet acquires a non-zero vacuum expectation value. 

\item
If we insist on $\Z_p$-stabilization, that is, we require that not only decays but also 2-, 3-, $\dots$, $(p-1)$-particle annihilation to SM fields are forbidden by quantum numbers, then the dark matter candidates must have $\Z_p$ charge $q$ which is coprime with $p$.

\end{itemize}

To show that multi-Higgs-doublet models can easily satisfy these conditions, we start with the simplest single parametric model of this kind, $\Z_3$-symmetric 3HDM.

A natural way to implement the $\Z_3$ symmetry in 3HDM is to construct a potential symmetric under $\phi_1 \to \phi_2 \to \phi_3 \to \phi_1$. However, we proved in Chapter \ref{chapter5} that upon an appropriate Higgs basis change, this transformation will turn into pure phase rotations of certain doublets. Therefore, we use the phase rotation representations of the cyclic symmetry groups $\phi_i \to e^{i\alpha_i}\phi_i$. Also, to keep the notation short, we will describe any such transformation by providing the $N$-tuple of phases $\alpha_i$.

A scalar potential invariant under a certain group $G$ of phase rotations can be written as a sum $V = V_0 + V_G$, where $V_0$ is invariant under any phase rotation, while $V_G$ is a collection of extra terms which realize the chosen symmetry group. The generic phase rotation invariant part has the form
\be
\label{Tsymmetric6}
V_0 = \sum_i \left[- m_i^2 (\phi_i^\dagger \phi_i) + \lambda_{ii} (\phi_i^\dagger \phi_i)^2\right] 
+ \sum_{ij}\left[\lambda_{ij}(\phi_i^\dagger \phi_i) (\phi_j^\dagger \phi_j) + 
\lambda'_{ij}(\phi_i^\dagger \phi_j) (\phi_j^\dagger \phi_i)\right] 
\ee
while $V_G$ obviously depends on the group. In particular, for the group $\Z_3$ in 3HDM we have
\be
V_{\Z_3} = \lambda_{1}(\phi_3^\dagger\phi_1)(\phi_2^\dagger\phi_1) + 
\lambda_{2}(\phi_1^\dagger\phi_2)(\phi_3^\dagger\phi_2) + 
\lambda_{3}(\phi_2^\dagger\phi_3)(\phi_1^\dagger\phi_3) + h.c. 
\ee
where at least two of the coefficients $\lambda_1,\,\lambda_2,\,\lambda_3$ are non-zero (otherwise, the potential would have a continuous symmetry). This potential is symmetric under the phase rotations
\be 
\phi_1 \to \phi_1 \quad , \quad \phi_2 \to e^{2i\pi/3}\phi_2 \quad , \quad \phi_3 \to e^{4i\pi/3}\phi_3
\ee
The vector of phases in this case has the form 
\be
a = {2\pi \over 3}(0,\, 1,\, 2)\,,\quad a^3 = 1 \label{generatorZ3}
\ee

This notation suggests $\Z_3$ quantum numbers $0, 1$ for the first and second doublets respectively, and quantum number $2$ for the third doublet.
In fact, the assignment of these charges to the three doublets is completely arbitrary, and the group generated by
the generator $a$ with permuted charges has the same action on the potential. In (\ref{generatorZ3}) we simply chose $\phi_1$ to be the SM-like doublet.

Whether this symmetry is conserved or spontaneously broken depends on the pattern of the vacuum expectation values.
If we insist on conservation of the $\Z_3$ symmetry, we require that 
\be
\lr{\phi_1^0} = \frac{v}{\sqrt{2}} \quad , \quad \lr{\phi_2^0} = \lr{\phi_3^0} = 0
\ee

The potential $V_0 + V_{\Z_3}$ can have a $\Z_3$-symmetric global minimum upon appropriate choice of coefficients. Below we show that we do not require any constraints on $\lambda$'s and therefore no fine-tuning is required for this point to be a minimum of the potential. 

First, we note that if $(v_1,\,0,\,0)$ is an extremum of $V_0$, then it is also an extremum of $V_0 + V_{\Z_3}$, 
since the extra terms from $V_{\Z_3}$ contain $\phi_1$ only linearly and quadratically and they do not create any new linear terms in the potential after EWSB.

Therefore, when constructing a model, one can first build $V_0$ with a global minimum at $(v_1,\,0,\,0)$
and then add a sufficiently weak (with no strong couplings) $V_{\Z_3}$ so that this point remains a minimum.

Now, turning to minimization of $V_0$, suppose that we search for the neutral minimum with a generic complex 
v.e.v. pattern $(v_1,\, v_2,\, v_3)$. The potential would have the form;
\be
V_0 = -{m_i}^2 {v_i}^2 + \lambda_{ii} {v_i}^4 + \left( \lambda_{ij}+\lambda'_{ij} \right) {v_i}^2{v_j}^2 
\ee
Introducing $\rho_i = |v_i|^2 \ge 0$, we can rewrite $V_0$ as;
\bea
V_0 &=& -{m_i}^2 \rho_i + \lambda_{ii} {\rho_i}^2 + \left( \lambda_{ij}+\lambda'_{ij} \right) \rho_i \rho_j \nonumber \\
    &=& - M_i \rho_i + {1 \over 2}\Lambda_{ij}\rho_i \rho_j 
\eea
where $M_i = (m_1^2,\,m_2^2,\,m_3^2)$, and $\Lambda_{ij}$ is constructed from $\lambda_{ij}$ and $\lambda_{ij}^\prime$.
One could introduce  $\mu_i = (\Lambda_{ij})^{-1}M_j$, and rewrite the potential as
\be
V_0= {1 \over 2}\Lambda_{ij}(\rho_i -\mu_i)(\rho_j - \mu_j) + const\,.\label{shifted}
\ee
Positivity conditions on the potential guarantee that $\Lambda_{ij}$ is a positive definite matrix, therefore its inverse exists. 

The allowed values of the $\rho_i$, ($\rho_i \ge 0$), populate the first octant in the three-dimensional euclidean space.
Due to (\ref{shifted}), the search for the global minimum (trying to minimize $(\rho_i -\mu_i)$) can be reformulated as the search for the point in the first octant which lies closest to $\mu_i$ in the euclidean metric defined by $\Lambda_{ij}$. Clearly, if $\mu_i$ itself lies inside the first octant ($\mu_1,\, \mu_2,\, \mu_3 \ge 0$), then the global minimum is at $\rho_i = \mu_i$. If $\mu_i$ lies outside the first octant, then the closest point lies either on the edge of the first octant, or at the origin. 

The global minimum is unique by the convexity arguments; since the octant $\rho_i \ge 0$ is convex, there is a unique $\mu_i$ that minimizes $(\rho_i -\mu_i)$.


Having established that the required vacuum pattern is generically possible, we now switch to a simple
version of the model. This is done only to simplify the presentation of the argument;
if needed, the calculations can be repeated for a generic potential.
Namely, we now choose $m_1^2 > 0$, $m_2^2,\,m_3^2 < 0$ and take $\Lambda_{ij} = 2\lambda_0 \delta_{ij}$,
so the the potential becomes;
\bea
V &=& V_0+V_{Z_3}   \\
&=& - m_1^2 (\phi_1^\dagger \phi_1) + |m_2^2| (\phi_2^\dagger \phi_2) + |m_3^2| (\phi_3^\dagger \phi_3)
+ \lambda_0 \left[(\phi_1^\dagger \phi_1)^2+ (\phi_2^\dagger \phi_2)^2 + (\phi_3^\dagger \phi_3)^2\right] \nonumber\\
&& + \lambda_{1}(\phi_3^\dagger\phi_1)(\phi_2^\dagger\phi_1) + 
\lambda_{2}(\phi_1^\dagger\phi_2)(\phi_3^\dagger\phi_2) + 
\lambda_{3}(\phi_2^\dagger\phi_3)(\phi_1^\dagger\phi_3) + h.c. \nonumber
\eea
By construction, its global minimum is at $\lr{\phi_i^0} = (v/\sqrt{2},\,0,\,0)$, where $v^2 = m_1^2/\lambda_0$.

\subsection{Mass matrix}

In order to find the mass matrices, we write the doublets as
\be
\phi_1 = \doublet{G^+}{{1\over\sqrt{2}}(v + h + i G^0)}\,,\quad 
\phi_{2} = \doublet{w_{2}^+}{z_{2}}\,,\quad 
\phi_{3} = \doublet{w_{3}^+}{z_{3}} \label{expand}
\ee
Here $h$ is the SM-like Higgs boson, $G^0$ and $G^+$ are the would-be Goldstone bosons, 
while $w_2^+,\, w_3^+$ and $z_2,\, z_3$ are charged and neutral scalar bosons, respectively.

Note that in contrast to the usual practice, we describe the neutral scalar bosons in the second and third doublet
by complex fields rather than pairs of neutral fields. The reason is that, by construction, the fields corresponding to the real and imaginary pairs of $z$'s have identical masses and coupling constants. In any process that can arise in this model, these two fields are emitted and absorbed simultaneously, so they can be described by a single complex field $z_i$.

Now we assign the $\Z_3$ quantum numbers according to the generator $a=2\pi/3 (0, 1, 2)$. The SM-like Higgs boson gets $\Z_3$-charge $q=0$. Fields $w_2^+$ and $z_2$, from the second doublet, have $\Z_3$-charge $q=1$, while for $w_3^+$ and $z_3$, from the third doublet, $q=2$. The conjugates of fields $w_2^-$ and $z_2^*$ have $q=2$ ( $=-1$ mod 3), while for $w_3^-,\, z_3^*$ $q=1$ ( $=-2$ mod 3). Note that the neutral scalars with similar quantum numbers can mix.

Let's rewrite the potential in terms of the new parameters. The second order potential is:
\bea
V_{2} &=& \left[-m_1^2 v +\lambda_0 v^3 \right]h  \\
  &+& \left[-\frac{m_1^2}{2} + \frac{3\lambda_0}{2} v^2 \right](h^2) \nonumber\\
  &+& \left[-m_1^2 +\lambda_0 v^2\right](G^-G^+) + \left[-\frac{m_1^2}{2} + \frac{\lambda_0v^2}{2} \right](G_0^2)  \nonumber\\
  &+& |m_2^2|(w_2^- w_2^+) + |m_3^2|(w_3^- w_3^+) \nonumber\\
  &+& |m_2^2|(z_2^* z_2) + |m_3^2|(z_3^* z_3) + \left[\frac{\lambda_1v^2}{2} \right](z_2^* z_3^*) + \left[\frac{\lambda_1^*v^2}{2} \right] (z_3 z_2) \nonumber
\eea
With the value $v^2= \frac{m_1^2}{\lambda_0}$, one finds the mass spectrum; The SM-like Higgs boson has mass $m_h^2 = 2m_1^2$, while the masses of the charged scalar bosons are $m_{w_2^\pm}^2 = |m_2^2|$, $m_{w_3^\pm}^2 = |m_3^2|$, and the Goldstone bosons $G^{\pm}, G^0$ obviously have mass zero.

Neutrals with equal $q$ can mix, which indeed happens at $\lambda_1 \not = 0$. We describe the resulting mass eigenstates by complex fields $d$ and $D$ ($m_d < m_D$), both having $q=1$: 
\be
m_{D,d}^2 = {|m_2^2| + |m_3^2| \over 2} \pm {1 \over 2}\sqrt{(|m_2^2| - |m_3^2|)^2 + {|\lambda_1|^2 \over \lambda_0^2}m_1^4} 
\ee
where $d$ and $D$ are rotations of $z_2$ and $z_3^*$;
\bea
&& d = \cos\alpha\, z_2 + \sin\alpha e^{-i\beta}\, z_3^*\,,\quad D = - \sin\alpha e^{i\beta}\, z_2 + \cos\alpha\, z_3^* \nonumber\\
&& \tan2\alpha = {|\lambda_1| \over \lambda_0}{m_1^2 \over |m_2^2|-|m_3^2|}\,,\quad \beta = \mbox{arg}\, \lambda_1 
\eea
Note that within this model we have
\be
m_d < m_{w_2^\pm},\, m_{w_3^\pm} < m_D\,,\quad m_{w_2^\pm}^2 + m_{w_3^\pm}^2 = m_d^2 + m_D^2 
\ee

\subsection{Higher order terms in the $Z_3$ symmetric potential}

The triple and quartic interaction terms arising from $V_0$ and $V_{\Z_3}$ specify the dynamics of the dark matter candidates. After using a unitary gauge transformation to eliminate the Goldstone bosons, higher order terms in the $Z_3$-symmetric potential are:
\bea
V_{3}  &=& \lambda_0 \left[ 2vhG^-G^+ + vh^3 + vh G^0 G^0 \right]   \\
&&+ v \lambda_1 \left[\frac{1}{\sqrt{2}} w_3^-G^+z_2^* + \frac{1}{\sqrt{2}} w_2^-G^+z_3^* +  hz_3^*z_2^* \right]+ h.c. \nonumber\\
&&+ \frac{\lambda_2 v}{\sqrt{2}} \left[ z_2w_3^-w_2^+ + z_2z_2z_3^*\right] +h.c. \nonumber\\
&&+ \frac{\lambda_3 v}{\sqrt{2}} \left[ z_3w_2^-w_3^+ + z_2^*z_3z_3 \right] + h.c. \nonumber\\
V_{4}  &=& 2\lambda_0 [ {(G^-G^+)}^2 + \frac{1}{4}h^4 + \frac{1}{4}{(G^0 G^0)}^2 + h^2G^- G^+ + G^0G^0G^-G^+ \frac{1}{2}h^2{G^0}^2  \nonumber\\
&& \quad + {(w_2^-w_2^+)}^2 + {(z_2^*z_2)}^2 + 2w_2^-w_2^+z_2^*z_2 + {(w_3^-w_3^+)}^2 + {(z_3^*z_3)}^2 + 2w_3^-w_3^+z_3^*z_3 ]  \nonumber\\
&&+ \lambda_1 [ w_3^-G^+G^+w_2^- + \frac{1}{\sqrt{2}}w_3^-G^+hz_2^* + \frac{i}{\sqrt{2}}w_3^-G^+G^0z_2 + \frac{1}{\sqrt{2}}w_2^-G^+hz_3^*  \nonumber\\ 
&& \quad + \frac{1}{2}h^2z_2^*z_3^* + \frac{i}{2}G^0hz_2^*z_3^* - \frac{i}{\sqrt{2}}w_2G^0G^+z_3^* - \frac{i}{2}G^0hz_2^*z_3^* + \frac{1}{2}G^0G^-z_2^*z_3^*] \nonumber\\
&&+ \lambda_2 [ w_3^-G^-w_2^+w_2^+ + w_2^+G^-z_2z_3^* + \frac{1}{\sqrt{2}}hw_3^-w_2^+z_2  + \frac{1}{\sqrt{2}}hz_2z_2z_3^* \nonumber\\
&& \quad + \frac{i}{\sqrt{2}}G^0z_2w_2^+w_3^- + \frac{i}{\sqrt{2}}G^0z_2z_2z_3^*   ] \nonumber\\
&&+ \lambda_3 [ w_2^-G^-w_3^+w_3^+ + \frac{1}{\sqrt{2}}w_2^-w_3^+hz_3 - \frac{i}{\sqrt{2}}w_2^-w_3^+G^0z_3 + z_2^*z_3w_3^+G^- \nonumber\\
&& \quad + \frac{1}{\sqrt{2}}hz_2^*z_3z_3 - \frac{i}{\sqrt{2}}G^0z_3z_3z_2^*   ] + h.c. \nonumber
\eea
The higher order terms in the potential in terms of $d$ and $D$, regardless of the coefficients are of the following form;
\bea
V_{3} &=& \lambda_0 \left[ hhh \right]   \\
&&+ \lambda_1 \left[ hd^*d , hd^*D , hdD^* , hD^*D \right]  \nonumber\\
&&+ \lambda_2 \left[ w_2^+ w_3^- d , w_2^+ w_3^- D , ddd , ddD , dDD , DDD \right] +h.c. \nonumber\\
&&+ \lambda_3 \left[ w_2^- w_3^+ d^* , w_2^- w_3^+ D^* , d^*d^*d^* , d^*d^*D^* , d^*D^*D^* , D^*D^*D^* \right] +h.c. \nonumber\\
V_{4} &=& \lambda_0 [ hhhh , (w_{2,3}^-w_{2,3}^+)^2 , (w_2^-w_2^+ + w_3^-w_3^+)(dd^* , dD^* , d^*D , D^*D), \nonumber\\
&& \qquad ddd^*d^*,  ddD^*D^*,  d^*d^*DD,  DDD^*D^* ] \nonumber \\
&&+ \lambda_1 \left[ hhd^*d , hhd^*D , hhdD^* , hhD^*D \right]  \nonumber \\
&&+ \lambda_2 \left[ hw_2^+ w_3^- d , hw_2^+ w_3^- D , hddd , hddD , hdDD , hDDD \right] +h.c. \nonumber \\
&&+ \lambda_3 \left[ hw_2^- w_3^+ d^* , hw_2^- w_3^+ D^* , hd^*d^*d^* , hd^*d^*D^* , hd^*D^*D^* , hD^*D^*D^* \right] +h.c. \nonumber
\eea
The lightest particle from the second and third doublets is $d$, and it is stabilized against decaying into the SM particles by the $\Z_3$ symmetry. From the kinematically allowed terms in the third order potential, we have decays such as $D \to dh$. Charged Higgs bosons $w_{2,3}^{\pm}$ will decay into $d$ or $d^*$ plus SM particles.

With the $Z_3$ conserved quantum number, one could also write third order terms such as $D \to dZ$, and $D \to w_2^+ W^-$, $D \to w_3^- W^+$, if allowed kinematically. 

If the mass splitting between the $d$ and $D$ is small, then these processes involve virtual $h$, $Z$, etc. which then decay into the SM particles. In this aspect, $D$ decays are similar to weak decays.

In the case of symmetric dark matter, the main process leading to depletion of dark matter after electroweak symmetry breaking is the direct annihilation $dd^* \to X_{SM}$, and the semi-annihilation reaction is only a correction to this process. Still, it might be possible that this correction leads to a sizeable departure of the kinetics of the dark matter in the early Universe and affects the relic abundance at the freeze-out \cite{semi-annihilation-evolution}.

The situation becomes more interesting in models with asymmetric dark matter \cite{asymmetric},
in which an asymmetry between $d$ and $d^*$ is generated at a higher energy scale. It is possible for example that upon electroweak phase transition almost all $d$'s annihilate with $d^*$ into the SM sector, leaving behind a certain concentration of dark matter candidates $d$. 

$d$ can also scatter elastically $dd \to dd$, which can initiate semi-annihilation processes such as $dd \to d^*X_{SM}$ with a subsequent annihilation of $d^*$ with a $d$. This possibility originates from the following quartic terms in the scalar potential 
\be
{1\over \sqrt{2}} h ddd \cos\alpha \sin\alpha e^{-i\beta} (\lambda_2 \cos\alpha + \lambda_3^*\sin\alpha e^{-i\beta}) + h.c.
\ee
Depending on $\lambda$'s, this process can be as efficient as the direct annihilation 
in the usual annihilating dark matter models, or it can be suppressed by the small coupling constants. 

Finally, the same interaction terms also generate the triple annihilation processes $ddd \to h \to X_{SM}$,
whose rate is, however, suppressed at small densities compared to the semi-annihilation.

\section{Avoiding semi-annihilation in 4HDM}\label{avoiding}

The presence of the $hddd$ terms in the interaction Lagrangian in the 3HDM example, which were responsible for the two-particle semi-annihilation process, was due to the $\Z_3$ symmetry group. 

It is possible to avoid such two-particle semi-annihilation processes by employing a $\Z_p$ group with larger $p$ in a model with four Higgs doublets.

As proved in Chapter \ref{chapter5}, one can encode in the 4HDM scalar sector any group $\Z_p$ with $p \le 8$.
Note that in order to avoid continuous symmetry, one must accompany the phase-symmetric part of the potential $V_0$
with at least three distinct terms transforming non-trivially under phase rotations.

The full list of these realizable symmetries, their corresponding terms and their generators are given in Section \ref{4HDM-examples}. In Table \ref{tableDM} we remind the reader of these symmetry groups together with examples of the three interaction terms and the phase rotations that generate the corresponding group. Note that all these groups are realizable, so that if the three terms in each line are written down with non-zero coefficients, there is no phase transformation other than multiple of the generator and the common overall phase shift that leaves them invariant. 

\begin{table}[!htb]
\begin{center}
\begin{tabular}{ c c c }	
\hline
Group & Interaction terms & Generators \\ [3mm]
\hline 
  $\Z_2$ &  $(\phi_1^\dagger \phi_2),\  (\phi_1^\dagger \phi_3),\ (\phi_1^\dagger \phi_4)^2$ & ${2\pi \over 2}(0,\,0,\,0,\,1)$  \\[3mm]
  $\Z_3$ &  $(\phi_3^\dagger \phi_2),\  (\phi_1^\dagger \phi_3)(\phi_4^\dagger \phi_3),\ (\phi_1^\dagger \phi_4)(\phi_1^\dagger \phi_2)$ & ${2\pi \over 3}(0,\,1,\,1,\,2)$  \\[3mm]
  $\Z_4$ &  $(\phi_3^\dagger \phi_2),\  (\phi_1^\dagger \phi_3)(\phi_4^\dagger \phi_3),\ (\phi_1^\dagger \phi_4)^2$ & ${2\pi \over 4}(0,\,1,\,1,\,2)$  \\[3mm]
  $\Z_5$ &  $(\phi_4^\dagger \phi_3)(\phi_2^\dagger \phi_3),\  (\phi_3^\dagger \phi_2)(\phi_1^\dagger \phi_2),\ (\phi_4^\dagger \phi_1)(\phi_3^\dagger \phi_1)$ & ${2\pi \over 5}(0,\,1,\,2,\,3)$  \\[3mm]
  $\Z_6$ &  $(\phi_4^\dagger \phi_3)(\phi_2^\dagger \phi_3),\  (\phi_3^\dagger \phi_2)(\phi_1^\dagger \phi_2),\ (\phi_1^\dagger \phi_4)^2$ & ${2\pi \over 6}(0,\,1,\,2,\,3)$  \\[3mm]
  $\Z_7$ &  $(\phi_4^\dagger \phi_1)(\phi_3^\dagger \phi_1),\  (\phi_4^\dagger \phi_3)(\phi_2^\dagger \phi_3),\ (\phi_4^\dagger \phi_2)(\phi_1^\dagger \phi_2)$ & ${2\pi \over 7}(0,\,2,\,3,\,4)$  \\[3mm]
  $\Z_8$ &  $(\phi_4^\dagger \phi_3)(\phi_2^\dagger \phi_3),\  (\phi_4^\dagger \phi_2)(\phi_1^\dagger \phi_2),\ (\phi_1^\dagger \phi_4)^2$ & ${2\pi \over 8}(0,\,2,\,3,\,4)$  \\ [3mm]
\hline
\end{tabular}
\caption{Cyclic groups realizable as symmetry groups in the scalar sector of 4HDM}
\end{center}
\label{tableDM}
\end{table}

Since the potential is invariant under the common phase shift of all doublets, one can freely add additional equal phases to the generators shown in the third column in Table \ref{tableDM} and, in addition, one can permute the doublets. For example, the last line of this table can be replaced by $(\phi_3^\dagger \phi_2)(\phi_1^\dagger \phi_2),\  (\phi_3^\dagger \phi_1)(\phi_4^\dagger \phi_1),\ (\phi_4^\dagger \phi_3)^2$, $(\phi_4 \rightarrow \phi_3 \rightarrow \phi_2 \rightarrow \phi_1 \rightarrow \phi_4)$, which is symmetric under the $\Z_8$ group generated by phase rotations ${2\pi \over 8}\, (4,\, 0,\, 2,\, 3)\equiv {2\pi \over 8}\, (2,\, -2,\, 0,\, 1)\equiv {2\pi \over 8}\, (2,\, 6,\, 0,\, 1)\equiv {2\pi \over 8}\, (0,\, 1,\, 2,\, 6)$.

The patterns of phase shifts given in this table allow for construction of various dark sectors with different possibilities for dark matter dynamics. Here, we show that there are examples in 4HDM in which semi-annihilation of dark matter candidates is also forbidden. For this purpose we pick a $Z_7$-symmetric example.

\subsection{$\Z_7$ symmetric example}

Consider the $\Z_7$-symmetric 4HDM with the potential;
\bea
V = V_0+V_{Z_7} &=& - m_1^2 (\phi_1^\dagger \phi_1) + |m_2^2| (\phi_2^\dagger \phi_2) + |m_3^2| (\phi_3^\dagger \phi_3)  + |m_4^2| (\phi_4^\dagger \phi_4) \\
&& + \lambda_0 \left[(\phi_1^\dagger \phi_1)^2+ (\phi_2^\dagger \phi_2)^2 + (\phi_3^\dagger \phi_3)^2 + (\phi_4^\dagger \phi_4)^2\right] \nonumber\\
&& + \lambda_{1}(\phi_4^\dagger\phi_1)(\phi_3^\dagger\phi_1) + 
\lambda_{2}(\phi_4^\dagger\phi_2)(\phi_1^\dagger\phi_2) + 
\lambda_{3}(\phi_4^\dagger\phi_3)(\phi_2^\dagger\phi_3) + h.c.\nonumber  
\eea
As before, we assume that only the first doublet couples to fermions $(v/\sqrt{2},\,0,\,0,\,0)$, then we expand the doublets; 
\be
\phi_1 = \doublet{G^+}{{1\over\sqrt{2}}(v + h + i G^0)},
\phi_{2} = \doublet{w_{2}^+}{z_{2}},
\phi_{3} = \doublet{w_{3}^+}{z_{3}}, 
\phi_{4} = \doublet{w_{4}^+}{z_{4}} 
\ee
Similar to the previous example, $h$ is the SM-like Higgs boson, $G^0$ and $G^+$ are the Goldstone bosons, 
while $w_2^+,\, w_3^+, w_4^+$ and $z_2, z_3, z_4$ are charged and neutral scalar bosons, respectively.

Now we assign the $\Z_7$ quantum numbers according to the generator $a=2\pi/7 (0, 2, 3, 4)$. The SM-like Higgs boson gets $\Z_7$-charge $q=0$. Fields $w_2^+$ and $z_2$, from the second doublet, have $\Z_7$-charge $q=2$, while for $w_3^+$ and $z_3$, from the third doublet, $q=3$, and for $w_4^+$ and $z_4$, from the forth doublet, $q=4$. The conjugates of fields$w_2^-$ and $z_2^*$ have $q=5$ ( $=-2$ mod 7), and $w_3^-$ and $z_3^*$ have $q=4$ ( $=-3$ mod 7), while for $w_4^-,\, z_4^*$ $q=3$ ( $=-4$ mod 7). Note that the neutral scalars with similar quantum numbers can mix.

Let's rewrite the potential in terms of the new parameters;
\bea
V_{2} &=& \left[-m_1^2 v +\lambda_0 v^3 \right]h  \\
  &&+ \left[-\frac{m_1^2}{2} + \frac{3\lambda_0}{2} v^2 \right](h^2) \nonumber\\
  &&+ \left[-m_1^2 +\lambda_0 v^2\right](G^-G^+) + \left[-\frac{m_1^2}{2} + \frac{\lambda_0v^2}{2} \right](G_0^2)  \nonumber\\
  &&+ |m_2^2|(w_2^- w_2^+) + |m_3^2|(w_3^- w_3^+) + |m_4^2|(w_4^- w_4^+) + |m_2^2|(z_2^* z_2) \nonumber\\
  &&+ |m_3^2|(z_3^* z_3) + |m_4^2|(z_4^* z_4) + \left[\frac{\lambda_1v^2}{2} \right](z_3^* z_4^*) + \left[\frac{\lambda_1^*v^2}{2} \right] (z_3 z_4) \nonumber
 \eea
With the value $v^2= \frac{m_1^2}{\lambda_0}$, one finds the mass spectrum; The SM-like Higgs boson has mass $m_h^2 = 2m_1^2$, while the masses of the charged scalar bosons are $m_{w_2^\pm}^2 = |m_2^2|$, $m_{w_3^\pm}^2 = |m_3^2|$, $m_{w_4^\pm}^2 = |m_4^2|$ and for the neutral scalar $m_{z_2}^2 = |m_2^2|$.

Neutrals with equal $q$ can mix, which happens at $\lambda_1 \not = 0$. Again we describe the resulting mass eigenstates by complex fields $d$ and $D$ ($m_d < m_D$), both having $q=3$: 
\be
m_{D,d}^2 = {|m_3^2| + |m_4^2| \over 2} \pm {1 \over 2}\sqrt{(|m_3^2| - |m_4^2|)^2 + {|\lambda_1|^2 \over \lambda_0^2}m_1^4} 
\ee
where $d$ and $D$ are rotations of $z_3$ and $z_4^*$;
\bea
&& d = \cos\alpha\, z_3 + \sin\alpha e^{-i\beta}\, z_4^*\,,\quad D = - \sin\alpha e^{i\beta}\, z_3 + \cos\alpha\, z_4^* \nonumber\\
&& \tan2\alpha = {|\lambda_1| \over \lambda_0}{m_1^2 \over |m_3^2|-|m_4^2|}\,,\quad \beta = \mbox{arg}\, \lambda_1 
\eea
Here we have
\be
m_d < m_{w_3^\pm},\, m_{w_4^\pm} < m_D\,,\quad m_{w_3^\pm}^2 + m_{w_4^\pm}^2 = m_d^2 + m_D^2 
\ee

By adjusting free parameters, one can easily make $d$ lighter than $w_2^\pm$ and $z_2$, and therefore the lightest among the particles that transform non-trivially under $\Z_7$. Then, the other particles will either eventually decay to $d$ or $d^*$ plus SM particles or will be stable representing an additional contribution to dark matter. Again, if an asymmetry between $d$ or $d^*$ exists and if the rate of their annihilation is high, 
then after the freeze-out we are left with the gas predominantly made of $d$'s.

\subsection{Higher order terms in the $Z_7$ symmetric potential}

The subsequent microscopic dynamics depend on the interactions between $d$'s and $z_2$'s which follow from the $Z_7$-symmetric part of the potential.
After eliminating the Goldstone bosons and rewriting the potential in terms of $d$ and $D$, the interactions between $d$'s and $z_2$'s in the higher order terms in the potential, regardless of the coefficients are of the following form
\bea
V_{3} &=& \lambda_1 \left[ hd^*d , hd^*D , hdD^* , hD^*D \right]   \\
&&+\lambda_2 \left[ z_2z_2d , z_2z_2D  \right] +h.c. \nonumber \\
V_{4} &=& \lambda_0 \left[ ddd^*d^*,  ddD^*D^*,  d^*d^*DD,  DDD^*D^* \right] \nonumber \\
&&+ \lambda_1 \left[ hhd^*d , hhd^*D , hhdD^* , hhD^*D \right]  \nonumber \\
&&+ \lambda_2 \left[ hz_2z_2d , hz_2z_2D  \right] +h.c. \nonumber \\
&&+ \lambda_3 \left[ z_2^*w_4^- w_3^+ d , z_2^*w_4^- w_3^+ D , z_2^*ddd , z_2^*ddD , z_2^*dDD , z_2^*DDD\right] +h.c. \nonumber
\eea
The relevant terms are $hd z_2 z_2$ from the $\lambda_2$ term and $dddz_2^*$ from the $\lambda_3$ term.
Since $m_d < m_{z_2} $, one- or two-particle processes such as $d\to z_2^* z_2^*$, $dd \to d^*z_2$, and $dd \to z_2z_2z_2$ are 
all kinematically forbidden, shown in Figure (\ref{Int1}):
\begin{figure} [ht]
\centering
\includegraphics[height=2.7cm]{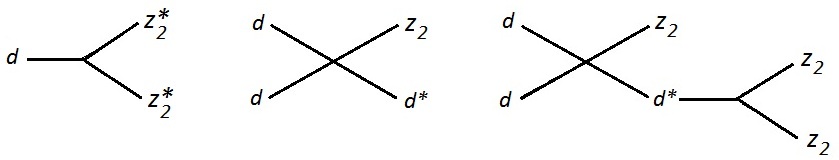}
\caption{Forbidden 1- and 2-particle processes}
\label{Int1}
\end{figure}

Multiple collision kinetics depends on whether $m_{z_2} < 3 m_d$ or not. If $z_2$ is not too heavy, then the "triple semi-annihilation", $ddd \to z_2 X_{SM}$, is kinematically allowed and will create a population of $z_2$ even if it was absent before.  However, $z_2$ will get depleted by the semi-annihilation process $z_2 z_2 \to d^* X_{SM}$.
So, if one starts with a certain concentration of $z_2$, $d$ and their antiparticles, then $z_2$ will die off at a higher rate than $d$'s. 

In stationary conditions, the terminal concentrations will be those equilibrating the rates of the following $6d$ tree-level scattering with intermediate $z_2$'s, $6d \to z_2 z_2 X_{SM}\to d^* X'_{SM}$ and the subsequent annihilation of $d^*$. The net result of this chain will be the "$7d$-burning process", $7d \to X_{SM}$, the bottleneck in this chain being the triple-$d$ process $ddd \to z_2 X_{SM}$, shown in Figure (\ref{Int3}).
\begin{figure} [ht]
\centering
\includegraphics[height=4.5cm]{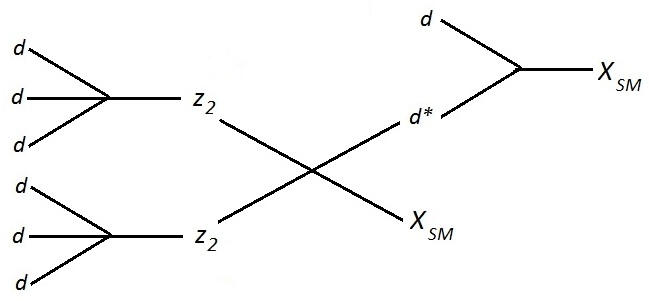}
\caption{The $7d$-burning processes, with triple-$d$ process bottleneck}
\label{Int3}
\end{figure}

On the other hand, if $m_{z_2} > 3 m_d$, then $ddd \to z_2$ is kinematically forbidden, while the inverse process leads to a quick $z_2$ decay. In this case, one can still burn $d$'s via the tree-level process with intermediate virtual $z_2$'s, $dddd \to d^*d^* z_2 z_2 \to d^*d^*d^* X_{SM}$. The net result will be the same $7d$-burning, but the bottleneck process is now the $4d$ collision, whose rate is even more suppressed, shown in Figure (\ref{Int2}).
\begin{figure} [ht]
\centering
\includegraphics[height=6.5cm]{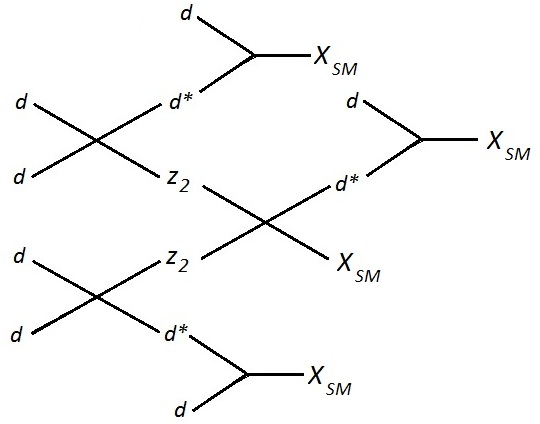}
\caption{The $7d$-burning processes, with $4d$ process bottleneck}
\label{Int2}
\end{figure}

\chapter{Summary}\label{chapter8}

In this thesis we reviewed the Brout-Englert-Higgs mechanism of spontaneously symmetry breaking. This mechanism requires the existence of a scalar boson, the so called Higgs boson. However, there is no reason to restrict ourselves to only one scalar doublet, and there are many motivations for introducing more than one doublet. We briefly studied the two-Higgs-doublet model in a particular case.

One of the interesting topics in multi-Higgs-doublet models, is the symmetries that one can impose on the scalar potential. It is very helpful to study these symmetries in the "orbit space", which was presented here in the general NHDM. We studied the particular case of three-Higgs-doublet model orbit space in detail. 

From the analysis of the orbit space we found certain symmetries that are always broken after EWSB in multi-Higgs-doublet models with more than two doublets. We named them frustrated symmetries because of their resemblance to the phenomenon of frustration in condensed matter physics. These symmetries are not specific to doublets. They can arise when the representation of the electroweak group has lower dimensionality than the horizontal (Higgs family) space, i.e. more than one singlet, more than two doublets, more than three triplets, etc. 

Several examples of frustrated symmetries in the three-Higgs-doublet model were given; $SU(3)$, octahedral, tetrahedral, and $Z_3 \times Z_3$ symmetry. The case of octahedral symmetry was studied further. This symmetry seems to be the largest realizable discrete symmetry that can be imposed on the scalar sector in 3HDM. Another interesting feature of this case is that it is 2HDM-like, meaning that the mass spectrum of the scalar potential with Octahedral symmetry is similar to the one of 2HDM, with a degeneracy in the mass of the charged scalar boson, and three different masses for the neutral scalar bosons. It would be interesting to see what experimental observables could distinguish this model from the actual 2HDM. 

In this thesis we also made a step towards classification of possible symmetries of the scalar sector of the NHDM. Namely, we studied which Abelian groups can be realized as symmetry groups of the NHDM potential. We proved that they can be either subgroups of the maximal torus, or certain finite Abelian groups which are not subgroups of maximal tori. For the subgroups of the maximal torus, we developed an algorithmic strategy that gives the full list of possible realizable Abelian symmetries for any given $N$. We illustrated how the strategy works with two small-$N$ examples. We also proved several statements concerning the order and structure of finite groups which apply for arbitrary $N$. Finally, we conjectured that, in the NHDM, any finite Abelian group with order $\le 2^{N-1}$ is realizable.

We extended the strategy of finding Abelian unitary symmetries, to include Abelian antiunitary symmetries (with generalized $CP$ transformations) in NHDM. We introduced a strategy that could be used for any NHDM, and applied it to the case of 3HDM, where the full list of realizable Abelian antiunitary symmetries for 3HDM was found to be $\Z_2^*,\quad \Z_2\times \Z_2^*, \quad \Z_2\times \Z_2\times \Z_2^*,  \quad \Z_4^*$.

An obvious direction of future research is to understand phenomenological consequences of the symmetries found. This includes, in particular, a study of how the symmetries of the potential are broken, and what is the effect of these symmetries on the physical Higgs boson spectrum.

Having found that $Z_p$ symmetries are realizable in multi-Higgs-doublet models, we showed that these models can naturally accommodate
scalar dark matter candidates protected by the group $\Z_p$. These models do not require any significant fine-tuning and can lead to a variety of forms of microscopic dynamics among the dark matter candidates (allowing or forbidding semi-annihilation, offering different routes to multi-particle annihilation, etc.). 

In particular, we gave explicit examples of dark sectors where the bottleneck process leading to depletion of asymmetric dark matter can be a 2-particle, 3-particle or 4-particle semi-annihilation. We stress that these models do not require any serious fine-tuning. We only ask for the presence of terms invariant under the chosen symmetry group but do not constrain coefficients in front of these terms.

In certain aspects these models resemble the Inert Doublet Model \cite{inert}, but in the other they rely on symmetry patterns that arise only with several doublets. In this respect, such models can be viewed as "multi-inert" doublet models although this name of course does not completely specify the microscopic dynamics. Exploring the observational consequences of each sort of microscopic dynamics is a separate study.


\appendix

\chapter{Definitions}

\section*{Basic definitions}

\begin{itemize}

\item  ${\bf Group}$.
A group G is a set with an operation that combines any two elements a and b to form another element. To qualify as a group, G must satisfy four requirements known as the group axioms:
\\Closure: For all a and b in G, the result of the operation, a 
 b, is also in G.
\\Associativity: For all a, b and c in G, $(a • b) • c = a • (b • c)$.
\\Identity element: There exists an element e in G, such that for every element a in G, $e • a = a • e = a$. 
\\Inverse element: For each a in G, there exists an element $b=a^{-1}$ in G such that $a • b = b • a = e$.

\item  ${\bf Group \quad order}$.
The order of a group G, denoted $|G|$, is the number of its elements.

\item  ${\bf Subgroup}$.
A subset H of G is called a subgroup of G if H also forms a group under the group operation.

\item  ${\bf Proper \quad subgroup}$.
If A is a subgroup of B, but not equal to B, meaning there exists at least one element of B not contained in A, then A is also a proper (or strict) subgroup of B; $A\subsetneq B$.

\item  ${\bf Maximal \quad subgroup}$.
A maximal subgroup H of a group G is a proper subgroup, such that no other proper subgroup K contains H.
A subgroup of a group is termed  maximal Abelian subgroup if it is Abelian and is not properly contained in a bigger Abelian subgroup. 

\item ${\bf Abelian \quad group}$.
An Abelian group, also called a commutative group, is a group in which the result of applying the group operation on two group elements does not depend on their order. Example: the groups of even integers $E=\{0,\pm 2,\pm 4,\pm 6, \cdots \}$ is Abelian under addition operation.

\item  ${\bf Cyclic \quad group}$.
A cyclic group is a group that can be generated by a single element g (the group generator), by applying the group operation on g as many times as needed. A cyclic group of finite order n is usually denoted as $Z_n$, and its generator g satisfies $g^n=I$, where I is the identity element. Cyclic groups are Abelian.

\item ${\bf Unitary \quad group}$
A unitary group of order N, $U(N)$, is the group of $N \times N$ unitary matrices.

\item ${\bf Special \quad unitary \quad group}$
The special unitary group of order N, $SU(N)$, is the group of $N \times N$ unitary matrices with determinant 1.

\item ${\bf Unitary \quad transformation}$
A unitary transformation is a transformation that respects the inner product of two vectors.



\section*{Action}

\item  ${\bf Group \quad action}$.
A group action is a way of describing symmetries of objects using groups. 
A group G (with operation $\cdot$) is said to act on a set X when the operator $\circ : G \times X \to X$ satisfies the following conditions for all elements $x \in X$:
\\Associativity: $(g_1 \cdot g_2) \circ x = g_1 \circ (g_2 \circ x), \forall g_1, g_2 \in G$
\\Identity: $e \circ x = x$
The action of a group on a group is a straightforward generalization of the action of a group on a set.
A group can also act on itself. One of the examples of such action is conjugation.


\item  ${\bf Conjugation}$.
In a group G, two elements a and b are called conjugate if there exists an element g in G with $gag^{-1}= b$. Conjugation preserves the group operation, therefore it's a homomorphism.

\item  ${\bf Conjugacy \quad class}$.
Conjugation partitions the group into conjugacy classes with similar properties. If a is an element of the group G, the conjugacy class of a has the form: $Cl(a) = { gag^{−1} \mid g \in G }$. Every element of the group G belongs to precisely one conjugacy class, and the classes $Cl(a)$ and $Cl(b)$ are equal if and only if a and b are conjugate. 

\item  ${\bf Conjugate \quad subgroup}$.
If H is a subgroup of G, with $h \in H$ and g a fixed element of G which is not a member of H, then the transformation $ghg^{-1}$ generates the subgroup $gHg^{-1}$ which is conjugate to the subgroup H.

\item  ${\bf Homomorphism}$.
A group homomorphism is a map $f \mid G \rightarrow H$ between two groups such that the group operation is preserved $f(g_1 \cdot g_2)=f(g_1) \circ f(g_2)$ for all $g_1,g_2$ in G, where the product on the left-hand side is the operation in G and on the right-hand side in H. A group homomorphism maps the identity element in G to the identity element in H $f(e_G)=e_H$. 

\item  ${\bf Isomorphism}$.
Isomorphism is a homomorphism between two groups with a one to one correspondence between group elements.

\item  ${\bf Automorphism}$.
A group automorphism $f$ is an isomorphism from a group $G$ to itself  $f \mid G \rightarrow G$. $f$ satisfies the following conditions: $f(gh) = f(g)f(h) \mid  g_1,g_2 \in G$.

\item  ${\bf Commutator}$.
The commutator of elements $g_1$ and $g_2$ of a group G is $[g_1,g_2] = {g_1}^{-1} {g_2}^{-1} g_1 g_2$. The commutator $[g_1,g_2]$ is equal to the identity element ${e}$ if and only if $g_1 g_2 = g_2 g_1$, that is, if and only if $g_1$ and $g_2$ commute. 

\section*{Structure}

\item  ${\bf Center \quad of \quad the \quad group}$.
The center of a group G, denoted $Z(G)$, is the set of elements that commute with every element of G: $Z(G) = \{z \in G \mid \forall g\in G, zg = gz \}$. The center is a subgroup of G, which by definition is Abelian.

\item  ${\bf Normal \quad subgroup}$.
N is called a normal subgroup of G if for each $n \in N$ and $g \in G$, the element $gng^{-1}$ is still in N : $ N \triangleleft G\,\,\Leftrightarrow\,\forall\,n\in{N},\forall\,g\in{G}\ , gng^{-1}\in{N}$. 

\item  ${\bf Coset}$.
In a group G, with subgroup H, and g an element of G, one defines $ gH = \{gh \mid h \in H \}$as a left coset of H in G, and $Hg = \{hg \mid h \in H \}$ as a right coset of H in G. 

\item  ${\bf Factor \quad (quotient) \quad group}$.
For a group G and its normal subgroup N, the quotient group of N in G, $ G/N$, is the set of cosets of N in G; $G/N = { aN \mid a \in G }$. The group operation on $G/N$ is the product of subsets, meaning for each $aN$ and $bN$ in $G/N$, the product of $aN$ and $bN$ is $(aN)(bN)$. This operation is closed, because $(aN)(bN)$ is a (left) coset itself: $(aN)(bN) = a(Nb)N = a(bN)N = (ab)NN = (ab)N$. 

\item  ${\bf Generating \quad set}$.
A generating set of a group G is a subset S of G, denoted $ \langle S \rangle$, such that every element of the group can be expressed as the combination (under the group operation) of elements of the subset and their inverses.

\section*{Construction}

\item  ${\bf Direct \quad product}$.
The direct product of two groups G and H, denoted $G \times H$, is defined as follows: The elements of $G \times H$ are ordered pairs $(g, h)$, where $g \in G$ and $h \in H$. That is, the set of elements of $G \times H$ is the Cartesian product of the sets G and H. The operation on $G \times H$ is defined: $(g_1, h_1)\cdot(g_2, h_2)=(g_1 \cdot g_2, h_1 \cdot h_2)$. Both groups G and H are normal subgroups of the group $G \times H$.

\item  ${\bf Semidirect \quad product}$.
A semidirect product is a generalization of the notion of direct product. If N is a normal subgroup of G and H a subgroup of G, we say that G is a semidirect product of N and H, $G = N \rtimes H$ when $G = NH$ and $N \cap H = {e} $. 

Let $Aut(N)$ be the group of all automorphisms of N, and $f$ be a group homomorphism where $f \mid H \rightarrow Aut(N)$ defined by $f(h_n)= hnh^{−1}, \forall h \in H,\forall n \in N$. Then the operation in $G = N \rtimes H$ is determined by the homomorphism $f$, and is as follows; $$ * : (N\rtimes H)\times(N\rtimes H) \rightarrow N\rtimes H$$ defined by $(n_1, h_1)*(n_2, h_2) = (n_1 f({h_1}_{n_2}), h_1h_2)$ for $n_1, n_2 \in N$ and $h_1, h_2 \in H$.

\item  ${\bf Torus}$.
In geometry, a torus is a surface generated by rotating a circle in three dimensional space about an axis coplanar with the circle. In topology, a torus is homomorphic to the Cartesian product of two circles: $S_1 \times S_1$. An n-dimensional torus is the product of n circles. That is: $T^n = \underbrace{S^1 \times S^1 \times \cdots \times S^1}_n$. 
\\In group theory, a torus in a Lie group G is an Abelian Lie subgroup of G. A maximal torus is one which is maximal among such subgroups. Every torus is contained in a maximal torus. For example, the unitary group $U(N)$ contains a maximal torus, the subgroup of all diagonal matrices, of the form: $T = \left\{\mathrm{diag}(e^{i\theta_1},e^{i\theta_2},\dots,e^{i\theta_n}) : \theta_j \in \mathbb R\right\}$. 

\end{itemize}

\chapter{Tetrahedral and octahedral symmetric potentials}

\section*{The potential with tetrahedral symmetry}

The "tetrahedral" potential used in Section \ref{3HDM-examples} as an example of a frustrated symmetry is  defined by the following scalar potential:
\bea
V & = &  - M_0 r_0 + \Lambda_0 r_0^2 + \Lambda_1(r_1^2+r_4^2+r_6^2) + \Lambda_2(r_2^2+r_5^2+r_7^2) 
+ \Lambda_3(r_3^2+r_8^2) \nonumber \\  \nonumber
&& + \Lambda_4(r_1r_2 - r_4r_5 + r_6r_7) \\  \nonumber
\eea
In terms of doublets, this potential has the form:
\bea
V&=& - {M_0 \over \sqrt{3}} (\fd_1 \f_1 + \fd_2 \f_2 + \fd_3 \f_3) \nonumber \\ \nonumber
&&+ {\Lambda_0 \over 3} \left[(\fd_1 \f_1) + (\fd_2 \f_2) + (\fd_3 \f_3)\right]^2 \\  \nonumber
&&+ {\Lambda_1 \over 3} \left[(\fd_1 \f_1)^2 + (\fd_2 \f_2)^2 + (\fd_3 \f_3)^2-(\fd_1 \f_1)(\fd_2 \f_2)-(\fd_2 \f_2)(\fd_3 \f_3)-(\fd_3 \f_3)(\fd_1 \f_1)\right]\\  \nonumber
&&+ \Lambda_2 \left[(\Re \fd_1\f_2)^2 + (\Re \fd_2\f_3)^2 + (\Re \fd_3\f_1)^2\right]  \\ \nonumber
&&+ \Lambda_3 \left[(\Im \fd_1\f_2)^2 + (\Im \fd_2\f_3)^2 + (\Im \fd_3\f_1)^2\right] \\  \nonumber
&&+ \Lambda_4 \left[(\Re \fd_1\f_2) (\Im \fd_1\f_2) + (\Re \fd_2\f_3) (\Im \fd_2\f_3) + (\Re \fd_3\f_1) (\Im \fd_3\f_1)\right]\\ \nonumber
\eea
This potential is symmetric under the following transformations: independent sign flips of the doublets, the cyclic permutation of the three doublets, and their compositions, which we show as;

\be
\begin{tabular}{c | ccc}
& $\phi_1$ & $\phi_2$ & $\phi_3$\\
\hline
$M$ & $-$ &  & \\
$N$ &  & $-$ &  \\
$O$ &  &  & $-$  \\
$P$ & $\phi^{\dagger}_2$ & $\phi^{\dagger}_1$ & $CP$ \\
$Q$ & $\phi^{\dagger}_3$ & $CP$ & $\phi^{\dagger}_1$ \\
$R$ & $CP$ & $\phi^{\dagger}_3$ &  $\phi^{\dagger}_2$ \\
$S$ & $\phi_3$ & $\phi_1$ & $\phi_2$ \\
\end{tabular} \nonumber
\ee
Therefore the symmetry group of the potential in terms of these transformation has 24 elements:
\bea
S &=& \lbrace e, M, N, O, P, Q, R, S, S^2, PM, QM, RM, SM, S^2M, \nonumber \\
&& PN, QN, RN, SN, S^2N, PO, QO, RO, SO, S^2O  \rbrace \nonumber
\eea

The cycle graph of such group is shown in Figure (\ref{A4-cyclic-graph}), which is the cycle graph of a tetrahedron. 

\begin{figure} [ht]
\centering
\includegraphics[height=6cm]{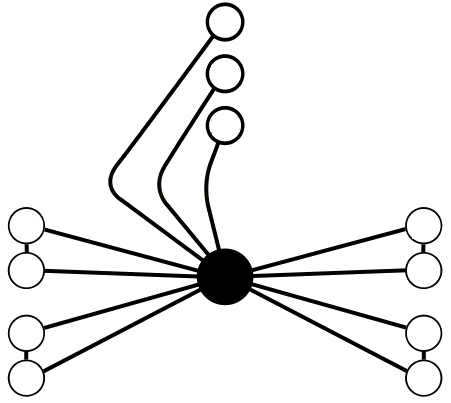}
\caption{Cycle graph of the group $A_4$, the symmetry group of a tetrahedron.}
\label{A4-cyclic-graph}
\end{figure}

\section*{The potential with octahedral symmetry}

The "octahedral" potential used in Section \ref{3HDM-examples} as an example of a frustrated symmetry is  defined by the following scalar potential:
\bea
V&=&-M_{0}r_{0}+\Lambda_{0}r_{0}^2+\Lambda_{1}(r_{1}^2+r_{4}^2+r_{6}^2)+\Lambda_{2}(r_{2}^2+r_{5}^2+r_{7}^2)+\Lambda_{3}(r_{3}^2+r_{8}^2) \nonumber \\ \nonumber
\eea
Written in terms of doublets, the potential has the following form: 
\bea
V&=&-\frac{M_0}{\sqrt{3}}\left(\phi_1^{\dagger}\phi_1+\phi_2^{\dagger}\phi_2+\phi_3^{\dagger}\phi_3\right)+\frac{\Lambda_0}{3}\left(\phi_1^{\dagger}\phi_1+\phi_2^{\dagger}\phi_2+\phi_3^{\dagger}\phi_3\right)^2 \nonumber \\ \nonumber
&&+\Lambda_1\left[(\Re\phi_1^{\dagger}\phi_2)^2+(\Re\phi_2^{\dagger}\phi_3)^2+(\Re\phi_3^{\dagger}\phi_1)^2\right]\\ \nonumber
&&+\Lambda_2\left[(\Im\phi_1^{\dagger}\phi_2)^2+(\Im\phi_2^{\dagger}\phi_3)^2+(\Im\phi_3^{\dagger}\phi_1)^2\right]\\ \nonumber
&&+\frac{\Lambda_3}{3}\left[(\phi_1^{\dagger}\phi_1)^2+(\phi_2^{\dagger}\phi_2)^2+(\phi_3^{\dagger}\phi_3)^2-(\phi_1^{\dagger}\phi_1)(\phi_2^{\dagger}\phi_2)-(\phi_2^{\dagger}\phi_2)(\phi_3^{\dagger}\phi_3)-(\phi_3^{\dagger}\phi_3)(\phi_1^{\dagger}\phi_1)\right] \\ \nonumber
\eea
This potential is symmetric under the following transformations: independent sign flips of the doublets, the cyclic permutation of the three doublets, the $CP$-transformation and their compositions, which we show as;
\be
\begin{tabular}{c | ccc}
& $\phi_1$ & $\phi_2$ & $\phi_3$\\
\hline
$M$ & $-$ &  & \\
$N$ &  & $-$ &  \\
$O$ &  &  & $-$  \\
$P$ & $\phi_2$ & $\phi_1$ & \\
$Q$ & $\phi_3$ &  & $\phi_1$ \\
$R$ &  & $\phi_3$ &  $\phi_2$ \\
$S$ & $\phi_3$ & $\phi_1$ & $\phi_2$ \\
$T$ & $CP$ & $CP$ & $CP$  \\
\end{tabular} \nonumber
\ee
Therefore the symmetry group of the potential in terms of these transformation has 48 elements:
\bea
S = && \lbrace e, M, N, O, P, Q, R, S, S^2, T, MT, NT, OT, PT, QT, RT, ST, S^2T, \nonumber \\  
&& PM, QM, RM, SM, S^2M, PN, QN, RN, SN, S^2N, PO, QO, RO, SO, S^2O,  \nonumber \\  
&& PMT, QMT, RMT, SMT, S^2MT, PNT, QNT, RNT, SNT, S^2NT, \nonumber \\
&& POT, QOT, ROT, SOT, S^2OT   \rbrace \nonumber 
\eea

The cycle graph of such group is shown in Figure (\ref{S4-cyclic-graph}), which is the cycle graph of an octahedron.

\begin{figure} [ht]
\centering
\includegraphics[height=8.5cm]{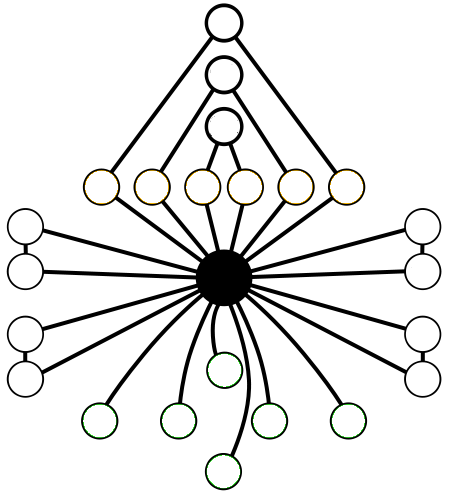}
\caption{Cycle graph of the group $S_4$, the symmetry group of an octahedron.}
\label{S4-cyclic-graph}
\end{figure}

\chapter{Detailed study of the symmetric minimum $(v,v,v)$}\label{detailed}

A remarkable phenomenological feature of a 3HDM with Octahedral symmetry is that it is 2HDM-like. After symmetry breaking, due to remaining symmetry, this model exhibits certain degeneracy in the mass spectrum of the physical Higgs bosons, which precisely mimics the typical Higgs spectrum of 2HDM. It would be interesting to see what phenomenological consequences distinguish these two models. Here we pick the symmetric the minimum point, $(v,v,v)$, and study the potential in more detail.

\section*{Mass matrix eigenstates}

We study the potential at the vicinity if this minimum:
$$\phi_a=\left(
\begin{array}{c}
         w_a^{+}  \\
         \frac{v+h_a+\eta_a}{\sqrt{2}}
\end{array}
\right)$$ 
with $$v^2=\frac{M_0}{\sqrt{3}(\Lambda_0+\Lambda_1)} $$ the mass eigenstates are:
\bea
n_1^+=\frac{1}{\sqrt{2}}(w_2^+-w_3^+) \quad &;& \quad m^2_{n_1}=-\frac{3}{2}\Lambda_1v^2  \nonumber \\  
n_2^+=\frac{1}{\sqrt{3}}(w_1^++w_2^++w_3^+) \quad &;& \quad m^2_{n_2}=0  \nonumber \\ 
n_3^+=\frac{1}{\sqrt{6}}(-2w_1^++w_2^++w_3^+)\quad &;& \quad  m^2_{n_3}=-\frac{3}{2}\Lambda_1v^2 \nonumber \\ 
p_1=\frac{1}{\sqrt{2}}(h_2-h_3) \quad &;& \quad  m^2_{p_1}=\frac{-\Lambda_1+\Lambda_3}{2}v^2  \nonumber \\ 
 p_2=\frac{1}{\sqrt{3}}(h_1+h_2+h_3) \quad &;& \quad  m^2_{p_2}=(\Lambda_0+\Lambda_1)v^2 \nonumber \\ 
p_3=\frac{1}{\sqrt{6}}(-2h_1+h_2+h_3)\quad &;& \quad  m^2_{p_3}=\frac{-\Lambda_1+\Lambda_3}{2}v^2 \nonumber \\ 
q_1=\frac{1}{\sqrt{2}}(\eta_2-\eta_3) \quad &;& \quad  m^2_{nq_1}=\frac{3}{4}(-\Lambda_1+\Lambda_2)v^2 \nonumber \\ 
q_2=\frac{1}{\sqrt{3}}(\eta_1+\eta_2+\eta_3) \quad &;& \quad  m^2_{q_2}=0 \nonumber \\ 
q_3=\frac{1}{\sqrt{6}}(-2\eta_1+\eta_2+\eta_3) \quad &;& \quad  m^2_{q_3 }=\frac{3}{4}(-\Lambda_1+\Lambda_2)v^2 \nonumber 
\eea

The three Nambu-Goldstone bosons that give masses to the $W^+$, $W^-$ and $Z$ are $n_2^+$, $n_2^-$ and $q_2$.

\section*{Decay rates}

From the third order terms in the potential we have the decays of the Higgs bosons:
\begin{itemize}
\item $p_2 \rightarrow n_1^+ n_1^- \quad$ and  $\quad p_2 \rightarrow n_3^+ n_3^-\quad$  with vertex coefficient $\frac{\sqrt{3}}{3}(2\Lambda_0-\Lambda_{1})v$
\item $p_2 \rightarrow p_1 p_1 \quad $  with vertex coefficient $\quad \frac{\sqrt{3}}{3}(\Lambda_0+\Lambda_{3})v$
\item $p_2 \rightarrow p_3 p_3 \quad$  with vertex coefficient $\quad \frac{\sqrt{3}}{6}(\Lambda_1+2\Lambda_{3})v$
\item $p_2 \rightarrow q_1 q_1 \quad$ and  $\quad p_2 \rightarrow q_3 q_3 \quad $  with vertex coefficient $ \quad \frac{\sqrt{3}}{2}(\Lambda_0-\frac{\Lambda_{1}}{2})v+\frac{\sqrt{3}}{2}\Lambda_2v$
\item $p_3 \rightarrow n_1^+ n_1^- \quad $ with vertex coefficient $\quad \frac{\sqrt{6}}{6}(\Lambda_1-\Lambda_{3})v$
\item $p_3 \rightarrow n_3^+ n_3^- \quad $ with vertex coefficient $ \quad -\frac{\sqrt{6}}{6}(\Lambda_1-\Lambda_{3})v$
\item $p_3 \rightarrow q_1 q_1 \quad $ and  $\quad p_3 \rightarrow q_3 q_3 \quad$  with vertex coefficient $\quad -\frac{\sqrt{6}}{12}(-\Lambda_1+\Lambda_{3})v$
\item $p_1 \rightarrow n_1^+ n_3^-\quad $ and  $\quad p_1 \rightarrow n_1^- n_3^+ \quad $  with vertex coefficient $ \quad \frac{\sqrt{6}}{6}(\Lambda_3-\Lambda_{1})v$
\item $p_1 \rightarrow q_1 q_3 \quad $  with vertex coefficient $ \quad \frac{\sqrt{6}}{6}(\Lambda_3-\Lambda_{1})v$
\end{itemize}

One could easily calculate that the particles with the same mass $p_1, p_3$ and $q_1,q_3$, have the same decay rate. Following the straightforward method of writing the kinetic term, to calculate the couplings to the gauge bosons results in:

\begin{itemize}
\item $p_2 \rightarrow W^+ W^-\quad$ with vertex coefficient $\quad \frac{\sqrt{3}}{2}g_2^2v$ 
\item $p_2 \rightarrow Z Z \quad $ with vertex coefficient $ \quad\frac{\sqrt{3}}{4}(g_1^2+g_2^2)v$ 
\end{itemize}

Let's assume $\Lambda_0 >> \Lambda_2 > \Lambda_3 > |\Lambda_1| >0$. With this choice the mass spectrum will have the form $m_{p_2} >> m_{q_1,3}> m_{p_1,3} > m_{n_1,3}> m_{n_2}, m_{q_2} = 0$. Therefore the decays $p_3 \rightarrow q_{1,3} q_{1,3}$ from the list of interactions above will be forbidden. It appears that we would have stable particles in this case, $q_{1,3} $ and $n_{1,3} $, which could be dark matter candidate.



\begin{thebibliography}{99}
\bibitem{Glashow}
S.~L.~Glashow,
  Nucl.\ Phys.\  {\bf 22}, 579 (1961).
\bibitem{Weinberg}
S.~Weinberg,
  Phys.\ Rev.\ Lett.\  {\bf 19}, 1264 (1967).
\bibitem{Salam}
A.~Salam,
  Conf.\ Proc.\ C {\bf 680519}, 367 (1968).
\bibitem{Englert}
F.~Englert and R.~Brout,
  Phys.\ Rev.\ Lett.\  {\bf 13}, 321 (1964).
\bibitem{Higgs}
P.~W.~Higgs,
  Phys.\ Lett.\  {\bf 12}, 132 (1964).
\bibitem{Guralnik}
G.~S.~Guralnik, C.~R.~Hagen and T.~W.~B.~Kibble,
  Phys.\ Rev.\ Lett.\  {\bf 13}, 585 (1964).
\bibitem{Djouadi}
A.~Djouadi,
  Phys.\ Rept.\  {\bf 457}, 1 (2008)
  [hep-ph/0503172].
\bibitem{Nambu}
Y.~Nambu,
  Phys.\ Rev.\  {\bf 117}, 648 (1960).
\bibitem{Goldstone}
J.~Goldstone,
  Nuovo Cim.\  {\bf 19}, 154 (1961).
\bibitem{Gaume-Ellis}
L.~Alvarez-Gaumé and J.~Ellis, Nature Physics 7, 2–3 (2011).
\bibitem{Lee}
T.~D.~Lee, Phys. Rev. D 8,1226 (1973).
\bibitem{Haber-Kane}
H.~E.~Haber and G.~L.~Kane, Phys. Rept. 117 (1985) 75.
\bibitem{Sher}
M. Sher and Y. Yuan, Phys. Rev. D44 (1991) 1461.
\bibitem{Cheng}
T. P. Cheng and M. Sher, Phys. Rev. D35 (1987) 3484.
\bibitem{Fukuda}
Y. Fukuda et. al., Phys. Rev. Lett. 81 (1998) 1562.
\bibitem{Trodden}
M.~Trodden,
  hep-ph/9805252.
\bibitem{Ginzburg-Inert}
I.~F.~Ginzburg, K.~A.~Kanishev, M.~Krawczyk and D.~Sokolowska,
  Phys.\ Rev.\ D {\bf 82}, 123533 (2010)
  [arXiv:1009.4593 [hep-ph]].
\bibitem{Hunter}
J.F.~Gunion, H.E.~Haber, G.~Kane, S.~Dawson, {\em The Higgs
Hunter's Guide} (Addison-Wesley, Reading, 1990).
\bibitem{CPNSh}
E.~Accomando {\it et al.}, ``Workshop on CP studies and non-standard Higgs physics,'' arXiv:hep-ph/0608079.
\bibitem{CP} F.~J.~Botella and J.~P.~Silva,
  Phys.\ Rev.\  D {\bf 51}, 3870 (1995);
G.~C.~Branco, L.~Lavoura and J~.P.~Silva, ``CP-violation'',
Oxford University Press, Oxford, England (1999).
\bibitem{Ginzburg:2004vp} 
  I.~F.~Ginzburg and M.~Krawczyk,
  Phys.\ Rev.\ D {\bf 72}, 115013 (2005)
  [hep-ph/0408011].
\bibitem{haber}
S.~Davidson and H.~E.~Haber,  Phys.\ Rev.\ D {\bf 72}, 035004 (2005)
  [Erratum-ibid.\ D {\bf 72}, 099902 (2005)];
  H.~E.~Haber and D.~O'Neil, Phys.\ Rev.\ D {\bf 74}, 015018 (2006)
  [Erratum-ibid.\ D {\bf 74}, 059905 (2006)].
\bibitem{haber2}
J.~F.~Gunion and H.~E.~Haber,  Phys.\ Rev.\ D {\bf 72}, 095002 (2005).
\bibitem{oneil} 
D.~O'Neil,
  ``Phenomenology of the Basis-Independent CP-Violating Two-Higgs Doublet Model'', 
  Ph.D. thesis, University of California Santa Cruz (2009),
  arXiv:0908.1363 [hep-ph].
\bibitem{sartori}
G.~Sartori and G.~Valente,
  hep-ph/0304026.
\bibitem{nagel}
F. Nagel, ``New aspects of gauge-boson couplings and the Higgs sector'', Ph.D. thesis, University Heidelberg
(2004), [http://www.ub.uni-heidelberg.de/archiv/4803].
\bibitem{heidelberg} 
M.~Maniatis, A.~von Manteuffel, O.~Nachtmann and F.~Nagel,  Eur.\ Phys.\ J.\  C {\bf 48}, 805 (2006);
M.~Maniatis, A.~von Manteuffel and O.~Nachtmann, Eur.\ Phys.\ J.\  C {\bf 57}, 719 (2008).
\bibitem{nishi2006}
C.~C.~Nishi, Phys.\ Rev.\ D {\bf 74}, 036003 (2006)
[Erratum-ibid.D {\bf 76}, 119901 (2007)].
\bibitem{ivanov0}
 I.~P.~Ivanov,  Phys.\ Lett.\ B {\bf 632}, 360 (2006).
\bibitem{ivanov1} 
I.~P.~Ivanov, Phys.\ Rev.\ D {\bf 75}, 035001 (2007) [Erratum-ibid.  D {\bf 76}, 039902 (2007)].
\bibitem{ivanov2} 
I.~P.~Ivanov, Phys.\ Rev.\ D {\bf 77}, 015017 (2008).
\bibitem{nishi2008}
C.~C.~Nishi, Phys.\ Rev.\ D {\bf 77}, 055009 (2008).
\bibitem{weinberg1976}
S.~Weinberg,
  Phys.\ Rev.\ Lett.\  {\bf 37}, 657 (1976).
\bibitem{Branco1980}
G.~C.~Branco,
  Phys.\ Rev.\ Lett.\  {\bf 44}, 504 (1980).
\bibitem{DeshpandeHe1994}
N.~G.~Deshpande and X.~G.~He,
  Pramana {\bf 45}, S73 (1995)
  [arXiv:hep-ph/9409234].
\bibitem{Erdem1995}
R.~Erdem,
  Phys.\ Lett.\  B {\bf 355}, 222 (1995).
\bibitem{barroso2006}
A.~Barroso, P.~M.~Ferreira, R.~Santos and J.~P.~Silva,
  Phys.\ Rev.\  D {\bf 74}, 085016 (2006).
\bibitem{nishi:nhdm} 
C.~C.~Nishi,
  Phys.\ Rev.\  D {\bf 76} (2007) 055013.
\bibitem{ferreira-nhdm}
P.~M.~Ferreira and J.~P.~Silva,
  Phys.\ Rev.\  D {\bf 78}, 116007 (2008).
\bibitem{zarrinkamar}
S.~Zarrinkamar, H.~Hassanabadi, A.A.~Rajabi, Few-Body\ Syst. (2010).
\bibitem{lavoura1994}
L.~Lavoura and J.~P.~Silva,
  Phys.\ Rev.\  D {\bf 50}, 4619 (1994).
\bibitem{ecker-grimus}
  G.~Ecker, W.~Grimus and W.~Konetschny,
  Nucl.\ Phys.\  B {\bf 191} (1981) 465;
  G.~Ecker, W.~Grimus and H.~Neufeld,
  Nucl.\ Phys.\  B {\bf 247} (1984) 70.
\bibitem{HaberSilva}
P.~M.~Ferreira, H.~E.~Haber and J.~P.~Silva,
Phys.\ Rev.\  D {\bf 79} (2009) 116004
[arXiv:0902.1537 [hep-ph]].
\bibitem{NHDM2010}
I.~P.~Ivanov, C.~C.~Nishi, Phys.\ Rev.\ D {\bf 82}, 015014 (2010);
I.~P.~Ivanov, JHEP {\bf 1007}, 020 (2010).
  \bibitem{sartori-old}
M.~Abud and G.~Sartori,
  Phys.\ Lett.\ B {\bf 104}, 147 (1981).
\bibitem{frustrated-paper}
I.~P.~Ivanov and V.~Keus,
  Phys.\ Lett.\ B {\bf 695}, 459 (2011)
  [arXiv:1007.5305 [hep-ph]].
\bibitem{A4-potential}
R.~de Adelhart Toorop, F.~Bazzocchi, L.~Merlo and A.~Paris,
  JHEP {\bf 1103}, 035 (2011)
  [arXiv:1012.1791 [hep-ph]].
\bibitem{Ivanov2}
I.~P.~Ivanov, JHEP {\bf 1007}, 020 (2010).     
\bibitem{Abelian2012}
I.~P.~Ivanov, V.~Keus and E.~Vdovin,
  J.\ Phys.\ A  {\bf 45}, 215201 (2012)
  [arXiv:1112.1660 [math-ph]].
\bibitem{Ferreira}
P.~M.~Ferreira and J.~P.~Silva,
  Phys.\ Rev.\  D {\bf 78}, 116007 (2008).
\bibitem{Freeman}
K.~Freeman and G.~McNamara, (2006), "In Search of Dark Matter", Springer, ISBN 978-0-387-27616-8.
\bibitem{Zwicky1}
F.~Zwicky,
  Helv.\ Phys.\ Acta {\bf 6}, 110 (1933).
\bibitem{Zwicky2}
F.~Zwicky,
  Astrophys.\ J.\  {\bf 86}, 217 (1937).
\bibitem{Hambye}  
T.~Hambye,
  PoS IDM {\bf 2010}, 098 (2011)
  [arXiv:1012.4587 [hep-ph]].
\bibitem{DM} 
  G.~Bertone, D.~Hooper and J.~Silk,
  Phys.\ Rept.\  {\bf 405}, 279 (2005).
\bibitem{inert}
N.~G.~Deshpande and E.~Ma,
  Phys.\ Rev.\ D {\bf 18}, 2574 (1978);
L.~Lopez Honorez, E.~Nezri, J.~F.~Oliver and M.~H.~G.~Tytgat,
  JCAP {\bf 0702}, 028 (2007).
\bibitem{singlet}
  C.~P.~Burgess, M.~Pospelov and T.~ter Veldhuis,
  Nucl.\ Phys.\ B {\bf 619}, 709 (2001).
\bibitem{MaZ3}
E.~Ma,
  Phys.\ Lett.\ B {\bf 662}, 49 (2008).
\bibitem{ZNdetailed}
B.~Batell,
  Phys.\ Rev.\ D {\bf 83}, 035006 (2011).
\bibitem{Z2Z2}
G.~Belanger and J.~-C.~Park,
  arXiv:1112.4491 [hep-ph].
\bibitem{nonAbelian}
A.~Adulpravitchai, B.~Batell and J.~Pradler,
  Phys.\ Lett.\ B {\bf 700}, 207 (2011).
\bibitem{semi-annihilation} 
  F.~D'Eramo and J.~Thaler,
  JHEP {\bf 1006}, 109 (2010).
\bibitem{semi-annihilation-evolution} 
G.~Belanger, K.~Kannike, A.~Pukhov and M.~Raidal,
  arXiv:1202.2962 [hep-ph].
arXiv:1106.0034 [hep-ph].
\bibitem{Zp-scalar-dm}
I.~P.~Ivanov, V.~Keus 
  arXiv:1203.3426 [hep-ph], to appear in Phys. Rev. D
\bibitem{asymmetric}
S.~Nussinov,
  Phys.\ Lett.\ B {\bf 165}, 55 (1985);
D.~B.~Kaplan,
  Phys.\ Rev.\ Lett.\  {\bf 68}, 741 (1992);
D.~E.~Kaplan, M.~A.~Luty and K.~M.~Zurek,
  Phys.\ Rev.\ D {\bf 79}, 115016 (2009).


  









\end{thebibliography}
\end{document}